\newif\ifanon
   \newcommand\SkipToFmtEnd{}%
   \newcommand\EndFmtInput{}%
   \long\def\SkipToFmtEnd#1\EndFmtInput{}%
\newcommand\ReadOnlyOnce[1]{\@ifundefined{#1}{\@namedef{#1}{}}\SkipToFmtEnd}
\DeclareFontFamily{OT1}{cmtex}{}
\DeclareFontShape{OT1}{cmtex}{m}{n}
  {<5><6><7><8>cmtex8
   <9>cmtex9
   <10><10.95><12><14.4><17.28><20.74><24.88>cmtex10}{}
\DeclareFontShape{OT1}{cmtex}{m}{it}
  {<-> ssub * cmtt/m/it}{}
\DeclareFontShape{OT1}{cmtt}{bx}{n}
  {<5><6><7><8>cmtt8
   <9>cmbtt9
   <10><10.95><12><14.4><17.28><20.74><24.88>cmbtt10}{}
\DeclareFontShape{OT1}{cmtex}{bx}{n}
  {<-> ssub * cmtt/bx/n}{}
\newcommand{\Conid}[1]{\mathit{#1}}
\newcommand{\Varid}[1]{\mathit{#1}}
\newcommand{\anonymous}{\kern0.06em \vbox{\hrule\@width.5em}}
\newdimen\mathindent\mathindent\leftmargini}%
\def\resethooks{%
  \global\let\SaveRestoreHook\empty
  \global\let\ColumnHook\empty}
\newcommand*{\savecolumns}[1][default]%
  {\g@addto@macro\SaveRestoreHook{\savecolumns[#1]}}
\newcommand*{\restorecolumns}[1][default]%
  {\g@addto@macro\SaveRestoreHook{\restorecolumns[#1]}}
\newcommand*{\aligncolumn}[2]%
  {\g@addto@macro\ColumnHook{\column{#1}{#2}}}
\newcommand{\onelinecommentchars}{\quad-{}- }
\newcommand{\commentbeginchars}{\enskip\{-}
\newcommand{\commentendchars}{-\}\enskip}
\newcommand{\visiblecomments}{%
  \let\onelinecomment=\onelinecommentchars
  \let\commentbegin=\commentbeginchars
  \let\commentend=\commentendchars}
\newcommand{\invisiblecomments}{%
  \let\onelinecomment=\empty
  \let\commentbegin=\empty
  \let\commentend=\empty}
\newlength{\blanklineskip}
\newcommand{\hsindent}[1]{\quad}
\let\hspre\empty
\let\hspost\empty
\newcommand{\hsnewpar}[1]%
  {{\parskip=0pt\parindent=0pt\par\vskip #1\noindent}}
\newcommand{\hscodestyle}{}
\newcommand{\sethscode}[1]%
  {\expandafter\let\expandafter\hscode\csname #1\endcsname
   \expandafter\let\expandafter\endhscode\csname end#1\endcsname}
   \let\hspre\(\let\hspost\)%
   \let\hspre\(\let\hspost\)%
\newcommand{\plainhs}{\sethscode{plainhscode}}
\def\codeframewidth{\arrayrulewidth}
   \let\endoflinesave=\\
   \framedhslinecorrect\endoflinesave{.5ex}\hline
\newcommand{\framedhslinecorrect}[2]%
  {#1[#2]}
\def\column##1##2{}%
   \newcommand\>[1][]{}\newcommand\<[1][]{}\newcommand\\[1][]{}%
   \def\fromto##1##2##3{##3}%
\let\orighscode=\hscode
   \let\origendhscode=\endhscode
   \def\endhscode{\def\hscode{\endgroup\def\@currenvir{hscode}\\}\begingroup}
\def\hscode{\endgroup\def\@currenvir{hscode}}}%
   \global\let\hscode=\orighscode
   \global\let\endhscode=\origendhscode}%
\newcommand{\card}[1]{

}
\newcommand\Rizzo{\textsf{Rizzo}\xspace}
\newcommand\now{\textit{now}\xspace}
\newcommand\earlier{\textit{earlier}\xspace}
\newcommand\chans{\ensuremath{\sym{Chan}}\xspace}
\newcommand\locs{\ensuremath{\sym{Loc}}\xspace}
\newcommand\nin{\not\in}
\newcommand\cl[2][]{\ifthenelse{\equal{#1}{}}{\sym{cl}\!\left(#2\right)}{\sym{cl}_{#1}\left(#2\right)}}
\newcommand\ticked[3][\kappa]{\sym{ticked}^{#1}_{#2}\left(#3\right)}
\newcommand\ignore[1]{}
\newcommand\sym[1]{\mathsf{#1}}
\newcommand\con[1]{\mathtt{#1}}
\newcommand\pair[2]{\left(#1,#2\right)}
\newcommand\set[1]{\left\{#1\right\}}
\newcommand{\setcom}[2]{\set{#1\left\vert\vphantom{#1}\,#2\right.}}
\newcommand{\tick}{\checkmark}
\newcommand{\sep}{\mathbin{\!{/}\!}}
\newcommand\emptyheap{{\cdot}}
\newcommand{\T}[1]{\,\langle#1\rangle\,}
\newcommand\dom[1]{\sym{dom}\left(#1\right)}
\newcommand\heaple{\sqsubseteq}
\newcommand\heapge{\sqsupseteq}
\newcommand\allocate[1]{\sym{alloc}\left(#1\right)}
\newcommand{\interm}{\sym{in}}
\newcommand{\caseterm}[5]{\sym{case}\, #1 \, \sym{of}\, \interm_1\,#2 . #3 ; \interm_2\,#4 . #5}
\newcommand\unit{()}
\newcommand\fmap[1]{\sym{fmap}_{#1}}
\newcommand\lfp{\mathit{lfp}}
\newcommand\fix{\sym{fix}}
\newcommand\chan[1][A]{\sym{chan}_{#1}}
\newcommand\wait{\sym{wait}}
\newcommand\select{\sym{select}}
\newcommand\delay{\sym{delay}}
\newcommand\never{\sym{never}}
\newcommand{\sig}[3][U]{#2 \,\langle #1 \rangle\, #3}
\newcommand{\head}{\sym{head}}
\newcommand{\tail}{\sym{tail}}
\newcommand{\watch}{\sym{watch}}
\newcommand{\cons}[1][]{\sym{cons}_{#1}}
\newcommand{\appA}{\varoast}
\newcommand{\appAE}{\varogreaterthan}
\newcommand{\appIE}{\rhd}
\newcommand{\updated}{\top}
\newcommand{\unchanged}{\bot}
\newcommand\delaySig[1]{\mathit{sig}\!\left[#1\right]}
\newcommand\rec[4][]{\rec*_{#1}(#2.#3,#4)}
\newcommand\rec*{\sym{rec}}
\newcommand\Delay{{\bigcirc}}
\newcommand\DelayE{\ensuremath{\mathrlap{\hspace{0.23em}{\scalebox{0.75}{$\exists$}}}{\Delay}}}
\newcommand\DelayE*{\ensuremath{\mathrlap{\hspace{0.21em}{\scalebox{0.6}{$\exists$}}}{\Delay}}}
\newcommand\DelayA{\ensuremath{\mathrlap{\hspace{0.25em}{\scalebox{0.75}{$\forall$}}}{\Delay}}}
\newcommand\DelayA*{\ensuremath{\mathrlap{\hspace{0.19em}{\scalebox{0.6}{$\forall$}}}{\Delay}}}
\newcommand\Sig{\sym{Sig}}
\newcommand\List{\sym{List}}
\newcommand\Chan{\sym{Chan}}
\newcommand\Nat{\sym{Nat}}
\newcommand\Unit{\sym{1}}
\newcommand\nats{{\mathbb N}}
\newcommand{\heaptype}[1]{\left|#1\right|}
\newcommand{\hinterp}[1]{\mathcal{H}(#1)}
\newcommand\vinterp[3][\rho]{%
  \mathcal{V}^{#1}\llbracket #2\rrbracket(#3)
}
\newcommand\tinterp[3][\rho]{%
  \mathcal{T}^{#1}\llbracket #2\rrbracket(#3)
}
\newcommand\cinterp[2]{\mathcal{C}\llbracket #1\rrbracket(#2)}
\newcommand{\sigtype}[1]{: \Sig\,#1}
\newcommand{\chantype}[1]{: \Chan\,#1}
\newcommand{\synmap}[1]{#1^{\sym{syn}}}
\newcommand\rhosyn{\synmap\rho}
\newcommand\rhosubs{[\rho]}
\newcommand\isenv[1]{\vdash #1 : \sym{env}}
\newcommand\isev[2][\Delta]{\vdash_{#1} #2 : \sym{event}}
\newcommand\type{\sym{type}}
\newcommand\isheap[2][\Delta]{\vdash_{#1} #2 : \sym{now}}
\newcommand\isearlier[3][\Delta]{\vdash^{#2}_{#1} #3 : \sym{earlier}}
\newcommand\hastype[4][\Delta]{#2 \vdash_{#1} #3 : #4}
\newcommand\Hastype[3]{#1 \Vdash #2 : #3}
\newcommand\hastype*[1][H]{
  \def\hastypeH{#1}%
  \hastypeI
}
\newcommand\hastypeI[4][\Delta]{%
#2\vdash^{\hastypeH}_{#1} #3 : #4}
\newcommand\istype[2]{#1 \vdash #2: \type}
\newcommand\evalAdv[5][e]{ \state{#2;#3} \Searrow^{#1}
  \state{#4;#5}}
\newcommand\reacts[1]{\leadsto_{#1}}
\renewcommand\state[1]{\left\langle #1 \right\rangle}
\newcommand\heval[4]{ \state{#1;#2} \Downarrow
 \state{#3;#4}}
\newcommand\gcstep[1]{\underset{\sym{gc}}{\stackrel{#1}{\Longrightarrow}}}
\newcommand\gcstep*{\underset{\sym{gc}}{\Longrightarrow}}
\newcommand\initstep{\fullstep{\sym{init}}}
\newcommand\fullstep[1]{\stackrel{#1}{\Longrightarrow}}
\newcommand\partstep[1]{\stackrel{#1}{\Longmapsto}}
\newcommand\partstep*[1]{\stackrel{#1}{\Longmapsto^*}}
\begin{document}

\title{Simple Modal Types for Functional Reactive Programming}

\author{Patrick Bahr}
\authorsaddresses{}
\affiliation{
  \institution{IT University of Copenhagen}            
  \country{Denmark}                    
}

\begin{abstract}
Functional reactive programming (FRP) is a declarative programming
paradigm for implementing reactive programs at a high level of
abstraction. It applies functional programming principles to construct
and manipulate time-varying values, also known as \emph{signals}.
However, for this programming paradigm to work in practice, an FRP
language must ensure that programs are \emph{causal},
\emph{productive}, and free of \emph{space leaks}. Over the past
fifteen years, several modal type systems to enforce these operational
properties have been developed.

We present a new FRP language with a significantly simplified modal
type system that imposes fewer restrictions than previous modal FRP
languages while still guaranteeing the central operational properties
of causality, productivity, and absence of space leaks. The key
enabling idea is to define the language's semantics so that it is
impossible to inspect a signal's past. In particular, a signal only
stores its current value and how to compute its future values, but it
cannot retain its past values. As a result, we obtain a language with
a simpler modal type system that is also more expressive and supports
a more modular programming style compared to
previous modal FRP calculi without space leaks. While the central idea
applies to both synchronous and asynchronous FRP, we focus here on the
more challenging asynchronous case.
\end{abstract}

\maketitle

\section{Introduction}
\label{sec:introduction}

Functional reactive programming
(FRP)~\citep{elliott97FunctionalReactiveAnimation,wan00FunctionalReactiveProgramming}
is a declarative programming paradigm for implementing reactive programs.
In turn, a reactive program is an indefinitely running piece of software that
interacts with its environment by receiving input from the
environment and, in return, producing output that is sent back to the
environment. This class of software is ubiquitous -- ranging from
graphical user interfaces and servers to safety-critical control
software for components in aircraft and power plants.

The central abstraction employed by FRP to model the interaction with
the environment is the notion of a
\emph{signal}~\citep{courtney01Fruit} (or
\emph{behaviour}~\citep{elliott97FunctionalReactiveAnimation}), which
represents a time-varying value. FRP uses functional programming
principles to construct, consume, and manipulate such signals. This
approach affords a high level of abstraction and modularity, along with
equational reasoning principles.

However, this high level of abstraction makes it difficult to compile
FRP programs into effective and efficient low-level software. To allow
such low-level implementations in general, the FRP language must guarantee three
properties: \emph{productivity}, that is, after receiving an input, the
program must produce its response in finite time; \emph{causality},
that is, at any time, the output may depend only on current or
past inputs; and \emph{no space leaks}, that is, the program may not retain
input values in memory indefinitely. Reactive programs that do not
satisfy these properties will eventually grind to a halt and thus
become unresponsive.

Several approaches have been proposed to guarantee some or all of
these operational properties (cf.\ section~\ref{sec:related-work}). In
this article, we are particularly interested in a type-based approach,
which allows programmers direct access to signals but carefully
controls that access using modal types inspired by linear temporal
logic~\citep{pnueli77TemporalLogicPrograms}. Originally proposed by
\citet{krishnaswami11UltrametricSemanticsReactive}, this approach has
spawned several FRP calculi that use modal types to guarantee
productivity, causality, and the absence of space
leaks~\citep{krishnaswami13HigherorderFunctionalReactive,bahr19SimplyRaTTFitchstyle,bahr22ModalFRPAll}.
The type systems of these calculi feature two type modalities: the
\emph{later} modality $\Delay$ to classify data that is available in the next
time step and the \emph{stable} modality $\Box$ to classify data that can be
kept in memory across time steps without causing space leaks.

\citet{bahr23AsynchronousModalFRP} have shown that this approach
extends to \emph{asynchronous} FRP, where there is no global
clock and where signals thus may update asynchronously from one
another. Asynchronous FRP provides a more suitable programming model
for systems that lack a uniform global clock such as
GUIs~\citep{graulund21AdjointReactiveGUI}. To capture this
asynchronicity with a modal type system,
\citeauthor{bahr23AsynchronousModalFRP}'s Async RaTT calculus replaces
the synchronous later modality \ensuremath{\Delay} with an asynchronous later modality
\ensuremath{\DelayE}. While a value of type \ensuremath{\Delay\Conid{A}} is a delayed computation that
produces a value of type \ensuremath{\Conid{A}} at the next tick of the global clock, a
value of type \ensuremath{\DelayE\;\Conid{A}} is a pair \ensuremath{(\theta,\Varid{f})} consisting of a clock
\ensuremath{\theta} and a delayed computation \ensuremath{\Varid{f}} that produces a value of type
\ensuremath{\Conid{A}} as soon as \ensuremath{\theta} ticks. \emph{Asynchronous} signals are then
represented by the guarded recursive type $\ensuremath{\Sig\;\Conid{A}} \cong \ensuremath{\Conid{A}\times\DelayE\;(\Sig\;\Conid{A})}$. That is, a value of type $\ensuremath{\Sig\;\Conid{A}}$ is of the form \ensuremath{(\Varid{v},(\theta,\Varid{f}))} consisting of the signal's current value $v : A$, a clock
\ensuremath{\theta}, and a delayed computation \ensuremath{\Varid{f}} that produces a new value of
type \ensuremath{\Sig\;\Conid{A}} as soon as \ensuremath{\theta} ticks. Consequently, we can
intuitively think of an asynchronous signal in Async RaTT as a
(possibly infinite) sequence $v_0 \T{\theta_0} v_1 \T{\theta_1}
\ldots$ consisting of lazily computed clocks $\theta_i$ and values
$v_i$ of type \ensuremath{\Conid{A}}, each of which only becomes available after the
preceding clocks $\theta_0, \ldots, \theta_{i-1}$ have ticked \emph{in
succession}. For example, $\theta_2$ and $v_2$ are available after
first $\theta_0$ and then $\theta_1$ have ticked.

In this article, we present a new asynchronous FRP language, called
\Rizzo, that uses the same representation of asynchronous signals as
Async RaTT~\citep{bahr23AsynchronousModalFRP}, i.e.\ $\ensuremath{\Sig\;\Conid{A}} \cong \ensuremath{\Conid{A}\times\DelayE\;(\Sig\;\Conid{A})}$. However, instead of using a sophisticated Fitch-style
type system~\citep{clouston18FitchStyleModalLambda} and an additional
type modality \ensuremath{\Box} to rule out space leaks, we define the semantics
of \Rizzo to rule out space leaks \emph{by construction}. This allows
us to give \Rizzo a much simpler type system and to extend its
expressiveness compared to Async RaTT -- all while maintaining the three
crucial operational properties. The key idea is that when evaluating a
signal \ensuremath{\Varid{s}\mathbin{:}\Sig\;\Conid{A}} in response to a sequence of events
$\tau$, we only keep the current value of the signal and discard all
its history. That is, given a signal \ensuremath{\Varid{s}} that denotes a possibly
infinite sequence $v_0 \T{\theta_0} v_1 \T{\theta_1} \ldots$ and a
finite sequence of events $\tau$ that causes the clocks $\theta_0,
\ldots, \theta_{i-1}$ to tick (but not $\theta_i$), the evaluation of
\ensuremath{\Varid{s}} in response to $\tau$, written $s \reacts\tau
v$, produces a signal \ensuremath{\Varid{v}\mathbin{:}\Sig\;\Conid{A}} that denotes the sequence $v_i
\T{\theta_i} v_{i+1} \T{\theta_{i+1}} \ldots$, i.e.\ forgetting all
its prior history from before $\theta_{i-1}$ ticked. 

This forgetful semantics applies to the language as a whole: A term \ensuremath{\Varid{t}\mathbin{:}\Conid{A}} evaluates to a value \ensuremath{\Varid{v}\mathbin{:}\Conid{A}} in response to
$\tau$, written $t \reacts\tau v$. For example, we may have a GUI
program \ensuremath{\Varid{t}\mathbin{:}\Conid{Widget}} that implements the tree structure describing the
GUI (such as the DOM of a website), so that if $t \reacts\tau v$, then
the value $v$ is the state of the GUI after it has received the events
$\tau$. Importantly, we give \Rizzo an operational semantics that will
compute $t \reacts\tau v$ \emph{incrementally}. That is, after we
computed $t \reacts\tau v$ and then received an additional event $e$,
we can compute $t \reacts{\tau,e} v'$ from $v$ and $e$ alone.
Moreover, this update of $v$ to $v'$ is performed \emph{in place},
only replacing the parts of $v$ that have changed.

The main technical results of this article are the metatheoretical
properties of \Rizzo: We prove that the language is productive,
causal, and free of space leaks. The proof uses a combination of
a logical relations argument and a type preservation argument.
In addition, this article presents multiple examples that demonstrate
the expressiveness and the simplicity of \Rizzo compared to previous
work. The proofs of the metatheoretical results and all \Rizzo examples in
this article have been fully formalised using the \emph{Lean} theorem
prover~\citep{moura21LeanTheoremProver}. \ifanon%
This Lean formalisation is
available in the supplementary material.
\else%
This Lean formalisation is
available in the supplementary material~\citep{supplementary}. 
\fi%

\paragraph{Overview}

We introduce \Rizzo in section~\ref{sec:rizzo} and demonstrate its
expressiveness in section~\ref{sec:example}.
Section~\ref{sec:operational-semantics} presents the operational
semantics of \Rizzo and gives a precise account of the main technical
results. Section~\ref{sec:metatheory} sketches the proof of the main
results. Finally, section~\ref{sec:related-work} discusses related
work, and section~\ref{sec:conclusion} discusses conclusions and
future work.

\section{Introduction to \Rizzo}
\label{sec:rizzo}

We give an overview of the \Rizzo language, its type system, and an
informal account of its semantics. For a complete specification of its
syntax and type system, we refer the reader to
Figs.~\ref{fig:syntax}, \ref{fig:wftype}, and \ref{fig:typing}. The
formal semantics is presented in detail in
section~\ref{sec:operational-semantics}.

The language is based on the simply typed lambda calculus with
product, sum, and inductive types. To account for well-formed
inductive types of the form $ \mu \alpha.F$, we have an explicit type
formation judgement $\istype{\Phi}{F}$ defined in
Fig.~\ref{fig:wftype}. It states that $F$ is a well-formed type with
free type variables drawn from $\Phi$, and we call $F$ a \emph{closed
type} if $\istype{}{F}$. By convention, we use $A, B, C,$ and $D$ to
range over closed types, while $F$ and $G$ range over possibly open
types. We look more closely at the type formation rules in
section~\ref{sec:recursion} when we discuss recursive types. The
typing judgement $\hastype{\Gamma}{t}{A}$ defined in
Fig.~\ref{fig:typing} takes a term $t$, a closed type $A$, a
\emph{typing context} $\Gamma$ consisting of variable type assignments
of the form $x:B$, and a \emph{channel context} $\Delta$ consisting of
channel type assignments of the form $\kappa\chantype B$. We discuss
channels in more detail in section~\ref{sec:delayed-computations}
below. The term language defined in Fig.~\ref{fig:syntax} also
includes locations $l \in \locs$, but these are not typable by the
typing judgement and are only relevant for the operational semantics
discussed in section~\ref{sec:operational-semantics}.

\begin{figure}[t]
  \small
  \[
    \arraycolsep=2pt
    \begin{array}{llcl}
      \multicolumn{4}{l}{\text{Channels }\; \kappa \in  \chans, \hspace{1cm} \text{Locations }\; l \in \locs}\\
      \text{Events} &e&::=& \kappa \mapsto t\\
      \text{Types} &A,B,F, G &::= & \alpha \mid \Unit \mid F \times G
                                 \mid F + G \mid F \to G \mid
                                 \DelayE F\mid \DelayA F \mid
                                 \Sig\,F\mid\Chan\, F
      \mid\mu\alpha.F\\
      \text{Values} &v,w &::= &\unit \mid \lambda x.t \mid
                                  \pair{v}{w} \mid \interm_i\, v \mid
                                  v ::_A w \mid \kappa\mid \wait\,\kappa\mid \watch\,v \mid v \appAE w \mid\delay\, t\\
                                  &&\mid &\never\mid\select\,v\,w\mid\cons[\mu\alpha.F]\,v\\
      \text{Terms} &s,t &::= & v \mid x \mid \rec{x}{s}{t} \mid 
                                 \pair{s}{t} \mid \interm_i\, t
                                 \mid \pi_i\,t \mid t_1t_2  \mid
                                 \chan \mid
   \caseterm{t}{x}{t_1}{x}{t_2}\mid l\\
   &&\mid & s ::_A t \mid \select\, s\,t \mid s \appAE t \mid s \appA t\mid \wait\,t\mid\watch\,t \mid \fix\,x.t \mid \head\,t\mid\tail\,t \mid \cons[\mu\alpha.F]\,t
    \end{array}
  \]
  \vspace{-1em}
\caption{Syntax.}
\label{fig:syntax}
\end{figure}

\begin{figure}[t]
  \small
  \mprset{andskip=1em}
  \mprset{sep=1em}
  \begin{mathpar}
  \inferrule*%
  {\istype{\Phi}{F}}%
  {\istype{\Phi}{\Sig\,F}}%
  \and
  \inferrule*%
  {\istype{\Phi}{F} \\ \istype{\Phi}{G}}%
  {\istype{\Phi}{F \times G}}%
  \and
  \inferrule*%
  {\istype{\Phi}{F} \\ \istype{\Phi}{G}}%
  {\istype{\Phi}{F + G}}%
  \and
  \inferrule*%
  {\istype{}{F} \\ \istype{\Phi}{G}}%
  {\istype{\Phi}{F \to G}}%
  \and
    \inferrule*%
  {~}%
  {\istype{\Phi}{\Unit}}
  \and
  \inferrule*%
  {\istype{}{F}}%
  {\istype{\Phi}{\DelayE F}}%
  \and
  \inferrule*%
  {\istype{}{F}}%
  {\istype{\Phi}{\DelayA F}}%
  \and
  \inferrule*%
  {\istype{}{F}}%
  {\istype{\Phi}{\Chan\, F}}%
  \and
  \inferrule*%
  {\istype{\Phi,\alpha}{F}}%
  {\istype{\Phi}{\mu\,\alpha. F}}%
  \and
  \inferrule*%
  {\alpha \in \Phi}%
  {\istype{\Phi}{\alpha}}%
  \end{mathpar}
  \vspace{-1.5em}
  \caption{Type formation rules.}
  \label{fig:wftype}
\end{figure}

\begin{figure}[th]
  \small
  \mprset{andskip=0.9em}
  \mprset{sep=1em}
  \begin{mathpar}
  \inferrule*%
  {x:A \in \Gamma}%
  {\hastype{\Gamma}{x}{A}}%
  \and%
  \inferrule* {~}%
  {\hastype{\Gamma}{\unit}{\Unit}}%
  \and%
  \inferrule*%
  {\hastype{\Gamma,x:A}{t}{B}}%
  {\hastype{\Gamma}{\lambda x.t}{A \to B}}%
  \and%
  \inferrule*%
  {\hastype{\Gamma}{t}{A \to B} \\
    \hastype{\Gamma}{t'}{A}}%
  {\hastype{\Gamma}{t\,t'}{B}}%
  \and%
  \inferrule*%
  {\hastype{\Gamma}{t}{A_i}}
  {\hastype{\Gamma}{\interm_i\, t}{A_1 + A_2}}%
  \and%
  \inferrule*%
  {\hastype{\Gamma,x: A_i}{t_i}{B} \\
    \hastype{\Gamma}{t}{A_1 + A_2}}
  {\hastype{\Gamma}{\caseterm {t}{x}{t_1}{x}{t_2}}{B}}%
  \and
  \inferrule*%
  {\hastype{\Gamma}{t}{A} \\
    \hastype{\Gamma}{t'}{B}}%
  {\hastype{\Gamma}{\pair{t}{t'}}{A \times B}}%
  \and%
  \inferrule*%
  {\hastype{\Gamma}{t}{A_1 \times A_2}}
  {\hastype{\Gamma}{\pi_i\,t}{A_i}}%
  \and
    \inferrule*
  {\hastype{\Gamma}{t}{F[\mu\,\alpha.F/\alpha]}}
  {\hastype{\Gamma}{\cons[\mu\alpha.F]\,t}{\mu\,\alpha.F}}
  \and%
  \inferrule*%
  {\hastype{\Gamma,x:F[(\mu\alpha.F)\times A/\alpha]}{s}{A} \\ \hastype{\Gamma}{t}{\mu\alpha.F}}
  {\hastype{\Gamma}{\rec{x}{s}{t}}{A}}%
  \and%
  \inferrule*%
  {\kappa \chantype A \in \Delta}%
  {\hastype{\Gamma}{\kappa}{\Chan\,A}}%
  \and
  \inferrule*%
  {~}%
  {\hastype{\Gamma}{\chan}{\Chan\, A}}%
  \and%
  \inferrule*%
  {\hastype{\Gamma}{t}{A}}%
  {\hastype{\Gamma}{\delay\,t}{\DelayA A}}%
  \and%
  \inferrule*%
  {\hastype{\Gamma}{t}{\DelayA\,(A \to B)} \\
    \hastype{\Gamma}{t'}{\DelayA\, A}}%
  {\hastype{\Gamma}{t \appA t'}{\DelayA\,B}}%
  \and%
  \inferrule*%
  {\hastype{\Gamma}{t}{\DelayA\,(A \to B)} \\
    \hastype{\Gamma}{t'}{\DelayE\, A}}%
  {\hastype{\Gamma}{t \appAE t'}{\DelayE\,B}}%
    \and
  \inferrule*%
  {~}%
  {\hastype{\Gamma}{\never}{\DelayE A}}%
  \and%
  \inferrule*%
  {\hastype{\Gamma}{t}{\Chan\, A}}%
  {\hastype{\Gamma}{\wait\,t}{\DelayE A}}%
  \and
  \inferrule*%
  {\hastype{\Gamma}{t}{\Sig\, (A + 1)}}%
  {\hastype{\Gamma}{\watch\,t}{\DelayE A}}%
  \and
  \inferrule*%
  {\hastype{\Gamma}{t_1}{\DelayE A_1}
    \\\hastype{\Gamma}{t_2}{\DelayE A_2}}%
  {\hastype{\Gamma}{\select\,t_1\,t_2}{\DelayE ((A_1 + A_2) + (A_1 \times A_2))}}%
  \and
    \inferrule*
  {\hastype{\Gamma,x : \DelayA A}{t}{A}}
  {\hastype{\Gamma}{\fix\,x. t}{A}}
  \and%
  \inferrule*
  {\hastype{\Gamma}{t}{\Sig\,A}}%
  {\hastype{\Gamma}{\head\, t}{A}}%
  \and%
  \inferrule*
  {\hastype{\Gamma}{t}{\Sig\,A}}%
  {\hastype{\Gamma}{\tail\, t}{\DelayE\,(\Sig\,A)}}%
  \and%
  \inferrule*
  {\hastype{\Gamma}{s}{A}\\\hastype{\Gamma}{t}{\DelayE\,(\Sig\,A)}}%
  {\hastype{\Gamma}{s ::_A t}{\Sig\,A}}%
\end{mathpar}
\vspace{-1.5em}
\caption{Typing rules. We use $A, B$ to range over closed types as defined in Fig.~\ref{fig:wftype}.}
\label{fig:typing}
\end{figure}

\subsection{Signals and Delayed Computations}
\label{sec:signals}

\newcolumntype{C}{>{\centering\arraybackslash}p{0.79cm}}
\definecolor{lightgray}{gray}{0.9}
\newcolumntype{G}{>{\columncolor{lightgray}\centering\arraybackslash}p{0.79cm}}
\begin{figure}
 \begin{tabular}{r @{\hskip 1em} G C G C G C @{\hskip 1em} c}
$s_1 =$ & $1$ & & $2$ & $3$ & $4$ & & $\cdots$ \\
$s_2 =$ & 6 & 7 & 8 & & & 9 & $\cdots$ \\[0.4em]
$\ensuremath{\Varid{s}_{\mathrm{3}}\mathrel{=}\tail\;\Varid{s}_{\mathrm{1}}} =$ &  & & $2$ & $3$ & $4$ & & $\cdots$ \\
$\ensuremath{\Varid{map}\;(\lambda \Varid{x}.\Varid{x}\mathbin{+}\mathrm{1})\;\Varid{s}_{\mathrm{1}}} =$& $2$ & & $3$ & $4$ & $5$ & & $\cdots$ \\
$\ensuremath{\Varid{sample}\;\Varid{s}_{\mathrm{1}}\;\Varid{s}_{\mathrm{2}}} =$
 & $(1,6)$ & & $(2,8)$ & $(3,8)$ & $(4,8)$ & & $\cdots$ \\
$\ensuremath{\Varid{const}\;\mathrm{42}} =$& $42$ & & & & &  \\
$\ensuremath{\Varid{switch}\;\Varid{s}_{\mathrm{2}}\;\Varid{s}_{\mathrm{3}}} =$ 
 & $6$ & $7$ & $2$ & $3$ & $4$ & & $\cdots$\\
$\ensuremath{\Varid{mkSig}\;(\wait\;\kappa)} =$&  & & $7$ & $8$ &  & & $\cdots$ \\
$\ensuremath{\Varid{filter}\;\Varid{isEven}\;\Varid{s}_{\mathrm{3}}} =$ 
 & & & $2$ & & $4$ & & $\cdots$ \\
$\ensuremath{\Varid{scan}\;(\lambda \Varid{m}\;\Varid{n}.\Varid{m}\mathbin{+}\Varid{n})\;\mathrm{0}\;\Varid{s}_{\mathrm{1}}} =$& $1$ & & $3$ & $6$ & $10$ & & $\cdots$ \\
$\ensuremath{\Varid{scanAwait}\;(\lambda \Varid{m}\;\Varid{n}.\Varid{m}\mathbin{+}\Varid{n})\;\mathrm{0}\;\Varid{s}_{\mathrm{3}}} =$& $0$ & & $2$ & $5$ & $9$ & & $\cdots$ \\
$\ensuremath{\Varid{interleave}\;(\lambda \Varid{x}\;\Varid{y}.\Varid{x})\;\Varid{s}_{\mathrm{3}}\;(\tail\;\Varid{s}_{\mathrm{2}})} =$
 & & $7$ & $2$ & $3$ & $4$ & $9$ & $\cdots$ \\
$\Varid{zip}\, s_1\, s_2 =$ 
 & $(1,6)$ & $(1,7)$ & $(2,8)$ & $(3,8)$ & $(4,8)$ & $(4,9)$ & $\cdots$
\\[-0.4ex]&
\multicolumn{7}{l}{
\begin{tikzpicture}[baseline]
  \node at (-0.1,0) {};
  \draw[->] (0,0) -- (6.9,0);
  \draw (0,0.1) -- (0,-0.1);
    
  \node[below] at (0,-0.05) {\small$0$};
  \node[below] at (6.85,-0.05) {\small time};
  \foreach \i in {0,...,4} {
    \pgfmathsetlengthmacro{\x}{1.17cm + 1.15cm*\i}
    \draw (\x,0.1) -- (\x,-0.1);
    \node[below] at (\x,-0.1) {$e_{\i}$};
  }
\end{tikzpicture}
}
\end{tabular}
\caption{Example signals (and delayed signals) produced by signal
 combinators when receiving event sequence $\tau =
 e_0,e_1,e_2,e_3,e_4$. For \ensuremath{\Varid{mkSig}\;(\wait\;\kappa)}, we assume that $\kappa :
 \ensuremath{\Chan\;\Nat} \in \Delta$, $e_1 = \kappa \mapsto 7$, and $e_2 = \kappa
 \mapsto 8$, whereas the other events do not concern $\kappa$.}
\label{fig:example-sigs}
\end{figure}

Similarly to \citet{bahr23AsynchronousModalFRP}, \Rizzo features two later
modalities to capture asynchronicity: an \emph{existential} later
modality \ensuremath{\DelayE} and a \emph{universal} later modality \ensuremath{\DelayA}. Intuitively
speaking, a value of type \ensuremath{\DelayE\;\Conid{A}} is a pair $\ensuremath{(\theta,\Varid{f})}$ consisting of
a \emph{clock} \ensuremath{\theta} and a delayed computation \ensuremath{\Varid{f}} that will
produce a value of type \ensuremath{\Conid{A}} as soon as \ensuremath{\theta} \emph{ticks}. Given a
value \ensuremath{\Varid{v}\mathbin{:}\DelayE\;\Conid{A}}, we write \ensuremath{\cl{\Varid{v}}} to refer to the clock of \ensuremath{\Varid{v}}. By
contrast, a value of type \ensuremath{\DelayA\;\Conid{A}} is a delayed computation that will
produce a value of type \ensuremath{\Conid{A}} whenever \emph{any} clock ticks. 

The universal later modality provides the interface of an applicative
functor~\citep{mcbride08ApplicativeProgrammingEffects} made up of the
introduction form $\ensuremath{\delay\mathbin{:}\Conid{A}\to \DelayA\;\Conid{A}}$ and the applicative action
operator $\ensuremath{\appA\mathbin{:}\DelayA\;(\Conid{A}\to \Conid{B})\to \DelayA\;\Conid{A}\to \DelayA\;\Conid{B}}$. That is, we can delay
any value into the future and apply a delayed function to a delayed
argument to obtain its delayed result. In addition, \ensuremath{\DelayA} also features
the operator \ensuremath{\appAE\mathbin{:}\DelayA\;(\Conid{A}\to \Conid{B})\to \DelayE\;\Conid{A}\to \DelayE\;\Conid{B}}, which is a variant of
\ensuremath{\appA} that interacts with the existential later modality. This
interaction between the two later modalities is possible since a
universally delayed function $\ensuremath{\Varid{f}\mathbin{:}\DelayA\;(\Conid{A}\to \Conid{B})}$ is available at any
time in the future. In particular, it is available when the clock
\ensuremath{\cl{\Varid{v}}} of an existentially delayed value \ensuremath{\Varid{v}\mathbin{:}\DelayE\;\Conid{A}} ticks so that \ensuremath{\Varid{v}}
produces a value of type \ensuremath{\Conid{A}}. Therefore, \ensuremath{\Varid{f}\appAE\Varid{v}} produces a value
whenever \ensuremath{\Varid{v}} does, i.e.\ \ensuremath{\cl{\Varid{f}\appAE\Varid{v}}\mathrel{=}\cl{\Varid{v}}}.

With the help of these ingredients, we can define a functorial action
operator \ensuremath{\appIE}, which we will use extensively in examples. It applies a function to
an existentially delayed computation:
\begin{hscode}\SaveRestoreHook
\column{B}{@{}>{\hspre}l<{\hspost}@{}}%
\column{E}{@{}>{\hspre}l<{\hspost}@{}}%
\>[B]{}\anonymous \appIE\anonymous \mathbin{:}(\Conid{A}\to \Conid{B})\to \DelayE\;\Conid{A}\to \DelayE\;\Conid{B}{}\<[E]%
\\
\>[B]{}\Varid{f}\appIE\Varid{x}\mathrel{=}\delay\;\Varid{f}\appAE\Varid{x}{}\<[E]%
\ColumnHook
\end{hscode}\resethooks

The type of signals producing values of type \ensuremath{\Conid{A}} is defined by guarded
recursion as $\ensuremath{\Sig\;\Conid{A}} \cong \ensuremath{\Conid{A}\times\DelayE\;(\Sig\;\Conid{A})}$. Concretely, signals are
constructed using the \ensuremath{\mathbin{::}_{\Conid{A}}} operator of type \ensuremath{\Conid{A}\to \DelayE\;(\Sig\;\Conid{A})\to \Sig\;\Conid{A}} so that the signal \ensuremath{\Varid{v}\mathbin{::}_{\Conid{A}}\Varid{w}} has the current value \ensuremath{\Varid{v}\mathbin{:}\Conid{A}} and will
update as soon as \ensuremath{\cl{\Varid{w}}} ticks. We often elide the subscript \ensuremath{\Conid{A}} from
\ensuremath{\mathbin{::}_{\Conid{A}}} and simply write \ensuremath{\Varid{v}\mathbin{::}\Varid{w}} when \ensuremath{\Conid{A}} can be inferred from the
context. Since we can think of a value \ensuremath{\Varid{w}} of type \ensuremath{\DelayE\;(\Sig\;\Conid{A})} as a
pair \ensuremath{(\theta,\Varid{f})} consisting of a clock \ensuremath{\theta} and a delayed
computation \ensuremath{\Varid{f}} that produces a new value of type \ensuremath{\Sig\;\Conid{A}} whenever
\ensuremath{\theta} ticks, we can in turn think of a signal $\ensuremath{\Varid{s}\mathbin{:}\Sig\;\Conid{A}}$ as a
sequence $v_0 \T{\theta_0} v_1 \T{\theta_1} \ldots$ consisting of
lazily computed values $v_i$ of type \ensuremath{\Conid{A}} separated by clocks $\theta_i$. After the
clocks $\theta_0,\dots, \theta_{i-1}$ have ticked in succession, the
signal has the value $v_i$ and waits for a tick on $\theta_i$. The
current value and the future of a signal can be accessed via $\ensuremath{\head\mathbin{:}\Sig\;\Conid{A}\to \Conid{A}}$ and $\ensuremath{\tail\mathbin{:}\Sig\;\Conid{A}\to \DelayE\;(\Sig\;\Conid{A})}$, respectively.
However, we will usually use pattern matching syntax instead. For
example, the \ensuremath{\Varid{map}} function can be implemented as follows:
\begin{hscode}\SaveRestoreHook
\column{B}{@{}>{\hspre}l<{\hspost}@{}}%
\column{E}{@{}>{\hspre}l<{\hspost}@{}}%
\>[B]{}\Varid{map}\mathbin{:}(\Conid{A}\to \Conid{B})\to \Sig\;\Conid{A}\to \Sig\;\Conid{B}{}\<[E]%
\\
\>[B]{}\Varid{map}\;\Varid{f}\;(\Varid{x}\mathbin{::}\Varid{xs})\mathrel{=}\Varid{f}\;\Varid{x}\mathbin{::}(\Varid{map}\;\Varid{f}\appIE\Varid{xs}){}\<[E]%
\ColumnHook
\end{hscode}\resethooks
This is a \emph{guarded recursive definition}, as the recursive call
\ensuremath{\Varid{map}\;\Varid{f}} is delayed by the \ensuremath{\appIE} operator. We show how the syntactic
sugar of such guarded recursive function definitions is translated
into the guarded fixed point combinator \ensuremath{\fix} in
section~\ref{sec:recursion}.

Fig.~\ref{fig:example-sigs} illustrates how signal combinators such as
\ensuremath{\Varid{map}} work by showing the asynchronous timing of different (delayed)
signals. Each row represents a signal (of type \ensuremath{\Sig\;\Conid{A}}) or a delayed
signal (of type \ensuremath{\DelayE\;(\Sig\;\Conid{A})} and thus with no initial value at time
$0$). An entry $v$ in a row indicates that the (delayed) signal
updates its value to $v$ at that time. To illustrate these (delayed)
signals, we assume that we have received a sequence of events $\tau =
e_0,e_1,e_2,e_3,e_4$, whose occurrences are shown at the bottom of
Fig.~\ref{fig:example-sigs}. The figure illustrates the relative
timing of the two signals $s_1 : \ensuremath{\Sig\;\Nat}$ and $s_2 : \ensuremath{\Sig\;\Nat}$,
which intuitively denote the following sequences of values and
clocks:%
\footnote{More accurately, the sequence $v_0 \T{\theta_0} v_1
 \T{\theta_1} \ldots$ denoted by a signal \ensuremath{\Varid{s}} is w.r.t.\ some given
 sequence of events $(e_i)_{i\in\nats}$. Specifically, each $v_i$ and
 $\theta_i$ are obtained from $s \reacts{e_0,\dots, e_n} v_i :: w_i$
 with $\theta_i = \cl{w_i}$ for some prefix $e_0,\dots, e_n$ of
 $(e_i)_{i\in\nats}$.}
\[
s_1 \approx 1 \T{\theta_0} 2 \T{\theta_1} 3 \T{\theta_2} 4 \T{\theta_3}
\ldots\quad \text{ and } \quad s_2 \approx 6 \T{\theta'_0} 7 \T{\theta'_1} 8 \T{\theta'_2} 9 \T{\theta'_3}
\ldots
\]
Fig.~\ref{fig:example-sigs} tells us that when event $e_0$ occurs,
$s_2$ produces a new value $7$, whereas $s_1$ does not. That is, $e_0$
causes $\theta'_0$ to tick, but not $\theta_0$. Similarly, $e_1$
causes both $\theta_0$ and $\theta'_1$ to tick simultaneously, which
in turn causes $s_1$ and $s_2$ to produce new values $2$ and $8$,
respectively. 
Given a term $t$ and a sequence of events $\tau$, we write $t
\reacts\tau v$ for the evaluation of $t$ to value $v$ in response to
$\tau$. For example, given the sequence of events $e_0,e_1,e_2$ from
Fig.~\ref{fig:example-sigs}, we have that $s_1 \reacts{e_0,e_1,e_2} 3
:: d_1$ and $s_2 \reacts{e_0,e_1,e_2} 8 :: d_2$, for some $d_1, d_2 :
\ensuremath{\DelayE\;(\Sig\;\Nat)}$ with $\cl{d_1} = \theta_2$ and $\cl{d_2} = \theta'_2$.
We call $\reacts\tau$ the \emph{reactive evaluation semantics} of
\Rizzo and define it formally in
section~\ref{sec:operational-semantics}.

Applying \ensuremath{\Varid{map}\;(\lambda \Varid{x}.\Varid{x}\mathbin{+}\mathrm{1})} to $s_1$ produces a signal intuitively
denoted as follows:
\[
\ensuremath{\Varid{map}\;(\lambda \Varid{x}.\Varid{x}\mathbin{+}\mathrm{1})\;\Varid{s}_{\mathrm{1}}} \approx 2 \T{\theta_0} 3 \T{\theta_1} 4 \T{\theta_2} 5 \T{\theta_3}
\ldots
\]
Because \ensuremath{\cl{\Varid{map}\;\Varid{f}\appIE\Varid{xs}}\mathrel{=}\cl{\Varid{xs}}}, \ensuremath{\Varid{map}} produces a signal with the
same clocks as the argument signal. For example, $\ensuremath{\Varid{map}\;(\lambda \Varid{x}.\Varid{x}\mathbin{+}\mathrm{1})\;\Varid{s}_{\mathrm{1}}} \reacts{e_0,e_1,e_2} 4 :: d$ for some $d : \ensuremath{\DelayE\;(\Sig\;\Nat)}$
with $\cl{d} = \theta_2$.

Using the \ensuremath{\Varid{map}} combinator, we can implement a sampling combinator that
takes two signals and samples the value of the second signal each time
the first signal updates:
\begin{hscode}\SaveRestoreHook
\column{B}{@{}>{\hspre}l<{\hspost}@{}}%
\column{E}{@{}>{\hspre}l<{\hspost}@{}}%
\>[B]{}\Varid{sample}\mathbin{:}\Sig\;\Conid{A}\to \Sig\;\Conid{B}\to \Sig\;(\Conid{A}\times\Conid{B}){}\<[E]%
\\
\>[B]{}\Varid{sample}\;\Varid{xs}\;\Varid{ys}\mathrel{=}\Varid{map}\;(\lambda \Varid{x}.(\Varid{x},\head\;\Varid{ys}))\;\Varid{xs}{}\<[E]%
\ColumnHook
\end{hscode}\resethooks
Each time $xs$ produces a new value $v$, the function \ensuremath{(\lambda \Varid{x}.(\Varid{x},\head\;\Varid{ys}))} is applied to $v$ so that the resulting signal produces a value
obtained by evaluating \ensuremath{(\Varid{v},\head\;\Varid{ys})}. Let's consider the example
\ensuremath{\Varid{sample}\;\Varid{s}_{\mathrm{1}}\;\Varid{s}_{\mathrm{2}}}, shown in Fig.~\ref{fig:example-sigs}. Recall that
$e_2$ causes the clock $\theta_1$ of \ensuremath{\Varid{s}_{\mathrm{1}}} to tick, and that $\ensuremath{\Varid{s}_{\mathrm{1}}}
\reacts{e_0,e_1,e_2} 3 :: d_1$ and $s_2 \reacts{e_0,e_1,e_2} 8 ::
d_2$. Hence, when $e_2$ occurs, \ensuremath{\Varid{sample}\;\Varid{s}_{\mathrm{1}}\;\Varid{s}_{\mathrm{2}}} will sample \ensuremath{\Varid{s}_{\mathrm{2}}}, which
has been evaluated to $8 :: d_2$. Therefore, $\ensuremath{\Varid{sample}\;\Varid{s}_{\mathrm{1}}\;\Varid{s}_{\mathrm{2}}} \reacts{e_0,e_1,e_2}
(3,8) :: d'$ for some $d'$ with $\cl{d'} = \theta_2$.

\subsection{Constructing Delayed Computations}
\label{sec:delayed-computations}

In addition to \ensuremath{\delay}, \ensuremath{\appA}, and \ensuremath{\appAE}, \Rizzo features four more
primitives to construct delayed computations, namely \ensuremath{\never}, \ensuremath{\select},
\ensuremath{\wait}, and \ensuremath{\watch}. We discuss each of them in turn below and
illustrate them with signal combinators and accompanying
examples in Fig.~\ref{fig:example-sigs}.

The primitive \ensuremath{\never\mathbin{:}\DelayE\;\Conid{A}} allows us to construct delayed
computations with a clock that will never tick. This is useful for
constructing constant signals:
\begin{hscode}\SaveRestoreHook
\column{B}{@{}>{\hspre}l<{\hspost}@{}}%
\column{E}{@{}>{\hspre}l<{\hspost}@{}}%
\>[B]{}\Varid{const}\mathbin{:}\Conid{A}\to \Sig\;\Conid{A}{}\<[E]%
\\
\>[B]{}\Varid{const}\;\Varid{x}\mathrel{=}\Varid{x}\mathbin{::}\never{}\<[E]%
\ColumnHook
\end{hscode}\resethooks
That is, intuitively speaking, \ensuremath{\Varid{const}\;\Varid{v}} denotes a finite sequence $v
\T{\theta}$ with a clock $\theta$ that never ticks.

The primitive \ensuremath{\select} takes two delayed computations \ensuremath{\Varid{v}\mathbin{:}\DelayE\;\Conid{A},\Varid{w}\mathbin{:}\DelayE\;\Conid{B}} and produces a delayed computation of type \ensuremath{\DelayE\;((\Conid{A}\mathbin{+}\Conid{B})\mathbin{+}(\Conid{A}\times\Conid{B}))} with a clock that ticks whenever \ensuremath{\cl{\Varid{v}}} or \ensuremath{\cl{\Varid{w}}} ticks. If
\ensuremath{\cl{\Varid{v}}} ticks first, then \ensuremath{\select\;\Varid{v}\;\Varid{w}} produces the value produced by
$v$. Likewise, if \ensuremath{\cl{\Varid{w}}} ticks first, then \ensuremath{\select\;\Varid{v}\;\Varid{w}} produces the
value produced by $w$. If both \ensuremath{\cl{\Varid{v}}} and \ensuremath{\cl{\Varid{w}}} tick
simultaneously, then \ensuremath{\select\;\Varid{v}\;\Varid{w}} produces a pair of values with each
component drawn from the corresponding result of $v$ and $w$. The sum
type \ensuremath{(\Conid{A}\mathbin{+}\Conid{B})\mathbin{+}(\Conid{A}\times\Conid{B})} witnesses these three contingencies.

To make the type of \ensuremath{\select} more readable, we use the shorthand \ensuremath{\Conid{Select}\;\Conid{A}\;\Conid{B}} for the type \ensuremath{(\Conid{A}\mathbin{+}\Conid{B})\mathbin{+}(\Conid{A}\times\Conid{B})} as well as the shorthands \ensuremath{\con{left}\;\Varid{s}}, \ensuremath{\con{right}\;\Varid{t}}, and \ensuremath{\con{both}\;\Varid{s}\;\Varid{t}} for \ensuremath{\interm_1\;(\interm_1\;\Varid{s})}, \ensuremath{\interm_1\;(\interm_2\;\Varid{t})}, and
\ensuremath{\interm_2\;(\Varid{s},\Varid{t})}, respectively. We can use \ensuremath{\select} to construct dynamic
dataflows. The prototypical signal combinator with dynamic dataflow is
\ensuremath{\Varid{switch}}, which constructs a signal that initially behaves like the first
signal it is given but then switches its behaviour to the second --
delayed -- signal as soon as that second signal arrives:

\begin{hscode}\SaveRestoreHook
\column{B}{@{}>{\hspre}l<{\hspost}@{}}%
\column{3}{@{}>{\hspre}l<{\hspost}@{}}%
\column{10}{@{}>{\hspre}l<{\hspost}@{}}%
\column{22}{@{}>{\hspre}l<{\hspost}@{}}%
\column{25}{@{}>{\hspre}l<{\hspost}@{}}%
\column{29}{@{}>{\hspre}c<{\hspost}@{}}%
\column{29E}{@{}l@{}}%
\column{32}{@{}>{\hspre}l<{\hspost}@{}}%
\column{E}{@{}>{\hspre}l<{\hspost}@{}}%
\>[B]{}\Varid{switch}\mathbin{:}\Sig\;\Conid{A}\to \DelayE\;(\Sig\;\Conid{A})\to \Sig\;\Conid{A}{}\<[E]%
\\
\>[B]{}\Varid{switch}\;(\Varid{x}\mathbin{::}\Varid{xs})\;\Varid{d}\mathrel{=}\Varid{x}\mathbin{::}(\Varid{cont}\appIE\select\;\Varid{xs}\;\Varid{d}){}\<[E]%
\\
\>[B]{}\hsindent{3}{}\<[3]%
\>[3]{}\mathbf{where}\;{}\<[10]%
\>[10]{}\Varid{cont}\mathbin{:}\Conid{Select}\;(\Sig\;\Conid{A})\;(\Sig\;\Conid{A})\to \Sig\;\Conid{A}{}\<[E]%
\\
\>[10]{}\Varid{cont}\;(\con{left}\;\Varid{xs'}{}\<[29]%
\>[29]{}){}\<[29E]%
\>[32]{}\mathrel{=}\Varid{switch}\;\Varid{xs'}\;\Varid{d}{}\<[E]%
\\
\>[10]{}\Varid{cont}\;(\con{right}\;{}\<[25]%
\>[25]{}\Varid{d'}{}\<[29]%
\>[29]{}){}\<[29E]%
\>[32]{}\mathrel{=}\Varid{d'}{}\<[E]%
\\
\>[10]{}\Varid{cont}\;(\con{both}\;{}\<[22]%
\>[22]{}\anonymous \;{}\<[25]%
\>[25]{}\Varid{d'}{}\<[29]%
\>[29]{}){}\<[29E]%
\>[32]{}\mathrel{=}\Varid{d'}{}\<[E]%
\ColumnHook
\end{hscode}\resethooks
As we can see, \ensuremath{\Varid{switch}} uses \ensuremath{\select} to check which of \ensuremath{\Varid{xs}} and \ensuremath{\Varid{d}} arrives
first. If \ensuremath{\Varid{xs}} arrives first and produces a new signal \ensuremath{\Varid{xs'}\mathbin{:}\Sig\;\Conid{A}},
then \ensuremath{\Varid{switch}} recursively continues. Otherwise, if \ensuremath{\Varid{d}} arrives either
before or simultaneously with \ensuremath{\Varid{xs}}, then \ensuremath{\Varid{switch}} simply returns the
resulting signal \ensuremath{\Varid{d'}\mathbin{:}\Sig\;\Conid{A}}.

To understand the \ensuremath{\wait} primitive, we must first consider what kind
of events a \Rizzo program can receive and how these events cause a
clock to tick. A \Rizzo program receives inputs from its environment
via \emph{channels}. A channel of type \ensuremath{\Chan\;\Conid{A}} can receive data of
type \ensuremath{\Conid{A}}. We write $\kappa \mapsto t$ for the \emph{event} that the
channel $\kappa$ has received input $t$, which is typically a value
but does not need to be one. For example, if the channel context
$\Delta$ contains a channel $\kappa_{\sym{keyboard}} : \ensuremath{\Chan\;\Conid{Char}}$
that receives the characters typed on the keyboard, then the event
$\kappa_{\sym{keyboard}} \mapsto \ensuremath{\text{\ttfamily 'a'}}$ indicates that the user has
pressed the `a' key. We can wait to receive a value from a channel
using $\ensuremath{\wait\mathbin{:}\Chan\;\Conid{A}\to \DelayE\;\Conid{A}}$, which takes a channel $\ensuremath{\kappa\mathbin{:}\Chan\;\Conid{A}}$ and constructs a delayed computation with a clock that
ticks whenever an event $\kappa \mapsto t$ occurs, which then causes
the delayed computation to produce the value $v$ that $t$ evaluates
to. For example, $\wait\,\kappa_{\sym{keyboard}} : \ensuremath{\DelayE\;\Conid{Char}}$ is a
delayed computation that produces a character value as soon as the
user presses a key. In turn, we can use delayed computations to
construct signals using the following combinator:
\begin{hscode}\SaveRestoreHook
\column{B}{@{}>{\hspre}l<{\hspost}@{}}%
\column{E}{@{}>{\hspre}l<{\hspost}@{}}%
\>[B]{}\Varid{mkSig}\mathbin{:}\DelayE\;\Conid{A}\to \DelayE\;(\Sig\;\Conid{A}){}\<[E]%
\\
\>[B]{}\Varid{mkSig}\;\Varid{da}\mathrel{=}(\lambda \Varid{a}.\Varid{a}\mathbin{::}\Varid{mkSig}\;\Varid{da})\appIE\Varid{da}{}\<[E]%
\ColumnHook
\end{hscode}\resethooks
Fig.~\ref{fig:example-sigs} illustrates the delayed signal \ensuremath{\Varid{mkSig}\;(\wait\;\kappa)} under the assumption that the events in $\tau$ are more specifically $e_1
= \kappa \mapsto 7$ and $e_2 = \kappa \mapsto 8$, while the events
$e_0, e_3, e_4$ do not concern $\kappa$, i.e.\ they are of the form
$\kappa' \mapsto t$ for some $\kappa' \neq \kappa$.

Channels are either drawn from the channel context $\Delta$ or can be
constructed by the program using $\chan : \ensuremath{\Chan\;\Conid{A}}$. A channel $\kappa
: \ensuremath{\Chan\;\Conid{A}} \in \Delta$ is a built-in channel like
$\kappa_{\sym{keyboard}} : \ensuremath{\Chan\;\Conid{Char}}$ or $\kappa_{\sym{mouse}} :
\ensuremath{\Chan\;(\Conid{Int}\times\Conid{Int})}$, whereas dynamically created channels can be used
to construct objects that may receive input. For example, in a GUI, we
may create a channel $\kappa : \ensuremath{\Chan\;\mathrm{1}}$ whenever we
construct a button, so that we may receive an event
$\kappa \mapsto ()$ every time the button is pressed.

Finally, the primitive $\ensuremath{\watch\mathbin{:}\Sig\;(\Conid{A}\mathbin{+}\mathrm{1})\to \DelayE\;\Conid{A}}$ allows us to
treat a signal of type \ensuremath{\Sig\;(\Conid{A}\mathbin{+}\mathrm{1})} as an \emph{internal} source of
events. We typically call a signal of this type a \emph{partial
signal}. For readability, we use the shorthand \ensuremath{\Conid{Maybe}\;\Conid{A}} for \ensuremath{\Conid{A}\mathbin{+}\mathrm{1}},
and we write \ensuremath{\con{just}\;\Varid{t}} and \ensuremath{\con{nothing}} for \ensuremath{\interm_1\;\Varid{t}} and \ensuremath{\interm_2\;()},
respectively. Given a partial signal \ensuremath{\Varid{s}\mathbin{:}\Sig\;(\Conid{Maybe}\;\Conid{A})}, \ensuremath{\watch\;\Varid{s}}
constructs a delayed computation that produces a value of type \ensuremath{\Conid{A}}
whenever \ensuremath{\Varid{s}} is updated to a value of the form \ensuremath{\con{just}\;\Varid{v}} for some \ensuremath{\Varid{v}\mathbin{:}\Conid{A}}. With the help of \ensuremath{\watch}, we can implement \ensuremath{\Varid{filter}}, which takes a
predicate $p$ and a delayed signal $s$ to construct a new delayed
signal that contains only those elements of $s$ that satisfy $p$:
\begin{hscode}\SaveRestoreHook
\column{B}{@{}>{\hspre}l<{\hspost}@{}}%
\column{E}{@{}>{\hspre}l<{\hspost}@{}}%
\>[B]{}\Varid{filter}\mathbin{:}(\Conid{A}\to \Conid{Bool})\to \DelayE\;(\Sig\;\Conid{A})\to \DelayE\;(\Sig\;\Conid{A}){}\<[E]%
\\
\>[B]{}\Varid{filter}\;\Varid{p}\;\Varid{s}\mathrel{=}\Varid{mkSig}\;(\watch\;(\con{nothing}\mathbin{::}(\Varid{mapMaybe}\;\Varid{p}\;\Varid{s}))){}\<[E]%
\\[\blanklineskip]%
\>[B]{}\Varid{mapMaybe}\mathbin{:}(\Conid{A}\to \Conid{Bool})\to \DelayE\;(\Sig\;\Conid{A})\to \DelayE\;(\Sig\;(\Conid{Maybe}\;\Conid{A})){}\<[E]%
\\
\>[B]{}\Varid{mapMaybe}\;\Varid{p}\;\Varid{d}\mathrel{=}\Varid{map}\;(\lambda \Varid{x}.\mathbf{if}\;\Varid{p}\;\Varid{x}\;\mathbf{then}\;\con{just}\;\Varid{x}\;\mathbf{else}\;\con{nothing})\appIE\Varid{d}{}\<[E]%
\ColumnHook
\end{hscode}\resethooks
We use \ensuremath{\Varid{mapMaybe}} to turn a delayed signal \ensuremath{\Varid{d}} into a delayed
\emph{partial} signal that contains only the elements of \ensuremath{\Varid{d}} that
satisfy the predicate \ensuremath{\Varid{p}}. Then \ensuremath{\Varid{filter}} uses \ensuremath{\watch} on the partial
signal to construct the desired (non-partial) signal.

Now that we have seen all primitive operations to construct delayed
computations, we can give a precise account of what clocks are: A
clock $\theta$ is a finite set containing channels \ensuremath{\kappa\mathbin{:}\Chan\;\Conid{A}}
and partial signals \ensuremath{\Varid{s}\mathbin{:}\Sig\;(\Conid{Maybe}\;\Conid{A})}. Given an event $\kappa
\mapsto t$, the clock $\theta$ ticks iff $\kappa \in \theta$ or there
is a partial signal $s \in \theta$ such that $\kappa \mapsto t$ causes
\ensuremath{\Varid{s}} to produce a new value of the form \ensuremath{\con{just}\;\Varid{w}}.
In particular, we define $\cl{\ensuremath{\never}} = \emptyset$, i.e.\
$\cl{\ensuremath{\never}}$ never ticks; $\cl{\ensuremath{\wait\;\kappa}} = \set{\kappa}$;
$\cl{\ensuremath{\watch\;\Varid{v}}} = \set{\ensuremath{\Varid{v}}}$; \ensuremath{\cl{\Varid{v}\appAE\Varid{w}}\mathrel{=}\cl{\Varid{w}}}; and $\cl{\ensuremath{\select\;\Varid{v}\;\Varid{w}}} = \cl{v} \cup
\cl{w}$, i.e.\ $\cl{\ensuremath{\select\;\Varid{v}\;\Varid{w}}}$ ticks whenever $\cl{v}$ or $\cl{w}$
ticks. In addition, we also extend the notion of clocks to non-value
terms $t$ by defining $\cl{t} = \cl{v}$, where $v$ is the result of
evaluating $t$ with the empty sequence of events, i.e.\ $t
\reacts{\cdot} v$.

\subsection{Recursive Types and Guarded Recursion}
\label{sec:recursion}

\Rizzo features recursive types of the form $\mu \alpha. F$, where
$\alpha$ may not appear in the domain of a function type nor in the
scope of \ensuremath{\Chan}, \ensuremath{\DelayE}, or \ensuremath{\DelayA}. This restriction can be seen in the
type formation rules in Fig.~\ref{fig:wftype}, which permit the types $F
\to G$, $\Chan\,F$, $\DelayE F$, and $\DelayA F$ only if there are no
free type variable occurrences in $F$. Supported recursive types
include standard inductive types such as the type of natural numbers
$\ensuremath{\Nat} = \mu \alpha. 1 + \alpha$, lists $\List\, A = \mu \alpha. 1 +
(A \times \alpha)$, binary trees $\mu \alpha. A + (\alpha \times A
\times \alpha)$, and infinitely branching trees $\mu \alpha. A + (A
\times (\Nat \to \alpha))$, but also nested inductive types such as the
type of rose trees $\mu \alpha. A \times \List\, \alpha$. Importantly,
we also support recursive types with signals, such as a type of binary
trees $\mu \alpha. A + \Sig\, (\alpha \times A \times \alpha)$, where
each inner node is a signal and can thus change over time. We will see
an example of this in section~\ref{sec:example}, where we use such
trees to represent graphical user interfaces that can change
dynamically in response to user input.

We can traverse such recursive data structures using the primitive
recursion combinator $\rec*$, but in examples we use standard pattern
matching and recursion syntax, which can be translated into
uses of $\rec*$. For instance, consider the following recursive definitions:
\columnratio{0.5}
\begin{paracol}{2}
  \begin{hscode}\SaveRestoreHook
\column{B}{@{}>{\hspre}l<{\hspost}@{}}%
\column{14}{@{}>{\hspre}c<{\hspost}@{}}%
\column{14E}{@{}l@{}}%
\column{17}{@{}>{\hspre}l<{\hspost}@{}}%
\column{E}{@{}>{\hspre}l<{\hspost}@{}}%
\>[B]{}\mathbf{data}\;\List\;\Conid{A}{}\<[14]%
\>[14]{}\mathrel{=}{}\<[14E]%
\>[17]{}\con{nil}{}\<[E]%
\\
\>[14]{}\mid {}\<[14E]%
\>[17]{}\con{cons}\;\Conid{A}\;(\List\;\Conid{A}){}\<[E]%
\ColumnHook
\end{hscode}\resethooks
\switchcolumn
\begin{hscode}\SaveRestoreHook
\column{B}{@{}>{\hspre}l<{\hspost}@{}}%
\column{21}{@{}>{\hspre}l<{\hspost}@{}}%
\column{E}{@{}>{\hspre}l<{\hspost}@{}}%
\>[B]{}\Varid{length}\mathbin{:}\List\;\Conid{A}\to \Nat{}\<[E]%
\\
\>[B]{}\Varid{length}\;\con{nil}{}\<[21]%
\>[21]{}\mathrel{=}\mathrm{0}{}\<[E]%
\\
\>[B]{}\Varid{length}\;(\con{cons}\;\Varid{x}\;\Varid{xs}){}\<[21]%
\>[21]{}\mathrel{=}\mathrm{1}\mathbin{+}\Varid{length}\;\Varid{xs}{}\<[E]%
\ColumnHook
\end{hscode}\resethooks
\end{paracol}
\noindent
This recursion and pattern matching syntax desugars into the following \Rizzo type and term:
\[
\ensuremath{\List\;\Conid{A}} = \mu\alpha. 1 + (A \times \alpha)\hspace{2cm}
\ensuremath{\Varid{length}} = \lambda l .
\rec{r}{\caseterm{r}{x}{0}{x}{1+\pi_2\,(\pi_2\,x)}}{l}
\]

In addition, \Rizzo features guarded
recursion~\citep{nakano00ModalityRecursion} via the guarded fixed
point combinator \ensuremath{\fix}. Instead of a general fixed point combinator of
type \ensuremath{(\Conid{A}\to \Conid{A})\to \Conid{A}}, the guarded fixed point combinator has type \ensuremath{(\DelayA\;\Conid{A}\to \Conid{A})\to \Conid{A}}. Following \citet{bahr23AsynchronousModalFRP}, we make
the argument have type \ensuremath{\DelayA\;\Conid{A}}, so that the guarded fixed point can
only be unfolded in response to a future event, which in turn allows
us to prove productivity of the language.

We have already seen several examples of functions defined using
guarded recursion, including \ensuremath{\Varid{map}}, $\ensuremath{\Varid{mkSig}}$, and \ensuremath{\Varid{switch}}. If all
recursive calls are guarded by an application of \ensuremath{\delay}, we can
translate a recursive definition from recursion syntax into a guarded
fixed point. For example, the recursive call in the definition of
\ensuremath{\Varid{map}} appears as the first argument to \ensuremath{\appIE} and is thus guarded by a
\ensuremath{\delay}. We can translate this definition into an explicit use of
\ensuremath{\fix} and also translate the pattern matching syntax into explicit
projections with \ensuremath{\head} and \ensuremath{\tail} so that we obtain a term in the
core calculus:
\begin{hscode}\SaveRestoreHook
\column{B}{@{}>{\hspre}l<{\hspost}@{}}%
\column{26}{@{}>{\hspre}l<{\hspost}@{}}%
\column{E}{@{}>{\hspre}l<{\hspost}@{}}%
\>[B]{}\Varid{map}\mathbin{:}(\Conid{A}\to \Conid{B})\to \Sig\;\Conid{A}\to \Sig\;\Conid{B}{}\<[E]%
\\
\>[B]{}\Varid{map}\mathrel{=}\fix\;\Varid{r}.\lambda \Varid{f}.\lambda \Varid{s}.{}\<[26]%
\>[26]{}\mathbf{let}\;\Varid{x}\mathrel{=}\head\;\Varid{s}\;\mathbf{in}\;\mathbf{let}\;\Varid{xs}\mathrel{=}\tail\;\Varid{s}\;\mathbf{in}\;\Varid{f}\;\Varid{x}\mathbin{::}(\delay\;(\lambda \Varid{r'}.\Varid{r'}\;\Varid{f})\appA\Varid{r}\appAE\Varid{xs}){}\<[E]%
\ColumnHook
\end{hscode}\resethooks
For readability, we still keep the standard syntactic sugar \ensuremath{\mathbf{let}\;\Varid{x}\mathrel{=}\Varid{s}\;\mathbf{in}\;\Varid{t}} as notation for \ensuremath{(\lambda \Varid{x}.\Varid{t})\;\Varid{s}}.

This translation from recursive surface syntax to explicit guarded
fixed points follows a general scheme. Given a recursively defined
function $f$ where all recursive calls occur under a $\delay$, we can
desugar a recursive definition of the form
$
f x_1 \dots x_n  = K[\delay\, t_1,\dots,\delay\, t_m]
$,
where $f$ does not occur in \ensuremath{\Conid{K}}, into the following form that uses
$\fix$:
\[
f = \fix\, r . \lambda x_1 \dots \lambda x_n. K[\delay (\lambda r' . t_1[r'/f]) \appA r,\dots,\delay (\lambda r' . t_m[r'/f]) \appA r]
\]
Moreover, pattern matching is desugared into corresponding eliminators
of the calculus in the standard way as we have seen in the example for
\ensuremath{\Varid{map}} above.

\subsection{Equational Reasoning Limitations}
\label{sec:equational-reasoning}

The forgetful semantics of signals means that $\beta$ and $\eta$
conversions are not equivalence preserving in general. For example,
$\ensuremath{(\lambda \Varid{x}.\delay\;\Varid{x})\;(\head\;\Varid{xs})}$ is semantically different from its
$\beta$-contractum $\ensuremath{\delay\;(\head\;\Varid{xs})}$, and $\ensuremath{\Varid{f}\;(\head\;\Varid{xs})}$ is
semantically different from its $\eta$-expansion $\ensuremath{\lambda \Varid{x}.\Varid{f}\;(\head\;\Varid{xs})\;\Varid{x}}$. The head of a signal may be different depending on whether it is
evaluated now or in the future, because evaluating a signal erases its
past to avoid space leaks by construction. As an example, consider a
$\beta$-expanded version of \ensuremath{\Varid{sample}}, defined by 
\ensuremath{\Varid{sample}_\beta\;\Varid{xs}\;\Varid{ys}\mathrel{=}\Varid{map}\;((\lambda \Varid{y}\;\Varid{x}.(\Varid{x},\Varid{y}))\;(\head\;\Varid{ys}))\;\Varid{xs}}.
Given \ensuremath{\Varid{s}_{\mathrm{1}},\Varid{s}_{\mathrm{2}}} from Fig.~\ref{fig:example-sigs}, \ensuremath{\Varid{sample}\;\Varid{s}_{\mathrm{1}}\;\Varid{s}_{\mathrm{2}}}
produces $(1,6) \T{\theta_0} (2,8) \T{\theta_1} (3,8) \T{\theta_2}
(4,8) \T{\theta_3} \ldots$, whereas \ensuremath{\Varid{sample}_\beta\;\Varid{s}_{\mathrm{1}}\;\Varid{s}_{\mathrm{2}}} produces
$(1,6) \T{\theta_0} (2,6) \T{\theta_1} (3,6) \T{\theta_2} (4,6)
\T{\theta_3} \ldots$, i.e.\ the latter does not sample \ensuremath{\Varid{s}_{\mathrm{2}}} but simply
returns the initial value of \ensuremath{\Varid{s}_{\mathrm{2}}}, namely $6$.

A similar situation also appears in Async
RaTT~\citep{bahr23AsynchronousModalFRP} (the only other asynchronous
modal FRP calculus with space leaks guarantees), where the term
$\sym{read}_\kappa : A$ reads the current value of a channel $\kappa$
and thus evaluates to different values depending on \emph{when} it is
evaluated. Therefore, similarly to Async RaTT, equational reasoning in
\Rizzo has to restrict $\beta$- and $\eta$-conversion to values, i.e.\
$(\lambda x. t)\,v = t[v/x]$ and $v = \lambda x . v\,x$.

\section{Programming in \Rizzo}
\label{sec:example}

In this section, we present a collection of example \Rizzo programs to
illustrate the expressiveness of the language and to compare it with
previous work. We start by implementing a small library of signal
combinators and then use some of these combinators to implement a
simple GUI library along with a small GUI application.

\subsection{FRP Library}
\label{sec:frp-library}
\begin{figure}
  \small
  \scalebox{0.95}{
\begin{minipage}{0.40\textwidth}
  \begin{hscode}\SaveRestoreHook
\column{B}{@{}>{\hspre}l<{\hspost}@{}}%
\column{9}{@{}>{\hspre}l<{\hspost}@{}}%
\column{E}{@{}>{\hspre}l<{\hspost}@{}}%
\>[B]{}\Varid{map}{}\<[9]%
\>[9]{}\mathbin{:}(\Conid{A}\to \Conid{B})\to \Sig\;\Conid{A}\to \Sig\;\Conid{B}{}\<[E]%
\\
\>[B]{}\Varid{mkSig}{}\<[9]%
\>[9]{}\mathbin{:}\DelayE\;\Conid{A}\to \DelayE\;(\Sig\;\Conid{A}){}\<[E]%
\\
\>[B]{}\Varid{const}{}\<[9]%
\>[9]{}\mathbin{:}\Conid{A}\to \Sig\;\Conid{A}{}\<[E]%
\\
\>[B]{}\Varid{sample}{}\<[9]%
\>[9]{}\mathbin{:}\Sig\;\Conid{A}\to \Sig\;\Conid{B}\to \Sig\;(\Conid{A}\times\Conid{B}){}\<[E]%
\\
\>[B]{}\Varid{zip}{}\<[9]%
\>[9]{}\mathbin{:}\Sig\;\Conid{A}\to \Sig\;\Conid{B}\to \Sig\;(\Conid{A}\times\Conid{B}){}\<[E]%
\ColumnHook
\end{hscode}\resethooks
\end{minipage}
\begin{minipage}{0.7\textwidth}
  \begin{hscode}\SaveRestoreHook
\column{B}{@{}>{\hspre}l<{\hspost}@{}}%
\column{13}{@{}>{\hspre}c<{\hspost}@{}}%
\column{13E}{@{}l@{}}%
\column{16}{@{}>{\hspre}l<{\hspost}@{}}%
\column{E}{@{}>{\hspre}l<{\hspost}@{}}%
\>[B]{}\Varid{switch}{}\<[13]%
\>[13]{}\mathbin{:}{}\<[13E]%
\>[16]{}\Sig\;\Conid{A}\to \DelayE\;(\Sig\;\Conid{A})\to \Sig\;\Conid{A}{}\<[E]%
\\
\>[B]{}\Varid{interleave}{}\<[13]%
\>[13]{}\mathbin{:}{}\<[13E]%
\>[16]{}(\Conid{A}\to \Conid{A}\to \Conid{A})\to \DelayE\;(\Sig\;\Conid{A})\to \DelayE\;(\Sig\;\Conid{A})\to \DelayE\;(\Sig\;\Conid{A}){}\<[E]%
\\
\>[B]{}\Varid{scan}{}\<[13]%
\>[13]{}\mathbin{:}{}\<[13E]%
\>[16]{}(\Conid{B}\to \Conid{A}\to \Conid{B})\to \Conid{B}\to \Sig\;\Conid{A}\to \Sig\;\Conid{B}{}\<[E]%
\\
\>[B]{}\Varid{scanAwait}{}\<[13]%
\>[13]{}\mathbin{:}{}\<[13E]%
\>[16]{}(\Conid{B}\to \Conid{A}\to \Conid{B})\to \Conid{B}\to \DelayE\;(\Sig\;\Conid{A})\to \Sig\;\Conid{B}{}\<[E]%
\\
\>[B]{}\Varid{filter}{}\<[13]%
\>[13]{}\mathbin{:}{}\<[13E]%
\>[16]{}(\Conid{A}\to \Conid{Bool})\to \DelayE\;(\Sig\;\Conid{A})\to \DelayE\;(\Sig\;\Conid{A}){}\<[E]%
\ColumnHook
\end{hscode}\resethooks
\end{minipage}}
  \vspace{-0.9em}
  \caption{Small FRP library for signal processing.}
  \label{fig:library}
\end{figure}

Fig.~\ref{fig:library} lists the type signatures of a small signal
combinator library, which we have already partly implemented in
section~\ref{sec:rizzo}. We discuss the implementation of the remaining
combinators below.

First, consider the \ensuremath{\Varid{scan}} combinator, which is a variant of the \ensuremath{\Varid{map}} combinator
where the produced output may depend on the previous value of the output
signal:
\begin{hscode}\SaveRestoreHook
\column{B}{@{}>{\hspre}l<{\hspost}@{}}%
\column{E}{@{}>{\hspre}l<{\hspost}@{}}%
\>[B]{}\Varid{scan}\mathbin{:}(\Conid{B}\to \Conid{A}\to \Conid{B})\to \Conid{B}\to \Sig\;\Conid{A}\to \Sig\;\Conid{B}{}\<[E]%
\\
\>[B]{}\Varid{scan}\;\Varid{f}\;\Varid{b}\;(\Varid{a}\mathbin{::}\Varid{as})\mathrel{=}\mathbf{let}\;\Varid{b'}\mathrel{=}\Varid{f}\;\Varid{b}\;\Varid{a}\;\mathbf{in}\;\Varid{b'}\mathbin{::}(\Varid{scan}\;\Varid{f}\;\Varid{b'}\appIE\Varid{as}){}\<[E]%
\ColumnHook
\end{hscode}\resethooks
The example in Fig.~\ref{fig:example-sigs} uses \ensuremath{\Varid{scan}} to sum up the
numbers produced by signal \ensuremath{\Varid{s}_{\mathrm{1}}}.

We can derive a variant of \ensuremath{\Varid{scan}} that takes a delayed signal as
argument:
\begin{hscode}\SaveRestoreHook
\column{B}{@{}>{\hspre}l<{\hspost}@{}}%
\column{E}{@{}>{\hspre}l<{\hspost}@{}}%
\>[B]{}\Varid{scanAwait}\mathbin{:}(\Conid{B}\to \Conid{A}\to \Conid{B})\to \Conid{B}\to \DelayE\;(\Sig\;\Conid{A})\to \Sig\;\Conid{B}{}\<[E]%
\\
\>[B]{}\Varid{scanAwait}\;\Varid{f}\;\Varid{b}\;\Varid{s}\mathrel{=}\Varid{b}\mathbin{::}(\Varid{scan}\;\Varid{f}\;\Varid{b}\appIE\Varid{s}){}\<[E]%
\ColumnHook
\end{hscode}\resethooks

The \ensuremath{\Varid{zip}} combinator has the same type as \ensuremath{\Varid{sample}}, but unlike the
latter, the signal produced by $\ensuremath{\Varid{zip}\;\Varid{as}\;\Varid{bs}}$ updates whenever \ensuremath{\Varid{as}} or
$\ensuremath{\Varid{bs}}$ updates, i.e.\ $\cl{\ensuremath{\tail\;(\Varid{zip}\;\Varid{as}\;\Varid{bs})}} = \cl{\ensuremath{\tail\;\Varid{as}}} \cup
\cl{\ensuremath{\tail\;\Varid{bs}}}$: 
\begin{hscode}\SaveRestoreHook
\column{B}{@{}>{\hspre}l<{\hspost}@{}}%
\column{3}{@{}>{\hspre}l<{\hspost}@{}}%
\column{10}{@{}>{\hspre}l<{\hspost}@{}}%
\column{31}{@{}>{\hspre}l<{\hspost}@{}}%
\column{E}{@{}>{\hspre}l<{\hspost}@{}}%
\>[B]{}\Varid{zip}\mathbin{:}\Sig\;\Conid{A}\to \Sig\;\Conid{B}\to \Sig\;(\Conid{A}\times\Conid{B}){}\<[E]%
\\
\>[B]{}\Varid{zip}\;\Varid{as}\;\Varid{bs}\mathrel{=}(\head\;\Varid{as},\head\;\Varid{bs})\mathbin{::}(\Varid{cont}\appIE\select\;(\tail\;\Varid{as})\;(\tail\;\Varid{bs})){}\<[E]%
\\
\>[B]{}\hsindent{3}{}\<[3]%
\>[3]{}\mathbf{where}\;{}\<[10]%
\>[10]{}\Varid{cont}\;(\con{left}\;\Varid{as'}){}\<[31]%
\>[31]{}\mathrel{=}\Varid{zip}\;\Varid{as'}\;\Varid{bs}{}\<[E]%
\\
\>[10]{}\Varid{cont}\;(\con{right}\;\Varid{bs'}){}\<[31]%
\>[31]{}\mathrel{=}\Varid{zip}\;\Varid{as}\;\Varid{bs'}{}\<[E]%
\\
\>[10]{}\Varid{cont}\;(\con{both}\;\Varid{as'}\;\Varid{bs'}){}\<[31]%
\>[31]{}\mathrel{=}\Varid{zip}\;\Varid{as'}\;\Varid{bs'}{}\<[E]%
\ColumnHook
\end{hscode}\resethooks
That is, at any given time, the signal \ensuremath{\Varid{zip}\;\Varid{as}\;\Varid{bs}} has the current value
$(a,b)$, where $a$ is the current value of \ensuremath{\Varid{as}} and $b$ is the current
value of $\ensuremath{\Varid{bs}}$.

If we have two \emph{delayed} signals of the same type, we can also
combine them by interleaving, so that each update on the resulting
delayed signal is an update from one of the two original delayed
signals. This idea is implemented by the \ensuremath{\Varid{interleave}} combinator,
which also takes a function that serves as a tiebreaker for when both
delayed signals update at the same time:
\begin{hscode}\SaveRestoreHook
\column{B}{@{}>{\hspre}l<{\hspost}@{}}%
\column{3}{@{}>{\hspre}l<{\hspost}@{}}%
\column{10}{@{}>{\hspre}l<{\hspost}@{}}%
\column{23}{@{}>{\hspre}l<{\hspost}@{}}%
\column{35}{@{}>{\hspre}l<{\hspost}@{}}%
\column{48}{@{}>{\hspre}l<{\hspost}@{}}%
\column{57}{@{}>{\hspre}l<{\hspost}@{}}%
\column{78}{@{}>{\hspre}l<{\hspost}@{}}%
\column{E}{@{}>{\hspre}l<{\hspost}@{}}%
\>[B]{}\Varid{interleave}\mathbin{:}(\Conid{A}\to \Conid{A}\to \Conid{A})\to \DelayE\;(\Sig\;\Conid{A})\to \DelayE\;(\Sig\;\Conid{A})\to \DelayE\;(\Sig\;\Conid{A}){}\<[E]%
\\
\>[B]{}\Varid{interleave}\;\Varid{f}\;\Varid{xs}\;\Varid{ys}\mathrel{=}\Varid{cont}\appIE\select\;\Varid{xs}\;\Varid{ys}{}\<[E]%
\\
\>[B]{}\hsindent{3}{}\<[3]%
\>[3]{}\mathbf{where}\;{}\<[10]%
\>[10]{}\Varid{cont}\;(\con{left}\;{}\<[23]%
\>[23]{}(\Varid{x}\mathbin{::}\Varid{xs'})){}\<[48]%
\>[48]{}\mathrel{=}\Varid{x}{}\<[57]%
\>[57]{}\mathbin{::}\Varid{interleave}\;\Varid{f}\;\Varid{xs'}\;{}\<[78]%
\>[78]{}\Varid{ys}{}\<[E]%
\\
\>[10]{}\Varid{cont}\;(\con{right}\;{}\<[35]%
\>[35]{}(\Varid{y}\mathbin{::}\Varid{ys'})){}\<[48]%
\>[48]{}\mathrel{=}\Varid{y}{}\<[57]%
\>[57]{}\mathbin{::}\Varid{interleave}\;\Varid{f}\;\Varid{xs}\;{}\<[78]%
\>[78]{}\Varid{ys'}{}\<[E]%
\\
\>[10]{}\Varid{cont}\;(\con{both}\;{}\<[23]%
\>[23]{}(\Varid{x}\mathbin{::}\Varid{xs'})\;{}\<[35]%
\>[35]{}(\Varid{y}\mathbin{::}\Varid{ys'})){}\<[48]%
\>[48]{}\mathrel{=}\Varid{f}\;\Varid{x}\;\Varid{y}{}\<[57]%
\>[57]{}\mathbin{::}\Varid{interleave}\;\Varid{f}\;\Varid{xs'}\;{}\<[78]%
\>[78]{}\Varid{ys'}{}\<[E]%
\ColumnHook
\end{hscode}\resethooks

\subsection{GUI Application}
\label{sec:gui-app}

\begin{figure}
  \small
\begin{hscode}\SaveRestoreHook
\column{B}{@{}>{\hspre}l<{\hspost}@{}}%
\column{17}{@{}>{\hspre}l<{\hspost}@{}}%
\column{E}{@{}>{\hspre}l<{\hspost}@{}}%
\>[B]{}\mathbf{data}\;\Conid{Button}{}\<[17]%
\>[17]{}\mathrel{=}\con{mkButton}\;(\Sig\;\Conid{String})\;(\Sig\;\Conid{Colour})\;(\Chan\;\mathrm{1}){}\<[E]%
\\
\>[B]{}\mathbf{data}\;\Conid{TextField}{}\<[17]%
\>[17]{}\mathrel{=}\con{mkTextField}\;(\Sig\;\Conid{String})\;(\Sig\;\Conid{Colour})\;(\Chan\;\Conid{String}){}\<[E]%
\\
\>[B]{}\mathbf{data}\;\Conid{Layout}{}\<[17]%
\>[17]{}\mathrel{=}\con{horizontal}\mid \con{vertical}{}\<[E]%
\\
\>[B]{}\mathbf{data}\;\Conid{Widget}{}\<[17]%
\>[17]{}\mathrel{=}\con{button}\;\Conid{Button}\mid \con{textfield}\;\Conid{TextField}\mid \con{stack}\;\Conid{Layout}\;(\Sig\;(\List\;\Conid{Widget})){}\<[E]%
\\[\blanklineskip]%
\>[B]{}\Varid{simpleButton}\mathbin{:}\Conid{String}\to \Conid{Button}{}\<[E]%
\\
\>[B]{}\Varid{simpleButton}\;\Varid{txt}\mathrel{=}\con{mkButton}\;(\Varid{const}\;\Varid{txt})\;(\Varid{const}\;\Conid{Black})\;\chan[]{}\<[E]%
\\[\blanklineskip]%
\>[B]{}\Varid{onClick}\mathbin{:}\Conid{Button}\to \DelayE\;(\Sig\;\mathrm{1}){}\<[E]%
\\
\>[B]{}\Varid{onClick}\;(\con{mkButton}\;\anonymous \;\anonymous \;\Varid{k})\mathrel{=}\Varid{mkSig}\;(\wait\;\Varid{k}){}\<[E]%
\\[\blanklineskip]%
\>[B]{}\Varid{simpleTF}\mathbin{:}\Conid{String}\to \Conid{TextField}{}\<[E]%
\\
\>[B]{}\Varid{simpleTF}\;\Varid{txt}\mathrel{=}\mathbf{let}\;\Varid{k}\mathrel{=}\chan[]\;\mathbf{in}\;\con{mkTextField}\;(\Varid{txt}\mathbin{::}\Varid{mkSig}\;(\wait\;\Varid{k}))\;(\Varid{const}\;\Conid{Black})\;\Varid{k}{}\<[E]%
\ColumnHook
\end{hscode}\resethooks
\vspace{-1.5em}
\caption{Simple GUI library.}
\label{fig:gui-lib}
\end{figure}

Fig.~\ref{fig:gui-lib} implements the interface for a very simple GUI
library. It defines a \ensuremath{\Conid{Widget}} type to represent the hierarchical
structure of a GUI consisting of two basic widget types -- buttons and
text fields -- and stack widgets that
can place several widgets next to each other (horizontally or vertically).

The attributes of widgets, such as their colour, are described using
signals, so that these may change over time. In addition, widgets may
produce events, e.g.\ when a button is pressed or the user types text
into a text field. Such events and their accompanying data are sent on
channels: A button has a channel of type \ensuremath{\Chan\;\mathrm{1}}, so
that when the button is pressed, the button's channel receives a unit
value; and the text field has a channel of type \ensuremath{\Chan\;\Conid{String}}, so that
when the text field's contents change, we receive the new contents on
that channel.

For example, we may construct a button that changes its colour from
black to red when pressed:
\begin{hscode}\SaveRestoreHook
\column{B}{@{}>{\hspre}l<{\hspost}@{}}%
\column{8}{@{}>{\hspre}l<{\hspost}@{}}%
\column{17}{@{}>{\hspre}l<{\hspost}@{}}%
\column{E}{@{}>{\hspre}l<{\hspost}@{}}%
\>[B]{}\Varid{colourButton}\mathbin{:}\Conid{Button}{}\<[E]%
\\
\>[B]{}\Varid{colourButton}\mathrel{=}{}\<[17]%
\>[17]{}\mathbf{let}\;\Varid{c}\mathrel{=}\chan[]\;\mathbf{in}\;\con{mkButton}\;(\Varid{const}\;\text{\ttfamily \char34 click~me\char34})\;(\Conid{Black}\mathbin{::}(\Varid{const}\;\Conid{Red}\mathbin{@}\wait\;\Varid{c}))\;\Varid{c}{}\<[E]%
\\[\blanklineskip]%
\>[B]{}\anonymous \mathbin{@}\anonymous {}\<[8]%
\>[8]{}\mathbin{:}\Conid{B}\to \DelayE\;\Conid{A}\to \DelayE\;\Conid{B}{}\<[E]%
\\
\>[B]{}\Varid{v}\mathbin{@}\Varid{d}\mathrel{=}(\lambda \anonymous .\Varid{v})\appIE\Varid{d}{}\<[E]%
\ColumnHook
\end{hscode}\resethooks
We define the \ensuremath{\mathbin{@}} operator so that we can delay the constant signal \ensuremath{\Varid{const}\;\Conid{Red}}
until $\ensuremath{\wait\;\Varid{c}}$ arrives.

Fig.~\ref{fig:gui-lib} also defines some helper functions to construct
simple widgets: a button with a constant text and a
fixed colour; and a text field that displays exactly the
text that the user has typed. Let's use these simple building blocks to
implement a small GUI with an `Add' button, which allows the user to add
an extra text field to the GUI with each press of the button:
\columnratio{0.4}
\begin{paracol}{2}
\begin{hscode}\SaveRestoreHook
\column{B}{@{}>{\hspre}l<{\hspost}@{}}%
\column{E}{@{}>{\hspre}l<{\hspost}@{}}%
\>[B]{}\Varid{btn}\mathbin{:}\Conid{Button}{}\<[E]%
\\
\>[B]{}\Varid{btn}\mathrel{=}\Varid{simpleButton}\;\text{\ttfamily \char34 Add\char34}{}\<[E]%
\\[\blanklineskip]%
\>[B]{}\Varid{newField}\mathbin{:}\mathrm{1}\to \Conid{Widget}{}\<[E]%
\\
\>[B]{}\Varid{newField}\;\anonymous \mathrel{=}\con{textfield}\;(\Varid{simpleTF}\;\text{\ttfamily \char34 \char34}){}\<[E]%
\ColumnHook
\end{hscode}\resethooks
\switchcolumn
\begin{hscode}\SaveRestoreHook
\column{B}{@{}>{\hspre}l<{\hspost}@{}}%
\column{E}{@{}>{\hspre}l<{\hspost}@{}}%
\>[B]{}\Varid{newWidgets}\mathbin{:}\DelayE\;(\Sig\;\Conid{Widget}){}\<[E]%
\\
\>[B]{}\Varid{newWidgets}\mathrel{=}\Varid{map}\;\Varid{newField}\appIE\Varid{onClick}\;\Varid{btn}{}\<[E]%
\\[\blanklineskip]%
\>[B]{}\Varid{allWidgets}\mathbin{:}\Sig\;(\List\;\Conid{Widget}){}\<[E]%
\\
\>[B]{}\Varid{allWidgets}\mathrel{=}\Varid{scanAwait}\;\Varid{snoc}\;[\mskip1.5mu \con{button}\;\Varid{btn}\mskip1.5mu]\;\Varid{newWidgets}{}\<[E]%
\ColumnHook
\end{hscode}\resethooks
\end{paracol}
\vspace{-0.7em}
\begin{hscode}\SaveRestoreHook
\column{B}{@{}>{\hspre}l<{\hspost}@{}}%
\column{E}{@{}>{\hspre}l<{\hspost}@{}}%
\>[B]{}\Varid{gui}\mathbin{:}\Conid{Widget}{}\<[E]%
\\
\>[B]{}\Varid{gui}\mathrel{=}\con{stack}\;\con{vertical}\;\Varid{allWidgets}{}\<[E]%
\ColumnHook
\end{hscode}\resethooks
This example illustrates the highly dynamic structure that is enabled
by the nesting of signals in the \ensuremath{\Conid{Widget}} type. The GUI's behaviour is
completely described by $\ensuremath{\Varid{gui}}$ of type \ensuremath{\Conid{Widget}}: It describes the
tree-structured hierarchy of widgets that are supposed to be presented
to the user, and how this hierarchy of widgets is supposed to change
in response to input from the user. The user's interaction with the
GUI produces values that are sent on the channels associated with individual
widgets. In particular, each time the button \ensuremath{\Varid{btn}} is pressed, an
event occurs on \ensuremath{\Varid{btn}}'s channel, which causes the delayed signal
\ensuremath{\Varid{newWidgets}\mathbin{:}\DelayE\;(\Sig\;\Conid{Widget})} to produce a new text field widget. In
turn, this delayed signal of new widgets is collected by \ensuremath{\Varid{allWidgets}}
into a signal of all widgets that have been created starting with the
button \ensuremath{\Varid{btn}}. To this end, we use \ensuremath{\Varid{scanAwait}} to add any new widget
from \ensuremath{\Varid{newWidgets}} to the initial list of widgets \ensuremath{[\mskip1.5mu \con{button}\;\Varid{btn}\mskip1.5mu]} via
\ensuremath{\Varid{snoc}\mathbin{:}\List\;\Conid{A}\to \Conid{A}\to \List\;\Conid{A}} that appends an element to the end of a
list.

To demonstrate the highly dynamic nature of widgets enabled by their
nested-signal structure, we revise the above example GUI to also allow
the removal of widgets. To this end, we change the definition of the
$\ensuremath{\Varid{newField}}$ function so that in addition to a text field, it also
adds a `Remove' button. In turn, the `Remove' button allows the user
to remove the text field and the `Remove' button itself.
\begin{hscode}\SaveRestoreHook
\column{B}{@{}>{\hspre}l<{\hspost}@{}}%
\column{15}{@{}>{\hspre}l<{\hspost}@{}}%
\column{19}{@{}>{\hspre}l<{\hspost}@{}}%
\column{29}{@{}>{\hspre}l<{\hspost}@{}}%
\column{37}{@{}>{\hspre}l<{\hspost}@{}}%
\column{E}{@{}>{\hspre}l<{\hspost}@{}}%
\>[B]{}\Varid{newField}\mathbin{:}\mathrm{1}\to \Conid{Widget}{}\<[E]%
\\
\>[B]{}\Varid{newField}\;\anonymous \mathrel{=}{}\<[15]%
\>[15]{}\mathbf{let}\;\Varid{remove}{}\<[29]%
\>[29]{}\mathrel{=}\Varid{simpleButton}\;\text{\ttfamily \char34 Remove\char34}\;\mathbf{in}{}\<[E]%
\\
\>[15]{}\mathbf{let}\;\Varid{tfAndBtn}{}\<[29]%
\>[29]{}\mathrel{=}[\mskip1.5mu \con{textfield}\;(\Varid{simpleTF}\;\text{\ttfamily \char34 \char34}),\con{button}\;\Varid{remove}\mskip1.5mu]{}\<[E]%
\\
\>[15]{}\mathbf{in}\;{}\<[19]%
\>[19]{}\con{stack}\;\con{horizontal}\;{}\<[37]%
\>[37]{}(\Varid{tfAndBtn}\mathbin{::}(\Varid{const}\;\con{nil}\mathbin{@}\Varid{onClick}\;\Varid{remove})){}\<[E]%
\ColumnHook
\end{hscode}\resethooks
The function adds a stack containing a text field and a `Remove'
button, but as soon as the `Remove' button is clicked, the stack
contents are changed to the empty list of widgets \ensuremath{\con{nil}}.

\subsection{Comparison to Async RaTT}
\label{sec:async-ratt}

As mentioned in the introduction, \Rizzo takes inspiration from the
Async RaTT calculus of \citet{bahr23AsynchronousModalFRP} but is
simpler and more expressive. With the help of the examples in the
previous sections, we can explore the differences in more detail.

\subsubsection{Expressiveness}
\label{sec:expressiveness}
Like \Rizzo, Async RaTT also features two later modalities (\ensuremath{\DelayE} and
\ensuremath{\DelayA}) and a notion of clocks. However, in order to obtain guarantees
about space leaks, Async RaTT also features the stable modality \ensuremath{\Box}
and a Fitch-style type system that restricts the scope of variables
unless they are of a stable type. A type $A$ is stable if all
occurrences of $\to$, \ensuremath{\Sig}, and \ensuremath{\DelayE} in $A$ are guarded by \ensuremath{\Box}. As
a result, functions, signals, and delayed computations cannot be moved
across time. For example, \ensuremath{\Varid{map}} is of type \ensuremath{\Box(\Conid{A}\to \Conid{B})\to \Sig\;\Conid{A}\to \Sig\;\Conid{B}} in Async RaTT, i.e.\ the function must be boxed with the \ensuremath{\Box}
modality so that \ensuremath{\Varid{map}} can use it at any time in the future.
This limits what kind of functions we may give \ensuremath{\Varid{map}}. For example, the
function we pass to \ensuremath{\Varid{map}} may not contain a signal in its closure, as
it does in our definition of \ensuremath{\Varid{sample}} in \Rizzo. 

All combinators in Fig.~\ref{fig:library} apart from \ensuremath{\Varid{filter}} and
\ensuremath{\Varid{sample}} are expressible in Async RaTT as well, but most of them with
additional restrictions: All function arguments have to be boxed with \ensuremath{\Box}
similarly to the function argument to \ensuremath{\Varid{map}}. Additionally, \ensuremath{\Varid{zip}}
requires \ensuremath{\Conid{A}} and \ensuremath{\Conid{B}} to be stable, and \ensuremath{\Varid{scan}} and \ensuremath{\Varid{scanAwait}} require
\ensuremath{\Conid{B}} to be stable. This means that these combinators cannot be used when
dealing with nested signals, which are ubiquitous when implementing
GUIs as we have seen with the \ensuremath{\Conid{Widget}} type in
section~\ref{sec:gui-app}. Indeed, our use of \ensuremath{\Varid{scanAwait}} to implement
\ensuremath{\Varid{allWidgets}} would not be allowed in Async RaTT as \ensuremath{\List\;\Conid{Widget}} is
not a stable type.

The combinators \ensuremath{\Varid{sample}} and \ensuremath{\Varid{filter}} are not expressible in Async
RaTT at all due to two fundamental differences in the notion of
clocks: First, \Rizzo clocks only indicate timing dependencies but not
data dependencies, while Async RaTT clocks appear to indicate both
timing and data dependencies. For example, $\ensuremath{\Varid{sample}\;\Varid{xs}\;\Varid{ys}}$ has a data
dependency on both $\ensuremath{\Varid{xs}}$ and $\ensuremath{\Varid{ys}}$ as the values produced by
$\ensuremath{\Varid{sample}\;\Varid{xs}\;\Varid{ys}}$ depend on the values produced by $\ensuremath{\Varid{xs}}$ and $\ensuremath{\Varid{ys}}$.
However, $\ensuremath{\Varid{sample}\;\Varid{xs}\;\Varid{ys}}$ has a timing dependency on only $\ensuremath{\Varid{xs}}$ as
$\ensuremath{\cl{\tail\;(\Varid{sample}\;\Varid{xs}\;\Varid{ys})}\mathrel{=}\cl{\tail\;\Varid{xs}}}$. Async RaTT's apparent conflation of
timing and data dependencies means that Async RaTT cannot express
dataflow graphs where data and timing dependencies do not coincide.
Second, \Rizzo has a more expressive notion of clocks, which consist
of channels (just like Async RaTT) and partial signals (not available
in Async RaTT clocks). The latter enables the implementation of \ensuremath{\Varid{filter}}.

The less restrictive type system of \Rizzo also makes some
programs more modular and easier to express than in Async RaTT. For
example, consider the following combinator in \Rizzo:
\begin{hscode}\SaveRestoreHook
\column{B}{@{}>{\hspre}l<{\hspost}@{}}%
\column{E}{@{}>{\hspre}l<{\hspost}@{}}%
\>[B]{}\Varid{switchAt}\mathbin{:}\Sig\;\Conid{C}\to \Sig\;\Conid{C}\to \DelayE\;\mathrm{1}\to \Sig\;\Conid{C}{}\<[E]%
\\
\>[B]{}\Varid{switchAt}\;\Varid{xs}\;\Varid{ys}\;\Varid{d}\mathrel{=}\Varid{switch}\;\Varid{xs}\;(\Varid{ys}\mathbin{@}\Varid{d}){}\<[E]%
\ColumnHook
\end{hscode}\resethooks
It produces a signal that first behaves like \ensuremath{\Varid{xs}} but switches to \ensuremath{\Varid{ys}}
as soon as \ensuremath{\cl{\Varid{d}}} ticks. While \ensuremath{\Varid{switchAt}} can be implemented in Async
RaTT, it is restricted to stable types \ensuremath{\Conid{C}}. Moreover, the Async RaTT
implementation is much more complicated: Async RaTT can express \ensuremath{\mathbin{@}\mathbin{:}\Conid{A}\to \DelayE\;\Conid{B}\to \DelayE\;\Conid{A}}, but only if \ensuremath{\Conid{A}} is stable, and thus \ensuremath{\mathbin{@}} cannot be
used for \ensuremath{\Varid{ys}\mathbin{:}\Sig\;\Conid{C}}, even if \ensuremath{\Conid{C}} is stable. Therefore, \ensuremath{\Varid{switchAt}}
has to be implemented as a guarded recursive function that performs a
three-way synchronisation of \ensuremath{\tail\;\Varid{xs}}, \ensuremath{\tail\;\Varid{ys}}, and \ensuremath{\Varid{d}} that
requires complicated nested uses of \ensuremath{\select} (see
Appendix~\ref{sec:switchAt} in the supplementary material). This is an
example of a more general pattern often encountered in Async RaTT: By
disallowing direct access to signals in any future computational
context (e.g.\ access to \ensuremath{\Varid{ys}\mathbin{:}\Sig\;\Conid{A}} when \ensuremath{\Varid{d}\mathbin{:}\DelayE\;\mathrm{1}} has ticked),
Async RaTT forces us to manually keep track of the updates of each
signal we want to access in a future context.  In general, keeping
track of $n$ such signals can lead to $O(2^n)$ case distinctions due
to nested uses of \ensuremath{\select} and also restricts signals to stable
payload types, which in turn precludes some use cases like GUIs as
mentioned above.

Finally, while Async RaTT does not feature \Rizzo's first-class
channels, it would not be difficult to add them. Likewise, Async RaTT
distinguishes between push and pull channels, which would
not be difficult to add to \Rizzo. Async RaTT features general guarded
recursive types, while \Rizzo only has \ensuremath{\Sig}. However, Async RaTT's
operational semantics only produces outputs for signals (whereas
\Rizzo produces outputs for any type), and in fact none of the
examples presented by \citet{bahr23AsynchronousModalFRP} use a guarded
recursive type other than $\Sig$.

\subsubsection{Preventing Space Leaks}

The difference in expressiveness between \Rizzo and Async RaTT is due
to their differing strategies to avoid space leaks. As discussed
above, Async RaTT requires the \ensuremath{\Varid{map}} function to take a boxed
function. Otherwise, Async RaTT could express the following:
\begin{hscode}\SaveRestoreHook
\column{B}{@{}>{\hspre}l<{\hspost}@{}}%
\column{E}{@{}>{\hspre}l<{\hspost}@{}}%
\>[B]{}\Varid{addHead}\mathbin{:}\Sig\;\Nat\to \Sig\;\Nat{}\<[E]%
\\
\>[B]{}\Varid{addHead}\;\Varid{xs}\mathrel{=}\Varid{map}\;(\lambda \Varid{x}.\Varid{x}\mathbin{+}\head\;\Varid{xs})\;\Varid{xs}{}\<[E]%
\ColumnHook
\end{hscode}\resethooks
If allowed in Async RaTT, \ensuremath{\Varid{addHead}} would have a space leak, because it has
to keep the entire history of the input signal \ensuremath{\Varid{xs}} in memory. Each
time a new number is received on the input signal $\ensuremath{\Varid{xs}}$, that
number has to be stored indefinitely. The problem is that \ensuremath{\Varid{map}}
recursively traverses the signal $\ensuremath{\Varid{xs}}$, i.e.\ it traverses forward in
the signal as new inputs arrive, while the function closure for $\ensuremath{\lambda \Varid{x}.\Varid{x}\mathbin{+}\head\;\Varid{xs}}$ contains $\ensuremath{\Varid{xs}}$, which points to the
very beginning of the entire history of the input signal. To
illustrate this, let's assume an input
signal $\ensuremath{\Varid{xs}}$ of consecutive numbers starting from $0$. After
receiving $7$ values on $\ensuremath{\Varid{xs}}$, the memory
representation of the input signal would look like this:
\begin{center}
  \begin{tikzpicture}[
  cell/.style={
    rectangle split, rectangle split horizontal, rectangle split parts=2,
    draw, minimum width=5mm, minimum height=4.5mm, 
    font=\small, align=center
  },
  ptr/.style={-Stealth, thick},
  node distance=10mm and 8mm
]

\node[cell] (n0)  {0\nodepart{two} $\bullet$};
\node[cell] (n1)  [right=of n0] {1\nodepart{two}$\bullet$};
\node[cell] (n2)  [right=of n1] {2\nodepart{two}$\bullet$};
\node[cell] (n3)  [right=of n2] {3\nodepart{two}$\bullet$};
\node[cell] (n4)  [right=of n3] {4\nodepart{two}$\bullet$};
\node[cell] (n5)  [right=of n4] {5\nodepart{two}$\bullet$};
\node[cell] (n6)  [right=of n5] {6\nodepart{two}$\bullet$};

\foreach \i/\j in {n0/n1,n1/n2,n2/n3,n3/n4,n4/n5,n5/n6}
  \draw ($0.5*(\i.two east) + 0.5*(\i.two split)+(0,0.012)$) edge[ptr,bend left=20] ($(\j.west)+(0,0.012)$);

\begin{scope}[every node/.style={inner sep=0.15em}]
  \node[node distance=2em,below=of n2] (map) {$\ensuremath{\Varid{map}\;(\lambda \Varid{x}.\Varid{x}\mathbin{+}\head}$};
  \node[anchor=base west,at=(map.base east)] (xs) {$xs$};
  \node[anchor=base west,at=(xs.base east)] (p) {$\!\!)$};
  \node[anchor=base west,at=(p.base east)] (xs') {$\ensuremath{\Varid{xs}_{\mathrm{6}}}$};
\end{scope}
  \draw[thick,-Stealth] 
    (xs) edge[out=130,in=-50] (n0)
    (xs') edge[out=70,in=-130] (n6);
\end{tikzpicture}
\vspace{-0.7em}
\end{center}

The recursively defined \ensuremath{\Varid{map}} function traverses the signal $\ensuremath{\Varid{xs}}$ by
one step each time $\ensuremath{\Varid{xs}}$ receives a new value. After $\ensuremath{\Varid{xs}}$ has
received the number $6$, we are at the $6$th recursive call of \ensuremath{\Varid{map}}
(illustrated above) on the signal $\ensuremath{\Varid{xs}_{\mathrm{6}}}$ that starts with the number
$6$. However, the variable $\ensuremath{\Varid{xs}}$ in the function closure still points
to the initial signal starting at $0$. Hence, we have to keep the
entire prefix of the signal from \ensuremath{\mathrm{0}} to \ensuremath{\mathrm{6}} in memory. The longer the
program runs, the more memory it will use. 

Async RaTT avoids the above situation by rejecting \ensuremath{\Varid{addHead}} as
ill-typed. By contrast, \Rizzo avoids this situation by choosing a
different semantics that does not allow signals to store their past.
If \ensuremath{\Varid{addHead}} were implemented in \Rizzo, \ensuremath{\Varid{xs}} would instead
point at the same signal as \ensuremath{\Varid{xs}_{\mathrm{6}}}.

\section{Operational Semantics and Operational Properties}
\label{sec:operational-semantics}

The purpose of this section is to give a precise account of the
operational guarantees provided by \Rizzo. To this end, we first give
a formal definition of the operational semantics of \Rizzo and then
state the operational guarantees in terms of this formal operational
semantics. Specifically, we give a formal account of the
\emph{reactive evaluation semantics} $\reacts\tau$ that we used
informally in section~\ref{sec:rizzo}. Given a closed \Rizzo term
$\hastype{}{t}{A}$, a sequence of events $\tau$ causes the pair
$\state{t;\Delta}$ to evaluate to the pair $\state{v;\Delta'}$,
denoted $\state{t;\Delta}\reacts\tau \state{v;\Delta'}$. Since $t$ may
allocate channels via $\chan$, we may obtain a new channel context
$\Delta'$ that extends the original $\Delta$. When the channel context
is not relevant (e.g.\ when $t$ does not use $\chan$), we write $t
\reacts\tau v$ to mean that $\state{t;\Delta}\reacts\tau
\state{v;\Delta'}$ for some $\Delta'$.

We give an operational semantics with the following three
characteristics: First, it has \emph{by construction} no space leaks,
i.e.\ old inputs are not kept in memory implicitly. Second, it
computes its result incrementally, i.e.\ given
$\state{t;\Delta}\reacts{\tau} \state{v;\Delta'}$ and an additional
event $e$, the operational semantics effectively computes
$\state{t;\Delta}\reacts{\tau,e} \state{w;\Delta''}$ from
$\state{v;\Delta'}$ and $e$ alone. Third, the operational
semantics computes $w$ by performing in-place updates of the values
that changed within $v$.

To obtain such an operational semantics, we define a machine that
represents the value $v$ produced by $\reacts{\tau}$ as a pair
$\state{p;\eta}$, where $p$ is a \emph{machine value} that represents
each signal in $v$ as a pointer to a location $l$ in the \emph{heap}
$\eta$ that stores the signal. These machine values are defined as
follows:
\[
    \arraycolsep=2pt
    \begin{array}{llcl} 
&p,q &::= & \unit \mid \lambda x.t \mid \pair{p}{q} \mid \interm_i\, p \mid  l \mid\tail\,l\mid
\kappa\mid \wait\,\kappa\mid \watch\,l \mid p \appAE q \mid\delay\, t\\
&&\mid & \never \mid
\select\,p\,q\mid\cons[\mu\alpha.F]\,q
\end{array}
\]
That is, machine values $p,q$ only differ from values $v,w$ (as defined in
Fig.~\ref{fig:syntax}) in the inclusion of $l$ and $\tail\, l$ and the
exclusion of $v ::_A w$. The intuition is that if the heap $\eta$
stores the representation of a signal $v ::_A w$ at $l$, then the
machine value $l$ represents the signal $v ::_A w$ and the machine
value $\tail\,l$ represents the delayed signal $w$.

Using this representation of values, we define a machine that performs
the reactive evaluation $\state{t;\Delta}\reacts{\tau}
\state{v;\Delta'}$ in incremental steps, with each such step
performing an update in response to a single event in $\tau$. Given
$\tau = e_0,e_1,\dots,e_{n-1}$, the machine first performs an
initialisation step $\initstep$ and then successively performs a
reactive step $\fullstep{e_i}$ for each event $e_i$:
\[
\state{t;\Delta} \initstep
\state{p;\eta_0\sep\Delta_0} \fullstep{e_0} \state{p;\eta_1\sep\Delta_1} 
\fullstep{e_1}\quad \dots\quad \fullstep{e_{n-1}} \state{p;\eta_{n}\sep\Delta_{n}} 
\]
The machine's state $\state{p;\eta_i\sep\Delta_i}$ consists of the
value $p$ that $t$ has been evaluated to, a heap $\eta_i$ that stores
signals, and a channel context $\Delta_i$. The value $p$ contains
references into the heap $\eta_i$ and does not change after the
initialisation step. In particular, the machine evaluates terms of the
form \ensuremath{\Varid{s}\mathbin{::}_{\Conid{A}}\Varid{s'}} by evaluating the terms $s$ and $s'$ to values $q$ and
$q'$, respectively, and then returning a reference $l$ that points to
a location in $\eta_i$ where the signal \ensuremath{\Varid{q}\mathbin{::}_{\Conid{A}}\Varid{q'}} is stored. Signals
are updated in place since the machine only changes the heap $\eta_i$.
Each pair $\state{p;\eta_i}$ represents the value $v_i$ defined by
$v_i = p[\eta_i]$, which replaces each reference $l$ in $p$ with the
signal stored at $l$ in $\eta_i$. In particular,
$\state{t;\Delta}\reacts{\tau} \state{p[\eta_n];\Delta_n}$. In the
sections that follow, we give a more precise definition of the
reactive evaluation semantics $\reacts{\tau}$ and the underlying
machine.

\subsection{Evaluation Semantics}
\label{sec:evaluation-semantics}

\begin{figure}[t]
  \small
 \textsc{Evaluation semantics \quad $\heval{t}{\varepsilon}{p}{\varepsilon'}$}\\
\begin{mathpar}
  \mprset{andskip=1em}
  \mprset{sep=1em}
  \inferrule*%
  {~}%
  {\heval{p}{\varepsilon}{p}{\varepsilon}}%
  \and%
  \inferrule*%
  {\heval {s} {\varepsilon} {p} {\varepsilon'}\\
    \heval {t} {\varepsilon'} {q} {\varepsilon''}}%
  {\heval {\pair{s}{t}} {\varepsilon} {\pair{p}{q}} {\varepsilon''}}%
  \and%
  \inferrule*%
  {\heval {t} {\varepsilon} {\pair{p_1}{p_2}} {\varepsilon'} \\ i \in \{1,2\}}%
  {\heval {\pi_i\,t} {\varepsilon} {p_i} {\varepsilon'}}%
  \and%
  \inferrule*%
  {\heval t {\varepsilon} p {\varepsilon'} \\ i \in \{1,2\}}%
  {\heval {\interm_i\,t} {\varepsilon} {\interm_i\,p} {\varepsilon'}}%
  \and%
  \inferrule*%
  {\heval {t} {\varepsilon} {\interm_i\,p} {\varepsilon'}\\
    \heval {t_i[p/x]} {\varepsilon'} {p_i} {\varepsilon''} \\ i \in \{1,2\}}%
  {\heval {\caseterm{t}{x}{t_1}{x}{t_2}} {\varepsilon} {p_i} {\varepsilon''}}%
  \and%
  \inferrule*%
  {\heval{t}{\varepsilon}{\lambda x.s}{\varepsilon'} \\
    \heval{t'}{\varepsilon'}{p}{\varepsilon''}\\
    \heval {s[p/x]}{\varepsilon''}{q}{\varepsilon'''}}%
  {\heval{t\,t'}{\varepsilon}{q}{\varepsilon'''}}%
  \and%
  \inferrule*%
  {\heval{t}{\varepsilon}{p}{\varepsilon'}}%
  {\heval{\cons[\mu\,\alpha.F] \, t}{\varepsilon}{\cons[\mu\,\alpha.F] \, p}{\varepsilon'}}%
  \and%
  \inferrule*%
  {\heval{t}{\varepsilon}{\cons[\mu\,\alpha.F]\,p}{\varepsilon'}\\\heval{\fmap{F}\,(\lambda\,y.\pair
    y {\rec{x}{s}{y}})\,p}{\varepsilon'}{q}{\varepsilon''}
  \\\heval{s[q/x]}{\varepsilon''}{q'}{\varepsilon'''}}%
  {\heval{\rec{x}{s}{t}}{\varepsilon}{q'}{\varepsilon'''}}%
  \and%
  \inferrule*%
  {\heval
    {s}{\varepsilon} {\delay\,s'}{\varepsilon'}\\ \heval{t}{\varepsilon'}{\delay\, t'}{\varepsilon''}}%
  {\heval {s \appA t}{\varepsilon}{\delay\,(s'\,t')}{\varepsilon''}}%
  \and%
  \inferrule*%
  {\heval {s} {\varepsilon} {p} {\varepsilon'}\\
    \heval {t} {\varepsilon'} {q} {\varepsilon''}}%
  {\heval {s \appAE t} {\varepsilon} {p \appAE q} {\varepsilon''}}%
  \and%
  \inferrule*%
  {\heval{t}{\varepsilon}{\kappa}{\varepsilon'}}%
  {\heval{\wait \, t}{\varepsilon}{\wait \, \kappa}{\varepsilon'}}%
  \and%
  \inferrule*%
  {\heval{t}{\varepsilon}{l}{\varepsilon'}}%
  {\heval{\watch \, t}{\varepsilon}{\watch \, l}{\varepsilon'}}%
    \and
  \inferrule*%
  {\heval {s} {\varepsilon} {p} {\varepsilon'}\\
    \heval {t} {\varepsilon'} {q} {\varepsilon''}}%
  {\heval {\select\,{s}\,{t}} {\varepsilon} {\select\,{p}\,{q}} {\varepsilon''}}%
  \and%
  \inferrule*%
  {\heval {t} {\varepsilon} {l} {\varepsilon'}}%
  {\heval {\tail\, t} {\varepsilon} {\tail\,l } {\varepsilon'}}%
  \and%
  \inferrule*%
  {\heval {t} {\varepsilon} {l} {\eta_N\tick\eta_E\sep\Delta}\\\eta_N(l) = \sig p q }%
  {\heval {\head\, t} {\varepsilon} {p} {\eta_N\tick\eta_E\sep\Delta}}%
  \and%
  \inferrule*%
  {\kappa = \allocate{\Delta}}%
  {\heval{\chan}{\eta_N\tick\eta_E\sep\Delta}{\kappa}{\eta_N\tick\eta_E\sep\Delta,\kappa \chantype A}}%
  \and%
  \inferrule*%
  {\heval {s} {\varepsilon} {p} {\varepsilon'}\\
    \heval {t} {\varepsilon'} {q} {\eta_N\tick\eta_E\sep\Delta}\\
    l = \allocate{\eta_N\tick\eta_E}}%
  {\heval {s ::_A t} {\varepsilon} {l} {\eta_N,l \sigtype A \mapsto \sig[\unchanged] p q\tick\eta_E\sep\Delta}}%
  \and%
    \inferrule*%
  {\heval
    {t[\delay\,(\fix\,x.t)/x]}{\varepsilon} {p}{\varepsilon'}}%
  {\heval {\fix\,x. t}{\varepsilon}{p}{\varepsilon'}}%
\end{mathpar}
~\\
\text{The term $\fmap{F}\; \vec f : F[\vec A/\vec \alpha] \to F[\vec
B/\vec\alpha]$ takes $n$ functions $\vec f = f_1,\dots,f_n$ where each
$f_i : A_i \to B_i$:}\\[0.5em]
\begin{minipage}{0.34\textwidth}
\begin{align*}
  &\begin{aligned}
    &\fmap{\alpha_i}& \hspace{-1em}\vec f\, x &= f_i\; x\\
    &\fmap{C \to F}&\hspace{-1em}\vec f\, x &= \lambda y.\,\fmap{F}\,\vec f\,(x\,y)\\
    &\fmap{\Sig\,F}&\hspace{-1em} \vec f\, x &= \ensuremath{\Varid{map}}\, (\fmap{F}\,\vec f)\,x\\
  \end{aligned}
\end{align*}
\end{minipage}
\begin{minipage}{0.65\textwidth}
\begin{align*}
  &\begin{aligned}
    &\fmap{C}& \, \vec f\, x &= x\quad \text{if }C \text{ is of the form } \Unit,\DelayE D, \DelayA D, \Chan\, D\\
    &\fmap{F \times G}& \vec f\, x &= \pair{\fmap F\,\vec f\, (\pi_1\,x)}{\fmap G\,\vec f\, (\pi_2\,x)}\\
    &\fmap{F + G}&\vec f\, x &= \caseterm x y
    {\interm_1(\fmap F \, \vec f\, y)} z {\interm_2(\fmap G
      \, \vec f\, z)}\\
    &\fmap{\mu\alpha_{n+1}. F}\hspace{-10em}&\vec f\, x &= \rec{z}{\cons[{\mu\alpha_{n+1}.F[\vec
B/\vec\alpha]}]\,(\fmap{F}\,\vec f\,(\lambda x . \pi_2\,x)\,z)}{x}.
  \end{aligned}
\end{align*}

\end{minipage}
  \vspace{-0.5em}
  \caption{Evaluation semantics.}
  \label{fig:machine}
\end{figure}

The machine performs a step $\state{p;\eta\sep\Delta} \fullstep{e}
\state{p;\eta'\sep\Delta'}$ in reaction to an event $e$ and updates
the signals stored in the heap $\eta$ resulting in a new heap $\eta'$.
Along the way, the machine needs to be able to evaluate a term $t$ to a
value $p$. To do so, the machine needs to keep track of which part of the
heap has already been updated and which part still needs updating.
Hence, when evaluating a term, the machine uses a \emph{store} $\sigma$ that
consists of two heaps: a \emph{\now heap} $\eta_N$ that stores signals
that have already been updated and an \emph{\earlier heap} $\eta_E$
that still requires updating. We write $\eta_N\tick \eta_E$ to denote
a store with \now heap $\eta_N$ and \earlier heap $\eta_E$. The
\emph{evaluation semantics}, denoted
$\heval{t}{\varepsilon}{p}{\varepsilon'}$ and defined in
Fig.~\ref{fig:machine}, describes how the machine evaluates a term $t$
to a value $p$. This evaluation semantics uses an \emph{environment}
$\varepsilon$ that consists of a store $\sigma$ and a channel context
$\Delta$, written $\sigma\sep\Delta$.

A \emph{heap} is a sequence of assignments of the form $l \sigtype A
\mapsto \sig p q$, each of which maps a \emph{heap location} $l$ of
type $\ensuremath{\Sig\;\Conid{A}}$ to a \emph{stored signal} $\sig p q$ consisting of a
head value $p$ of type $A$ and a tail value $q$ of type $\DelayE
(\Sig\, A)$. In addition, the stored signal contains a flag $U$
indicating whether the signal has been updated in the current $\fullstep{e}$
step. This flag $U$ has no bearing on the evaluation semantics, but it
is important for the step semantics of the machine discussed in
section~\ref{sec:step-semantics}. We write $\dom{\eta}$ for the domain
of a heap $\eta$, i.e.\ the set of heap locations stored in $\eta$; and
we write $\dom{\sigma}$ for the domain of a store $\sigma$, i.e.\
$\dom{\eta_N\tick \eta_E} =\dom{\eta_N} \cup \dom{\eta_E}$.

Intuitively speaking, signals stored in the \now heap $\eta_N$ are up
to date and can be safely dereferenced, whereas signals in the
\earlier heap $\eta_E$ are stale and need updating before they can be
dereferenced. The machine maintains the invariant that, for each store
$\eta_N\tick\eta_E$, we have $\dom{\eta_N} \cap \dom{\eta_E} =
\emptyset$, i.e.\ for each heap location $l$, there is at most one
mapping for $l$ in $\eta_N\tick\eta_E$. In order to allocate fresh
channels and heap locations, we assume a function $\allocate{\cdot}$
so that $\allocate{\Delta}$ produces a fresh $\kappa\nin
\dom{\Delta}$, and $\allocate{\sigma}$ produces a fresh $l \nin
\dom{\sigma}$.

The evaluation semantics of the lambda calculus fragment of \Rizzo is
the standard call-by-value semantics and makes up the top half of the
rules in Fig.~\ref{fig:machine}. This includes the semantics for
product, sum, function, and recursive types. The semantics of the
primitive recursion combinator $\rec*$ on recursive types $\mu \alpha.
F$ is defined using a term $\fmap F : (A \to B) \to F[A/\alpha]\to
F[B/\alpha]$ where $\istype{\alpha}{F}$. Since we allow nested
recursive types, $\fmap F$ has to be defined more generally for
$\istype{\alpha_1,\dots,\alpha_n}{F}$ as shown at the bottom of
Fig.~\ref{fig:machine}.


The primitives that produce values in the \ensuremath{\DelayE} modality -- i.e.\ \ensuremath{\appAE},
\ensuremath{\wait}, \ensuremath{\watch}, \ensuremath{\select}, and \ensuremath{\tail} -- behave similarly to strict
constructors of an algebraic data type: They eagerly evaluate their
arguments to values and produce values of the form \ensuremath{\Varid{p}\appAE\Varid{q}}, \ensuremath{\wait\;\kappa}, \ensuremath{\watch\;\Varid{l}}, \ensuremath{\select\;\Varid{p}\;\Varid{q}}, and \ensuremath{\tail\;\Varid{l}}, respectively. Such values
of type \ensuremath{\DelayE\;\Conid{A}} are then later evaluated to values of type \ensuremath{\Conid{A}} by the
\emph{advance semantics} of the machine presented in
section~\ref{sec:step-semantics}. Note in particular that the
evaluation semantics does \emph{not} evaluate  \ensuremath{\tail\;\Varid{l}} further by
looking up the tail of $l$ stored in the \now heap. Instead, the
semantics relies on the fact that the machine's advance semantics will
evaluate \ensuremath{\tail\;\Varid{l}} to \ensuremath{\Varid{l}} once the signal at $l$ itself has already
been updated to its new value by the \emph{update semantics} of the
machine (also presented in section~\ref{sec:step-semantics}). This
ensures that the delayed computation stored in the tail of a signal is
only performed once and at the appropriate time, namely when its clock has ticked.

By contrast, \ensuremath{\head\;\Varid{t}} does indeed look up the current value of the
signal referred to by \ensuremath{\Varid{t}} and returns it. The signal constructor $::_A$
simply evaluates its arguments and stores them as a stored signal at a
freshly allocated location on the \now heap; and $\ensuremath{\chan[\Conid{A}]}$ returns a
freshly allocated channel of type \ensuremath{\Chan\;\Conid{A}}. Finally, \ensuremath{\fix} has the
standard guarded fixed point semantics: \ensuremath{\fix\;\Varid{x}.\Varid{t}} evaluates \ensuremath{\Varid{t}} with
\ensuremath{\Varid{x}} replaced by \ensuremath{\delay\;(\fix\;\Varid{x}.\Varid{t})}. The latter is a value and thus
does not evaluate further itself, which (in concert with the type
system) ensures termination of the evaluation semantics.

\subsection{Step Semantics of the Machine and Reactive Evaluation Semantics}
\label{sec:step-semantics}

\begin{figure}
  \small
   \textsc{Advance semantics \quad $\evalAdv{p}{\varepsilon}{q}{\varepsilon'}$}\\
  \begin{mathpar}
    \inferrule*%
    {\evalAdv{p_1}{\varepsilon}{p_2}{\varepsilon'}\\\heval{t\,p_2}{\varepsilon'}{q}{\varepsilon''}}%
    {\evalAdv{\delay\,t \appAE p_1}{\varepsilon}{q}{\varepsilon''}}%
    \and
    \inferrule*%
    {\eta_N(l) = \sig[\updated]{\interm_1\,p_1}{p_2}}
    {\evalAdv{\watch\,l}{\eta_N\tick\eta_E\sep\Delta}{p_1}{\eta_N\tick\eta_E\sep\Delta}}%
    \and%
    \inferrule*%
    {\heval{t}{\varepsilon}{q}{\varepsilon'}}
    {\evalAdv[\kappa\mapsto t]{\wait\,\kappa}{\varepsilon}{q}{\varepsilon'}}%
    \and%
    \inferrule*%
    {\evalAdv[\kappa\mapsto t]{p_i}{\varepsilon}{q}{\varepsilon'}\\
    \ticked{\varepsilon}{p_i}\\ \neg\ticked{\varepsilon}{p_{3-i}}}
    {\evalAdv[\kappa \mapsto t]{\select\,p_1\,p_2}{\varepsilon}{\interm_1\,(\interm_i\,q)}{\varepsilon'}}%
    \and
    \inferrule*%
    {~}
    {\evalAdv{\tail\,l}{\varepsilon}{l}{\varepsilon}}%
    \and
    \inferrule*%
    {\evalAdv[\kappa\mapsto t]{p_1}{\varepsilon}{q_1}{\varepsilon'}\\
    \evalAdv[\kappa\mapsto t]{p_2}{\varepsilon'}{q_2}{\varepsilon''}\\\\
    \ticked{\varepsilon}{p_1}\\ \ticked{\varepsilon}{p_2}}
    {\evalAdv[\kappa \mapsto t]{\select\,p_1\,p_2}{\varepsilon}{\interm_2\,\pair{q_1}{q_2}}{\varepsilon''}}%
  \end{mathpar}
  ~\\[0.2em]
  \textsc{Update semantics \quad $\state{\varepsilon}\partstep{e} \state{\varepsilon'}$}\\
  \begin{mathpar}
    \inferrule*%
    {\neg\ticked{\eta_N}{p_2}}
    {\state{\eta_N\tick l \sigtype A  \mapsto \sig{p_1}{p_2}, \eta_E\sep\Delta}\partstep{\kappa\mapsto t} \state{\eta_N,l \sigtype A \mapsto \sig[\unchanged]{p_1}{p_2}\tick \eta_E\sep\Delta}}
    \and
    \inferrule*%
    {\ticked{\eta_N}{p_2}\\\evalAdv[\kappa\mapsto t]{p_2}{\eta_N\tick l \sigtype A \mapsto \sig{p_1}{p_2}, \eta_E\sep\Delta}{l'}{\eta'_N\tick l \sigtype A \mapsto \sig{p_1}{p_2}, \eta_E\sep\Delta'}\\\eta'_N(l') = \sig[V]{q_1}{q_2}}
    {\state{\eta_N\tick l \sigtype A \mapsto \sig{p_1}{p_2}, \eta_E\sep\Delta}\partstep{\kappa\mapsto t} \state{\eta'_N,l \sigtype A \mapsto \sig[\updated]{q_1}{q_2}\tick \eta_E\sep\Delta'}}
\end{mathpar}
  ~\\[0.2em]
  \textsc{Step semantics \quad $\fullstep{e}$ \quad and \quad $\initstep$}\\
  \begin{mathpar}
    \inferrule*%
    {\state{\emptyheap\tick\eta\sep\Delta}\partstep*{e} \state{\eta'\tick\emptyheap\sep\Delta'}}
    {\state{p;\eta\sep\Delta}\fullstep{e} \state{p;\eta'\sep\Delta'}}
    \and
    \inferrule*%
    {\heval{t}{\emptyheap\tick\emptyheap\sep\Delta}{p}{\eta\tick\emptyheap\sep\Delta'}}
    {\state{t;\Delta}\initstep \state{p;\eta\sep\Delta'}}
  \end{mathpar}
  ~\\[0.2em]
  \begin{minipage}[t]{0.63\textwidth}
    \centering
  \textsc{Reactive evaluation semantics \quad $\state{t;\Delta} \reacts{\tau} \state{v;\Delta'}$}\\
  \begin{mathpar}
\inferrule{
  \isev[\Delta_i]{e_i}\quad \text{ for all } 0 \le i < n\\\\
\state{t;\Delta} \initstep
\state{p;\eta_0\sep\Delta_0} \fullstep{e_0} \state{p;\eta_1\sep\Delta_1} 
\fullstep{e_1}\quad \dots\quad \fullstep{e_{n-1}} \state{p;\eta_{n}\sep\Delta_{n}}}
{\state{t;\Delta} \reacts{e_0,\dots,e_{n-1}} \state{p[\eta_n];\Delta_n}}
  \end{mathpar}
  \end{minipage}
  \begin{minipage}[t]{0.35\textwidth}
    \centering
   \textsc{Events \quad $\isev{e}$}\\[1.6em]
  \begin{mathpar}
\inferrule
   {\kappa : \Chan\,A\in\Delta \\ \hastype{}{t}{A}}{\isev{\kappa \mapsto t}}
  \end{mathpar}
  \end{minipage}  
  ~\\[1em]
  where $s[\eta]$ is defined by \quad $s[\cdot] = s$ \quad and
   \quad $s[\eta,l \sigtype A \mapsto \sig{p}{q}] = ((s[q/\tail\, l])[p ::_A q/l])[\eta]$
   \\[1.2em]
  \textsc{Ticked predicate \quad $\ticked{\eta}{p}$}\\
  \begin{equation*}
\begin{aligned}
  \ticked{\eta}{\never} &\Leftrightarrow \bot\\
  \ticked{\eta}{p \appAE q} &\Leftrightarrow \ticked{\eta} q\\
  \ticked{\eta}{\wait\,\kappa'} &\Leftrightarrow \kappa = \kappa'\\
  \ticked{\eta_N\tick\eta_E\sep\Delta}{p} &\Leftrightarrow \ticked{\eta_N\tick\eta_E}{p}
\end{aligned}
\hspace{1cm}
\begin{aligned}
  \ticked{\eta}{\watch\,l} &\Leftrightarrow \exists p, q. \eta(l) = \sig[\updated] {\interm_1\, p} q\\
  \ticked{\eta}{\tail\,l} &\Leftrightarrow \exists p,q. \eta(l) = \sig[\updated] p q\\
  \ticked{\eta}{\select\; p\; q} &\Leftrightarrow \ticked{\eta}{p} \lor \ticked{\eta}{q}\\
\ticked{\eta_N\tick\eta_E}{p} &\Leftrightarrow \ticked{\eta_N}{p}
\end{aligned}
  \end{equation*}
  \vspace{-0.5em}
  \caption{Reactive evaluation semantics and step semantics of the machine.}
  \label{fig:reactiveSemantics}
\end{figure}

The step semantics $\fullstep{e}$, defined in
Fig.~\ref{fig:reactiveSemantics}, describes how the machine reacts to
each event $e$. To this end, the step semantics uses the evaluation
semantics from section~\ref{sec:evaluation-semantics} as well as two
additional components: The \emph{advance semantics}, denoted
$\evalAdv{p}{\varepsilon}{q}{\varepsilon'}$, describes how a value $p$
of type $\DelayE A$ is advanced to a value $q$ of type $A$ given that
the event $e$ has caused the clock of $p$ to tick. The advance
semantics is used by the \emph{update semantics}, denoted
$\state{\varepsilon}\partstep{e} \state{\varepsilon'}$, to update the
left-most signal in the \earlier heap of the environment
$\varepsilon$, so that it can be moved to the \now heap, which results
in the new environment $\varepsilon'$. Finally, the \emph{step
semantics} describes a complete computation step that is performed in
reaction to an event, denoted ${\state{p;\eta\sep\Delta}\fullstep{e}
\state{p;\eta'\sep\Delta'}}$, as well as the very first computation
step that initialises a program, denoted $\state{t;\Delta}\initstep
\state{p;\eta\sep\Delta'}$.

We first look at the step semantics: Given a well-typed term
$\hastype{}{t}{A}$, the step $\state{t;\Delta}\initstep
\state{p;\eta\sep\Delta'}$ initialises the program by evaluating $t$
in the context of $\Delta$ to a value $p$ using the evaluation
semantics from Fig.~\ref{fig:machine}. Evaluating $t$ may cause
signals to be stored in the \now heap and additional channels to be
allocated, resulting in a heap $\eta$ and a new channel context
$\Delta'$.

After this initialisation, the state of the machine is represented by
a tuple $\state{p;\eta\sep\Delta}$ consisting of a value $p$, a heap
$\eta$, and a channel context $\Delta$. Throughout the runtime of the
machine, $p$ will not change, but $p$ may refer to signals stored in
$\eta$ that \emph{are} subject to change. Each occurrence of an event
$e$ may cause some signals stored in $\eta$ to be updated. More
precisely, for each stored signal $l \sigtype A \mapsto \sig
{p_1}{p_2}$ in $\eta$, the clock of $p_2$ may tick in reaction to $e$.
In that case, the signal at $l$ must be updated with the new value of
type \ensuremath{\Sig\;\Conid{A}} produced by $p_2$ using the advance semantics. To update
all such stored signals in a systematic manner, the step semantics
takes the current heap $\eta$ and designates it as the \earlier heap.
This means that all signals in $\eta$ are considered stale and
possibly need updating. The actual update process is performed -- one
signal at a time -- by the update semantics $\partstep{e}$, which
takes the left-most signal from the \earlier heap and either moves it
to the \now heap unchanged, if the clock of the signal's tail did not
tick, or it performs an update and then moves the updated signal to
the \now heap, if the clock did tick. Whether an event $\kappa \mapsto
t$ causes the clock of a delayed computation $p$ to tick is indicated
by the $\ticked{\eta}{p}$ predicate defined at the bottom of
Fig.~\ref{fig:reactiveSemantics}. Since $p$ may contain references to
a heap $\eta$, it is included in the predicate.

To precisely relate $\ticked{\eta}{p}$ to the clock $\cl{v}$ of a value $v$,
as defined in section~\ref{sec:delayed-computations}, we define the
clock $\cl[\eta]{p}$ of a machine value $p$ such that it satisfies the correspondence
$(\cl[\eta]{p})[\eta] = \cl{p[\eta]}$:
\begin{equation*}
\begin{aligned}[t]
  \cl[\eta]{\never} &= \emptyset\\
  \cl[\eta]{\watch\,l} &= \set{l}\\
  \cl[\eta]{p \appAE q} &= \cl[\eta] q
\end{aligned}
\hspace{1cm}
\begin{aligned}[t]
  \cl[\eta]{\wait\,\kappa} &= \set{\kappa}\\
  \cl[\eta]{\select\; p\; q} &= \cl[\eta]{p} \cup \cl[\eta]{q}\\
  \cl[\eta]{\tail\,l} &= \cl[\eta_1]{q} \quad\text{ if } \eta = \eta_1,l \sigtype A \mapsto \sig p q, \eta_2
\end{aligned}
\end{equation*}
During the execution of a reactive step
$\state{p;\eta\sep\Delta}\fullstep{\kappa \mapsto t}
\state{p;\eta'\sep\Delta'}$, the machine maintains the following
invariant for any environment $\eta_N\tick\eta_E\sep\Delta''$
throughout the update process:
\begin{equation}
\text{
  $\ticked{\eta_N}{q}$ iff $\kappa \in \cl[\eta]{q}$ or there is some $l \in
  \cl[\eta]{q}$ with $\eta_N(l)= \sig[\updated]{\interm_1\, q_1}{q_2}$.}
  \label{eq:ticked-clock-correspondence}
\end{equation}
That is, the clock of a delayed computation $q$ ticks in response to
an event on channel $\kappa$ (and therefore can be advanced) iff the
clock contains $\kappa$ or a reference $l$ to a  partial signal that
has produced a fresh value during this reactive step. Note that the
relevant clock $\cl[\eta]{q}$ is with respect to $\eta$, i.e.\ the
heap from \emph{before} the start of the reactive step, because the
timing information in $\eta_N$ is for the \emph{next} step. We state
and prove the above observations more formally in
section~\ref{sec:ticked-clock-correspondence}.

Whenever the clock $\cl[\eta]{p_2}$ of a stored signal
$\sig{p_1}{p_2}$ ticks, the tail $p_2$ is advanced to produce the new
state of the signal. In general, the advance semantics takes a delayed
computation $p:\DelayE A$ whose clock has ticked, i.e.\
$\ticked{\eta_N}{p}$, and performs the delayed computation to obtain a
value $q$ of type $A$, denoted $\evalAdv[\kappa\mapsto
t]{p}{\eta_N\tick\eta_E\sep\Delta}{q}{\eta'_N\tick\eta_E\sep\Delta'}$.
Similarly to the evaluation semantics, the advance semantics may
allocate new signals on the \now heap and create fresh channels, but
it leaves the \earlier heap unchanged. In the cases for \ensuremath{\Varid{p}\mathrel{=}\watch\;\Varid{l}}
and \ensuremath{\Varid{p}\mathrel{=}\wait\;\kappa}, the advance semantics can simply look up the
desired terms in $\eta_N$ and $\kappa \mapsto t$, respectively,
because we know that $\ticked{\eta_N}{p}$ holds. In the case of $\ensuremath{\Varid{p}\mathrel{=}\select\;\Varid{p}_{\mathrm{1}}\;\Varid{p}_{\mathrm{2}}}$, the advance semantics consults
$\ticked{\eta_N}{\cdot}$ to decide which of the two delayed
computations to perform. Since $\ticked{\eta_N}{p}$ holds, we know
that at least one of the two has ticked. In the case of \ensuremath{\Varid{p}\mathrel{=}\delay\;\Varid{t}\appAE\Varid{p}_{\mathrm{1}}}, the advance semantics performs the delayed computation $p_1$
to obtain a value $p_2$ and then evaluates the function application \ensuremath{\Varid{t}\;\Varid{p}_{\mathrm{2}}}. Finally, in the case of \ensuremath{\Varid{p}\mathrel{=}\tail\;\Varid{l}}, the advance semantics
simply produces \ensuremath{\Varid{l}}, because it can rely on the fact that $l$ points
into the \now heap, which the update semantics has already updated.

We define the reactive evaluation semantics
$\state{t;\Delta}\reacts{\tau} \state{v;\Delta_n}$ in
Fig.~\ref{fig:reactiveSemantics} as the result of the machine reacting
to the events in $\tau$ where each event $e_i$ in $\tau$ is well-typed
according to the judgement $\isev[\Delta_i]{e_i}$ also defined in
Fig.~\ref{fig:reactiveSemantics}. The resulting machine value $p$
together with the heap $\eta_n$ represent the value $v = p[\eta_n]$,
which we obtain by taking each stored signal $l : \Sig\,A \mapsto \sig
{p_1} {p_2}$ in $\eta_n$ and replacing $\tail\,l$ with $p_2$ and $l$
with $p_1 ::_A p_2$ in $p$.

\subsection{Practical Considerations}
\label{sec:practical-considerations}

While the operational semantics provides a precise account of the
operational behaviour of \Rizzo and thus allows us to give formal
statements of the operational guarantees provided by the type system,
it does not try to give an \emph{efficient} implementation strategy.
How to efficiently implement algebraic concepts like variable
substitution and recursive data types is well studied in the
literature. In the following, we focus on how the semantics of signals
may be implemented efficiently.

Instead of two separate heaps to store signals, an implementation could use a
single, global heap for \now signals, \earlier signals, and any
other data types that need heap allocation such as product, sum and
recursive types. The \now and \earlier heaps are instead jointly
represented by a linked list data structure by making each signal
store a reference to its two neighbours. The $\tick$ divider between
the \now and \earlier heap is then represented by a reference into
the linked list.

In a practical implementation, we have some memory management strategy
that deallocates unused data from the global heap -- including stored
signals. However, signals stored on the heap take up space \emph{and
time}, because each stored signal is checked by the update semantics
and updated if necessary. It is thus important that unused memory is
freed in a timely fashion so that as few unused signals as possible
are updated. That makes \emph{automatic reference counting}, which has
seen renewed attention for efficient implementations of functional
languages~\citep{ullrich21CountingImmutableBeans,reinking21PerceusGarbageFree,lorenzen22ReferenceCountingFrame},
an ideal memory management strategy for \Rizzo. It frees unused memory
immediately, and thus no unused signals are ever updated. Moreover,
since the type system rules out circularity (cf.\
Theorem~\ref{thm:productivity} in section~\ref{sec:results}), programs
cannot produce circular reference chains, which automatic reference
counting would not be able to detect.

Finally, there are further small optimisations that a practical
implementation could perform in order to reduce the number of
allocations of stored signals. For example, when the update semantics updates a
signal at location $l$, it advances the tail of that signal. This will
produce a new signal at $l'$, whose data is copied into $l$. An
efficient implementation can avoid the additional allocation of $l'$
and the subsequent copying by simply writing the new signal directly
into $l$ to begin with. This is safe, because we can rule out that the process of
advancing the tail of $l$ will itself read from $l$, due to the simple
fact that $l$ resides in the \earlier heap.

\subsection{Example Run of the Machine}
\label{sec:opsem-examples}

To illustrate the machine, we show how it executes a small example
program involving \ensuremath{\Varid{sample}}. This and two more examples are developed
in more detail in Appendix~\ref{sec:more-examples} and the
accompanying Lean formalisation (both found in the supplementary
material).

Our example program \ensuremath{\Varid{t}} assumes a channel context $\Delta = \set{\kappa_1 :
\ensuremath{\Chan\;\Nat}, \kappa_2 : \ensuremath{\Chan\;\Conid{Char}}}$:
\[
t = \ensuremath{\mathbf{let}\;\Varid{xs}\mathrel{=}\mathrm{0}\mathbin{::}\Varid{mkSig}\;(\wait\;\kappa_1)\;\mathbf{in}\;\mathbf{let}\;\Varid{ys}\mathrel{=}\text{\ttfamily 'a'}\mathbin{::}\Varid{mkSig}\;(\wait\;\kappa_2)\;\mathbf{in}\;\Varid{sample}\;\Varid{xs}\;\Varid{ys}}
\]
This program first constructs two signals $\ensuremath{\Varid{xs}}$ and $\ensuremath{\Varid{ys}}$ from the
two channels $\kappa_1$ and $\kappa_2$ and then samples from $\ensuremath{\Varid{ys}}$
using $\ensuremath{\Varid{xs}}$.
To avoid clutter, we elide heap locations that are not referenced
anywhere, leave out the type annotations of all heap locations, and
do not spell out the values $p_i$:
\begin{eqnarray*}
  \state{t;\Delta} &\initstep 
  &\state{l_3;l_1 \mapsto \sig[\unchanged]{0}{p_1},
  l_2 \mapsto \sig[\unchanged]{\ensuremath{\text{\ttfamily 'a'}}}{p_2},
  l_3 \mapsto \sig[\unchanged]{(0,\ensuremath{\text{\ttfamily 'a'}})}{p_3}\sep\Delta}\\
  &\fullstep{\kappa_1 \mapsto 1} 
  &\state{l_3;l_1 \mapsto \sig[\updated]{1}{p_1},
  l_2 \mapsto \sig[\unchanged]{\ensuremath{\text{\ttfamily 'a'}}}{p_2},
  l_3 \mapsto \sig[\updated]{(1,\ensuremath{\text{\ttfamily 'a'}})}{p_3}\sep\Delta}\\
  &\fullstep{\kappa_2 \mapsto \ensuremath{\text{\ttfamily 'b'}}} 
  &\state{l_3;l_1 \mapsto \sig[\unchanged]{1}{p_1},
  l_2 \mapsto \sig[\updated]{\ensuremath{\text{\ttfamily 'b'}}}{p_2},
  l_3 \mapsto \sig[\unchanged]{(1,\ensuremath{\text{\ttfamily 'a'}})}{p_3}\sep\Delta}\\
  &\fullstep{\kappa_1 \mapsto 2} 
  &\state{l_3;l_1 \mapsto \sig[\updated]{2}{p_1},
  l_2 \mapsto \sig[\unchanged]{\ensuremath{\text{\ttfamily 'b'}}}{p_2},
  l_3 \mapsto \sig[\updated]{(2,\ensuremath{\text{\ttfamily 'b'}})}{p_3}\sep\Delta}
\end{eqnarray*}
The initialisation step allocates three signals on the heap. Each of
the first two, at $l_1$ and $l_2$, updates whenever the corresponding
channel $\kappa_i$ receives an input since $\cl[\eta]{\ensuremath{\tail\;\Varid{l}_{\Varid{i}}}} =
\cl[\eta]{p_i} = \cl[\eta]{\ensuremath{\wait\;\kappa_i}} = \set{\kappa_i}$. The
signal at $l_3$, however, only updates whenever $\kappa_1$ receives an
input since $\cl[\eta]{\ensuremath{\tail\;\Varid{l}_{\mathrm{3}}}} = \cl[\eta]{p_3} = \cl[\eta]{\ensuremath{\tail\;\Varid{l}_{\mathrm{1}}}} = \set{\kappa_1}$. That is, given $\tau = \kappa_1 \mapsto
1,\kappa_2 \mapsto \ensuremath{\text{\ttfamily 'b'}},\kappa_1 \mapsto 2$, we have the reactive
evaluation $\state{t;\Delta}\reacts{\tau} \state{\ensuremath{(\mathrm{2},\text{\ttfamily 'b'})\mathbin{::}\Varid{v}};\Delta}$ for some \ensuremath{\Varid{v}\mathbin{:}\DelayE\;(\Sig\;(\Nat\times\Conid{Char}))}.

As mentioned above, we have elided heap locations that are
not referenced anywhere. For example, after the first reactive step
$\fullstep{\kappa_1 \mapsto 1}$, the machine also allocates a heap
location $l_4$ with the same value as $l_1$ and inserts it just to the
left of $l_1$. Similarly, the machine allocates a location $l_5$
just to the left of $l_3$ with the same value as $l_3$. In a practical
implementation, these intermediate heap locations can be garbage
collected or avoided altogether as described in
section~\ref{sec:practical-considerations}.

\subsection{Extended Type System and Type Preservation}
\label{sec:extended-type}

As discussed in section~\ref{sec:evaluation-semantics}, the evaluation
semantics creates fresh channels in the channel context $\Delta$ and
allocates new signals in the \now heap $\eta_N$. To give a precise account
of the type preservation property satisfied by \Rizzo, we generalise
the typing judgement $\hastype\Gamma t A$ to an extended typing
judgement of the form $\hastype*\Gamma t A$ so that it also has a
\emph{heap context} $H$. The latter is a sequence of heap location
typings of the form $l \sigtype A$, each of which indicates that there
is a signal of type $\Sig\,A$ at heap location $l$. All typing rules
from Fig.~\ref{fig:typing} carry over to the extended typing judgement
with the heap context $H$ being the same in premises and conclusions
(just like $\Delta$). For example, the rule for lambda abstractions is
as follows in the extended system:
\[
  \inferrule*%
  {\hastype*{\Gamma,x:A}{t}{B}}%
  {\hastype*{\Gamma}{\lambda x.t}{A \to B}}%
\]
In addition, we have one new typing rule that simply considers heap
locations from $H$ as well-typed:
\[
  \inferrule*%
  {l \sigtype A \in H}%
  {\hastype*{\Gamma}{l}{\Sig\,A}}%
\]
This extended type system is only needed to formulate and to prove the
operational properties of \Rizzo. Programs are still expected to be
typed according to the ``surface'' type system from
Fig.~\ref{fig:typing}. However, this ``surface'' type system is a
special case of the extended type system where $H$ is empty.

The evaluation semantics preserves typing with respect to the
extended typing judgement. In order to state this precisely, we define the
heap context $\heaptype\eta$ of a heap $\eta$ as the sequence of heap
locations and their typing from $\eta$, i.e.\
$\heaptype{\cdot} = \cdot$ and $\heaptype{\eta, l \sigtype A \mapsto \sig p q} = \heaptype{\eta},l \sigtype A$,
where we write $\cdot$ for the empty heap.

Finally, to precisely state the operational guarantees of \Rizzo, we
also give typing judgements for \now and \earlier heaps
($\isheap{\eta}$ and $\isearlier{H}{\eta}$) as well as environments
($\isenv{\varepsilon}$) in Fig.~\ref{fig:isheap}. In addition, we also
define the shorthand $\hastype*[\eta]{}{t}{A}$, which states that both
$t$ and $\eta$ are well-typed.

\begin{figure}
  \small
  \begin{mathpar}
    \inferrule*{~}{\isheap{\cdot}}
    \and
    \inferrule*
    {\isheap{\eta}\\ \hastype*[{\heaptype{\eta}}]{}{p}{A}\\ \hastype*[{\heaptype{\eta}}]{}{q}{\DelayE (\Sig A)}}
    {\isheap{\eta,l \sigtype A \mapsto \sig{p}{q}}}
  \\
    \inferrule*{~}{\isearlier{H}{\cdot}}
    \and
    \inferrule*
    {\isearlier{H,l \sigtype A}{\eta}\\ \hastype*[H]{}{p}{A}\\ \hastype*[H]{}{q}{\DelayE (\Sig A)}}
    {\isearlier{H}{l \sigtype A \mapsto \sig{p}{q},\eta}}
    \\
    \inferrule*
    {\hastype*[\heaptype{\eta}]{}{t}{A} \\ \isheap{\eta}}
    {\hastype*[\eta]{}{t}{A}}
    \and
    \inferrule*
    {\isheap{\eta_N}\\\isearlier{\heaptype{\eta_N}}{\eta_E}}
    {\isenv{\eta_N\tick\eta_E\sep\Delta}}
  \end{mathpar}
  \vspace{-1.5em}
  \caption{Well-typed \now heaps ($\isheap{\eta}$), \earlier heaps
  ($\isearlier{H}{\eta}$), and environments ($\isenv{\varepsilon}$).}
  \label{fig:isheap}
\end{figure}

\subsection{Main Metatheoretical Results}
\label{sec:results}

Using the operational semantics, we can now give a precise account of
the operational guarantees that \Rizzo's type system provides, namely
productivity, causality, and the absence of space leaks.

\subsubsection{Productivity}
Recall the definition of the well-typed event judgement
$\isev[\Delta]{e}$ from Fig.~\ref{fig:reactiveSemantics}. A
\emph{well-formed reactive sequence} is of the form:
\begin{equation}
\state{t;\Delta} \initstep
\state{p;\eta_0\sep\Delta_0} \fullstep{e_0} \state{p;\eta_1\sep\Delta_1} 
\fullstep{e_1} \dots \fullstep{e_{n-1}} \state{p;\eta_n\sep\Delta_n}
\text{ with $\isev[\Delta_i]{e_i}$ for $0 \le i < n$}
\label{eq:wf-sequence}
\end{equation}
Productivity states that the machine can always extend such a sequence
from a well-typed term:
\begin{theorem}[productivity]
  \label{thm:productivity}
  Let $\hastype{}{t}{A}$ be a well-typed term.
  \begin{enumerate}[(i)]
    \item There is a step $\state{t;\Delta} \initstep \state{p;\eta_0\sep\Delta_0}$.
    \label{item:productivityI}
    \item Given a well-formed reactive sequence of the form
  \eqref{eq:wf-sequence} and an event $\isev[\Delta_n] {e_n}$, there
  is a step $\state{p;\eta_n\sep\Delta_n}
  \fullstep{e_n}
  \state{p;\eta_{n+1}\sep\Delta_{n+1}}$.
    \label{item:productivityII}
  \end{enumerate}
\end{theorem}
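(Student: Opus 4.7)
The plan is to prove both parts of Theorem~\ref{thm:productivity} by combining a type preservation argument with a termination argument based on a logical relations model of the evaluation semantics.

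First, I would generalise the extended typing judgement $\hastype*{\Gamma}{t}{A}$ to typing judgements for heaps and stores, e.g.\ $\isheap*{H}{\eta}$, requiring that every stored signal $l\sigtype A \mapsto \sig v w$ assigns values of type $A$ and $\DelayE(\Sig\, A)$ respectively, relative to the heap context $H$ and channel context $\Delta$. I would then prove type preservation for each of the three reductions: the evaluation semantics (extending $H$ when allocating with $\mathbin{::}$ and extending $\Delta$ when allocating with $\chan$), the advance semantics (similar extensions), and the update semantics (which shifts one signal from the earlier heap to the now heap and possibly replaces its contents with values produced by advancing a ticked tail). The key auxiliary observation is that clocks only depend on the now heap, so moving a signal from earlier to now preserves the invariant that whenever $\ticked{\eta_N}{u}$ we may safely apply the advance semantics to $u$.

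Second, for termination I would define a logical relation $\vinterp{A}{\cdot}$ on values, indexed by a store typing/world, and its lifting $\tinterp{A}{\cdot}$ to terms. At base types it is standard; at $\DelayE A$ and $\DelayA A$ the interpretation demands that, for every future world extending the current one and every tick that causes the value to be ticked, the advance semantics produces a result in $\vinterp{A}{\cdot}$. The fundamental lemma then states that every well-typed term (under a substitution satisfying its typing context) lies in its term interpretation, proved by induction on the typing derivation. Since $\ensuremath{\fix}$ guards its recursive call under $\DelayA$, termination of $\fix\, x. t$ follows without needing ordinary well-foundedness --- the recursive occurrence is consumed only when the clock ticks, i.e.\ by the advance semantics in a subsequent reactive step.

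For part~(\ref{item:productivityI}) the fundamental lemma applied to $t$ in the empty environment gives the required initialisation step $\state{t;\Delta} \initstep \state{v;\eta_0\sep\Delta_0}$. For part~(\ref{item:productivityII}), I maintain as an invariant along the execution sequence that $\state{v;\eta_n\sep\Delta_n}$ is well-typed (using type preservation across all previous steps) and that every tail stored in $\eta_n$ lies in the later-type interpretation. Given an input $\kappa_n\mapsto w_n$, the update semantics traverses the earlier heap from left to right, processing signals one at a time; termination is by induction on the length of the earlier heap, and each individual step terminates because advancing a ticked tail is guaranteed to reduce to a value by the logical relation.

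The main obstacle will be designing the logical relation so that it is stable under the dynamic store and channel extensions that both the evaluation and advance semantics perform, and so that it correctly handles the interaction between the two later modalities and heap-allocated signals whose tails may mention other heap locations. In particular, the relation must track sufficient information about clocks to justify that, after an input $\kappa\mapsto w$, the update semantics processes signals in the correct order so that the premise $\ticked{\eta_N}{v_2}$ of the update rule is always decidable and its advancement well-defined; this is where the formalisation's careful bookkeeping of the division between the now and earlier heap, together with the fact that advancing $\tail\, l$ only produces $l$ once $l$ itself has already been moved to the now heap, becomes essential.
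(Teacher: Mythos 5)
Your overall architecture is the same as the paper's: a syntactic type-preservation argument for heaps/stores maintained as an invariant along the execution sequence (the paper's Proposition~\ref{prop:preservation} and the heap typing judgements of Figure~\ref{fig:isheap}), combined with a progress argument whose only non-syntactic ingredient is a Kripke logical relation for the single-time-step evaluation semantics (Proposition~\ref{prop:progress} and section~\ref{sec:logical-relation}); part~(i) is then an instance of progress for the initialisation step, and part~(ii) is an induction on $n$ that threads well-typedness of $\eta_n$ through the sequence and invokes progress for the step semantics, with termination of the update loop by induction on the length of the `earlier' heap. All of that matches.

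The one place where your design diverges, and where it would run into trouble, is the interpretation of the later modalities. You propose that $\vinterp[]{\DelayE A}{\varepsilon}$ and $\vinterp[]{\DelayA A}{\varepsilon}$ quantify over \emph{future worlds} and require that the advance semantics produce a result in $\vinterp[]{A}{\cdot}$. First, the future worlds in question live in the \emph{next reactive step}, after the `now' heap has been demoted to the `earlier' heap and its signals mutated in place; the within-step ordering $\heaple$ (which keeps $\eta_E$ fixed and only extends $\eta_N$) does not cover this, so you would need a second, cross-step Kripke structure. Second, and more seriously, the $\fix$ case of the fundamental lemma becomes circular: $\fix\,x.t$ evaluates to $t[\delay\,(\fix\,x.t)/x]$, so you must show $\delay\,(\fix\,x.t) \in \vinterp[]{\DelayA A}{\varepsilon}$, which under your definition requires that advancing the recursive unfolding in the future again lands in $\vinterp[]{A}{\cdot}$ --- exactly the statement being proved. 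Making this well-founded would force step-indexing over time steps. The paper avoids both problems by interpreting $\DelayE A$ and $\DelayA A$ \emph{purely syntactically} (as well-typed values, cf.\ Figure~\ref{fig:log-rel}), so the $\fix$ case is immediate; progress of the advance semantics (Proposition~\ref{prop:progress}~\ref{item:progressII}) is then a separate, ordinary induction on the typing derivation of the ticked value, appealing to the logical relation only for the embedded evaluation of $t\,v'$ in the $\delay\,t \appAE v$ case, and the cross-step invariant is carried entirely by syntactic type preservation. You should adopt that split; with it, the rest of your plan goes through as stated.
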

\noindent
This means that, given a well-typed term $\hastype*[][\Delta]{}{t}{A}$ and
an infinite sequence of well-typed events
$(\isev[\Delta_i]e_i)_{i\in\nats}$, we obtain an
infinite well-formed reactive sequence
\begin{equation}
\state{t;\Delta} \initstep
\state{p;\eta_0\sep\Delta_0} \fullstep{e_0} \state{p;\eta_1\sep\Delta_1} 
\fullstep{e_1} \dots 
\label{eq:inf-sequence}
\end{equation}

Moreover, the output produced by this reactive sequence,
namely the value $p$ along with any signals it may refer to in the
heaps $\eta_i$, is well-typed:
\begin{theorem}[type preservation]
  \label{thm:preservation}
  Given a well-typed term $\hastype*[][\Delta]{}{t}{A}$ and an
  infinite well-formed reactive sequence of the form
  \eqref{eq:inf-sequence}, we have
  $\hastype*[\eta_i][\Delta_i]{}{p}{A}$ for all $i \ge 0$.
\end{theorem}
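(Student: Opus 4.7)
The plan is to prove the theorem by induction on $i$, using a uniform typing invariant for the full environment $\sigma\sep\Delta$ that we maintain throughout the reactive sequence. The base case $i=0$ reduces to type preservation for the evaluation semantics applied to the initialisation step; the inductive step reduces to type preservation for the update semantics, which in turn rests on type preservation for the advance semantics. To state these preservation lemmas, I first introduce a well-typedness judgement $\isstore*{\Delta}{\sigma}$ for stores $\sigma = \eta_N\tick\eta_E$: each stored signal $l \sigtype A \mapsto \sig v w$ in $\sigma$ must satisfy $\hastype*[\heaptype\sigma][\Delta]{}{v}{A}$ and $\hastype*[\heaptype\sigma][\Delta]{}{w}{\DelayE(\Sig\,A)}$, where recursion through the heap is allowed since $\heaptype\sigma$ exposes every location's type. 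Standard weakening lemmas for both the channel context $\Delta$ and the heap context $H$ are routine and used silently throughout.

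The central technical ingredient is the preservation lemma for the evaluation semantics: if $\hastype*[\heaptype{\eta_N\tick\eta_E}][\Delta]{}{t}{A}$, $\isstore*{\Delta}{\eta_N\tick\eta_E}$, and $\heval{t}{\eta_N\tick\eta_E\sep\Delta}{v}{\eta_N'\tick\eta_E\sep\Delta'}$, then $\Delta\subseteq\Delta'$, $\heaptype{\eta_N}\subseteq\heaptype{\eta_N'}$, $\isstore*{\Delta'}{\eta_N'\tick\eta_E}$, and $\hastype*[\heaptype{\eta_N'\tick\eta_E}][\Delta']{}{v}{A}$. The proof goes by induction on the evaluation derivation. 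The lambda-calculus fragment is standard and uses the usual substitution lemma. The interesting cases are $\chan$ (allocates a fresh $\kappa$ in $\Delta$), $s ::_A t$ (allocates a fresh $l$ in $\eta_N'$ with typing $l\sigtype A$, and the newly stored signal is well-typed by the induction hypothesis), $\fix\,x.t$ (uses the typing rule for $\DelayA A$ and substitution), and $\rec*$ (uses a routine auxiliary lemma, proved by induction on the open type $F$, that $\fmap F$ has the type $(A\to B)\to F[A/\alpha]\to F[B/\alpha]$).

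A parallel preservation lemma for the advance semantics asserts: given $\isstore*{\Delta}{\eta_N\tick\eta_E}$, $\kappa\chantype B \in \Delta$, $\hastype*[\heaptype{\eta_N\tick\eta_E}][\Delta]{}{w}{B}$, $\hastype*[\heaptype{\eta_N\tick\eta_E}][\Delta]{}{u}{\DelayE A}$, and $\ticked{\eta_N}{u}$, if $\evalAdv[\kappa\mapsto w]{u}{\eta_N\tick\eta_E\sep\Delta}{v}{\eta_N'\tick\eta_E\sep\Delta'}$ then $\Delta\subseteq\Delta'$, $\heaptype{\eta_N}\subseteq\heaptype{\eta_N'}$, $\isstore*{\Delta'}{\eta_N'\tick\eta_E}$, and $\hastype*[\heaptype{\eta_N'\tick\eta_E}][\Delta']{}{v}{A}$. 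This goes by induction on the advance derivation and uses the evaluation lemma in the $\delay\,t \appAE v$ case. The update-step preservation then follows: an unchanged move from the earlier heap to the now heap trivially preserves well-typedness, and an advancing move follows from the advance lemma together with the observation that the new stored signal at $l$ inherits its type $\Sig\,A$ from the old one. Consequently, a whole reactive step $\state{v;\eta\sep\Delta}\fullstep{\kappa\mapsto w}\state{v;\eta'\sep\Delta'}$ preserves the invariant $\isstore*{\Delta}{\emptyheap\tick\eta}$, and since $v$ itself is unchanged its type is preserved under the accumulated weakening. The initialisation step is a direct instance of the evaluation lemma starting from $\emptyheap\tick\emptyheap$. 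Induction on $i$ in the sequence \eqref{eq:inf-sequence} then yields $\hastype*[\eta_i][\Delta_i]{}{v}{A}$ for every $i\ge 0$.

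The main obstacle I anticipate is calibrating the heap-typing judgement so that the mutual references naturally arising between stored signals are handled uniformly. Stored signals routinely refer to other stored signals (e.g.\ via $\watch\,l$ or $\tail\,l$ inside a delayed tail), and advancing a signal may allocate further locations whose values refer back to pre-existing ones. The invariant must therefore type each $l \sigtype A$ against the entire heap context, including entries introduced later during the same reactive step, and each preservation lemma must be formulated in a monotone way, with conclusions relative to the \emph{extended} heap and channel contexts. Getting the statements of the evaluation and advance lemmas to compose cleanly across this monotone growth, especially through the $\appAE$ rule where an arbitrary evaluation is nested inside advancement, is the delicate part; once the statement is right, the inductive cases themselves are largely mechanical.
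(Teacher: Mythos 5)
Your proof has the same architecture as the paper's: a top-level induction on $i$, discharged by per-judgement type-preservation lemmas for the evaluation, advance, update, initialisation and step semantics (the paper's Proposition~\ref{prop:preservation}), each proved by induction on the operational derivation, together with weakening (Lemma~\ref{lem:weakening}) and the fact that reactive steps only grow the heap and channel contexts (Lemma~\ref{lem:step-incr}). The one place you diverge is the heap-typing invariant, and you resolve your self-identified ``delicate part'' in the opposite direction from the paper: you type every stored signal against the heap context of the \emph{entire} store, explicitly permitting mutual recursion through the heap, whereas the paper's judgements $\isheap{\eta}$ and $\isearlier{H}{\eta}$ (Figure~\ref{fig:isheap}) are \emph{stratified} --- each location's contents are typed only against the locations strictly to its left, and the `earlier' heap is typed so that its leftmost entry refers only to the `now' heap. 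For type preservation alone your flat invariant suffices: it is implied by the stratified one, it is maintained because every allocation is typed against a sub-store of the final one and weakening closes the gap, and every case that would exploit a forward reference (e.g.\ the lookup in the rule for $\head$, or the update rule's lookup $\eta'_N(l')$) is guarded by a side condition of the operational rule, so preservation holds conditionally on the step existing. But the stratification is not mere bureaucracy: it is what lets the advance-preservation lemma conclude that the advanced value is typed against the \emph{now} heap only (so the updated signal can be appended there without breaking the ordering), and it is what rules out the cyclic reference chains that would defeat the companion progress/productivity argument. So your invariant buys a slightly simpler statement for this theorem at the cost of not being reusable for Theorem~\ref{thm:productivity}; the paper's buys a single invariant that serves both halves of the metatheory.
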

\noindent
Here we use the judgement $\hastype*[\eta_i][\Delta_i]{}{p}{A}$ from
Fig.~\ref{fig:isheap} stating that both $p$ and $\eta_i$ are well-typed.

We can state the above productivity and the type preservation
properties in terms of the reactive evaluation relation
$\reacts{\tau}$. The productivity property follows from
Theorem~\ref{thm:productivity} and the fact that $\state{t;\Delta}
\reacts{\tau} \state{t';\Delta'}$ implies that $t'$ is a value
according to the grammar in Fig.~\ref{fig:syntax}:
\begin{corollary}[productivity]
  \label{cor:productivity}
  Let $\hastype[\Delta]{}{t}{A}$ be a well-typed term.
  \begin{enumerate}[(i)]
    \item There is a reactive evaluation $\state{t;\Delta} \reacts{\cdot}
    \state{v;\Delta'}$, where $\cdot$ is the empty sequence of events.
    \item If $\state{t;\Delta} \reacts{\tau} \state{v;\Delta'}$ and
  $\isev[\Delta'] {e}$, then there is a reactive evaluation $\state{t;\Delta} \reacts{\tau,e}
  \state{w;\Delta''}$.
  \end{enumerate} 
\end{corollary}
In turn, type preservation follows from
Theorem~\ref{thm:preservation} and the fact that
$\hastype*[\eta][\Delta]{}{p}{A}$ implies
$\hastype[\Delta]{}{p[\eta]}{A}$:
\begin{corollary}[type preservation]
  \label{cor:preservation}
  If $\hastype[\Delta]{}{t}{A}$ and $\state{t;\Delta} \reacts{\tau}
  \state{v;\Delta'}$, then $\hastype[\Delta']{}{v}{A}$.
\end{corollary}

\subsubsection{Causality}
\label{sec:causality}

A \Rizzo term $\hastype*[][\Delta]{}{t}{A}$ is called \emph{causal} if, for any
infinite well-formed reactive sequence of the form
\eqref{eq:inf-sequence}, each state $\state{p;\eta_n\sep\Delta_n}$
in the sequence only depends on the initial state $\state{t;\Delta}$ and all events
$e_i$ with $i<n$.
\begin{theorem}[causality]
  \label{thm:causality}
  Suppose the infinite well-formed reactive sequences
  \eqref{eq:inf-sequence} and 
  \[
    \state{t;\Delta} \initstep
    \state{p';\eta'_0\sep\Delta'_0} \fullstep{e'_0} \state{p';\eta'_1\sep\Delta'_1} 
    \fullstep{e'_1} \dots 
  \]
  Let $n \in \nats$ and suppose $e_i = e'_i$ for all $i < n$. Then
  $\state{p;\eta_n\sep\Delta_n} = \state{p';\eta'_n\sep\Delta'_n}$.
\end{theorem}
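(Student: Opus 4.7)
The plan is to prove causality by induction on $n$, letting Lemma~\ref{lem:deterministic} do essentially all the work. Intuitively, the step semantics is a function from a source state and an input to a target state, so equal inputs applied to equal source states must yield equal target states; iterating this observation along the two execution sequences gives the result.

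For the base case $n = 0$, both sequences open with initialisation steps $\state{t;\Delta}\initstep\state{v;\eta_0\sep\Delta_0}$ and $\state{t;\Delta}\initstep\state{v';\eta'_0\sep\Delta'_0}$ from the common source $\state{t;\Delta}$. Lemma~\ref{lem:deterministic}(\ref{item:deterministicI}) then immediately yields $v = v'$, $\eta_0 = \eta'_0$, and $\Delta_0 = \Delta'_0$, which is precisely the $n = 0$ instance. For the inductive step, I assume the claim at $n$ and establish it at $n+1$. Restricting the hypothesis $\kappa_i = \kappa'_i$ and $w_i = w'_i$ from $i < n+1$ to $i < n$ activates the induction hypothesis, so $\state{v;\eta_n\sep\Delta_n} = \state{v';\eta'_n\sep\Delta'_n}$ (in particular $v = v'$). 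Since additionally $\kappa_n = \kappa'_n$ and $w_n = w'_n$, the two successor transitions both have the form $\state{v;\eta_n\sep\Delta_n} \fullstep{\kappa_n \mapsto w_n} \state{\cdot;\cdot\sep\cdot}$, departing from identical source states with identical inputs. A single application of Lemma~\ref{lem:deterministic}(\ref{item:deterministicII}) then yields $\state{v;\eta_{n+1}\sep\Delta_{n+1}} = \state{v';\eta'_{n+1}\sep\Delta'_{n+1}}$, closing the induction.

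There is no substantive obstacle here; the entire argument reduces to two applications of determinism glued together by induction. The only mild subtlety worth noting concerns the value component $v$ carried along in the state notation $\state{v;\eta_n\sep\Delta_n}$: because the step semantics propagates the value component unchanged through every reactive step, the equality $v = v'$ is pinned down once and for all by the base case and is then transported automatically along the induction, so each later appeal to Lemma~\ref{lem:deterministic}(\ref{item:deterministicII}) effectively only needs to match the heap and channel-context components.
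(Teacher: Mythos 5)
Your proof is correct and follows exactly the paper's argument: induction on $n$, with the base case given by Lemma~\ref{lem:deterministic}~(\ref{item:deterministicI}) applied to the two initialisation steps and the inductive step by Lemma~\ref{lem:deterministic}~(\ref{item:deterministicII}) applied to the two reactive steps from the now-identical source states. The extra remark about the value component being fixed by the base case is consistent with, and implicit in, the paper's proof.
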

\subsubsection{No Space Leaks}

The absence of space leaks is a direct consequence of the
productivity property in Theorem~\ref{thm:productivity}: The
operational semantics is by definition free of space leaks since after
each step $\state{p;\eta\sep\Delta} \fullstep{e}
\state{p;\eta'\sep\Delta'}$, all old signals have been overwritten
with their new value. Hence, the program cannot retain old input data.
Theorem~\ref{thm:productivity} shows that this in-place update
semantics is safe, and that the program never tries to dereference old
data. The old data is kept in the \earlier heap and any attempt by the
program to dereference a signal from that heap would result in a stuck
execution, which Theorem~\ref{thm:productivity} rules out.

\section{Metatheory}
\label{sec:metatheory}
In this section, we give an overview of how we have proved the main
results presented in section~\ref{sec:results} as well as the timing
correspondence \eqref{eq:ticked-clock-correspondence}. The complete
formalisation of these proofs in Lean can be found in the \ifanon%
supplementary material to this article.
\else%
supplementary material to this article~\citep{supplementary}.
\fi%
The three main theorems we prove
are Theorem~\ref{thm:productivity} (productivity),
Theorem~\ref{thm:preservation} (type preservation), and
Theorem~\ref{thm:causality} (causality), while the absence of space
leaks follows by definition of the operational semantics and
Theorem~\ref{thm:productivity}.

\subsection{Causality}

The causality property of Theorem~\ref{thm:causality} is an immediate
consequence of the fact that the operational semantics is
deterministic in each of its components, in particular the step
semantics:
\begin{lemma}[determinism]~
  \label{lem:deterministic}
  \begin{enumerate}[(i)]
    \item If $\state{t;\Delta}\initstep\state{p_1;\eta_1\sep\Delta_1}$
    and $\state{t;\Delta}\initstep\state{p_2;\eta_2\sep\Delta_2}$,
    then $\state{p_1;\eta_1\sep\Delta_1} =
    \state{p_2;\eta_2\sep\Delta_2}$.
    \label{item:deterministicI}
    \item If $\state{p;\eta\sep\Delta}\fullstep{e}\state{p_1;\eta_1\sep\Delta_1}$
    and $\state{p;\eta\sep\Delta}\fullstep{e}\state{p_2;\eta_2\sep\Delta_2}$,
    then $\state{p_1;\eta_1\sep\Delta_1} =
    \state{p_2;\eta_2\sep\Delta_2}$.
    \label{item:deterministicII}
  \end{enumerate}
\end{lemma}
\noindent
This determinism property is proved by showing that all parts of the
operational semantics (evaluation, advance, update, and step
semantics) are deterministic using a straightforward argument by
structural induction on the rules of the operational semantics.


\subsection{Productivity \& Type Preservation}
\label{sec:productivity-preservation}
Theorem~\ref{thm:productivity} (productivity) and
Theorem~\ref{thm:preservation} (type preservation) are a consequence
of two standard properties~\citep{wright94SyntacticApproachType}:
\emph{progress}, i.e.\ given a well-typed starting point, the
operational semantics produces a result, and \emph{preservation},
i.e.\ the produced result is itself well-typed again. However, instead
of a syntactic proof, we use a logical relations argument for the
progress property.

In order to make the progress and preservation properties precise, we
have introduced the extended typing judgement
$\hastype*{\Gamma}{t}{A}$ in section~\ref{sec:extended-type} along
with typing judgements for heaps and environments in
Fig.~\ref{fig:isheap}. Thus, we
can formally describe the type preservation properties as follows:
\begin{proposition}[type preservation]
  \label{prop:preservation}
  ~
  \begin{enumerate}[(i)]
  \item If $\hastype*[\eta_N]{}{t}{A}$ and
    $\heval{t}{\eta_N\tick\eta_E\sep\Delta}{p}{\eta'_N\tick\eta'_E\sep\Delta'}$,
    then $\hastype*[\eta'_N][\Delta']{}{p}{A}$.
  \label{item:preservationI}
  \item If $\hastype*[\eta_N]{}{p}{\DelayE A}$, $\isev e$,  and
    $\evalAdv[e]{p}{\eta_N\tick\eta_E\sep\Delta}{q}{\eta'_N\tick\eta'_E\sep\Delta'}$,
    then $\hastype*[\eta'_N][\Delta']{}{q}{A}$.
  \label{item:preservationII}
  \item If $\isenv{\varepsilon}$, $\isev e$, and
  $\state{\varepsilon}\partstep{e}
  \state{\varepsilon'}$, then $\isenv{\varepsilon'}$.
   \label{item:preservationIII}

  \item If $\hastype*[][\Delta]{}{t}{A}$
  and $\state{t;\Delta}\initstep
  \state{p;\eta\sep\Delta'}$, then  $\hastype*[\eta][\Delta']{}{p}{A}$.
  \label{item:preservationIV}
  \item If $\isheap{\eta}$, $\isev e$,
  and $\state{p;\eta\sep\Delta}\fullstep{e}
  \state{p;\eta'\sep\Delta'}$, then  $\isheap[\Delta']{\eta'}$.
  \label{item:preservationV}
  \end{enumerate}
\end{proposition}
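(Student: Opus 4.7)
The plan is to prove the five parts in the order they are stated, using induction on the derivation of the semantic judgement in each case. Before starting, I would establish a small library of standard infrastructure lemmas: a \emph{weakening} lemma stating that $\hastype*[H][\Delta]{\Gamma}{t}{A}$ and $H \subseteq H'$, $\Delta \subseteq \Delta'$ imply $\hastype*[H'][\Delta']{\Gamma}{t}{A}$; a \emph{substitution} lemma stating that $\hastype*[H][\Delta]{\Gamma,x:B}{t}{A}$ and $\hastype*[H][\Delta]{\Gamma}{v}{B}$ imply $\hastype*[H][\Delta]{\Gamma}{t[v/x]}{A}$; a lemma that $\hastype*[H][\Delta]{\Gamma}{\fmap{F}\,f\,v}{F[B/\alpha]}$ whenever $f : A \to B$ and $v : F[A/\alpha]$ are well-typed; and the observation that $\heaptype\eta$ grows monotonically along any $\heval{\cdot}{\cdot}{\cdot}{\cdot}$ or $\Searrow$ derivation, and that $\Delta$ also only grows. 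These follow by straightforward structural induction.

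For part~\ref{item:preservationI}, I would induct on the evaluation derivation $\heval{t}{\eta_N\tick\eta_E\sep\Delta}{v}{\eta'_N\tick\eta'_E\sep\Delta'}$. Most cases are routine call-by-value reasoning, using the substitution lemma for the application, $\rec*$ and $\fix$ rules and weakening to reconcile the growing heap context across sequentially evaluated subterms. The case for $\chan$ allocates a fresh $\kappa$ with the expected type $A$, so $\Delta'=\Delta,\kappa\chantype A$ and the result is immediate. The key case is signal construction $s ::_A t$: the new location $l$ is added to $\eta_N$ with type $l \sigtype A$, and I must show $\isheap{\eta_N,l \sigtype A \mapsto \sig[\unchanged]{v}{w}}$ extends $\isheap{\eta_N}$; the previously evaluated $v$ and $w$ are well-typed in the smaller context, so weakening suffices. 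The $\rec*$ case is the most bureaucratic: the inner evaluation applies $\fmap A$, whose typing follows from the dedicated lemma above.

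For part~\ref{item:preservationII}, I would induct on the advance derivation. The cases for $\wait$, $\watch$, and $\tail$ are essentially lookups: well-typedness of $w$, $\eta_N$, and (for $\tail$) the heap guarantees the right type. The $\sync$ cases follow from the inductive hypotheses plus sum-type introduction. The only interesting case is $\delay\,t\appAE v$, where the advance first advances $v$ (invoking the inductive hypothesis to get a well-typed $v'$) and then evaluates $t\,v'$, at which point we invoke part~\ref{item:preservationI} together with the observation that $t : A \to B$ since $\delay\,t : \DelayA(A \to B)$. Part~\ref{item:preservationIII} is then an induction on the update-semantics step: the first rule simply moves a signal between heaps (preserving types trivially), and the second rule uses part~\ref{item:preservationII} on $v_2$ to produce a fresh location $l'$, whose stored contents we then use to overwrite $l$; here I need to check that $\eta'_N(l') = \sig[V]{v'_1}{v'_2}$ indeed contains well-typed components of the expected type $\Sig\,A$, which follows from inverting the typing judgement on the value $l'$ of type $\Sig\,A$.

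Parts~\ref{item:preservationIV} and~\ref{item:preservationV} are immediate corollaries. Part~\ref{item:preservationIV} unfolds $\initstep$ to a single evaluation from an empty store and applies part~\ref{item:preservationI}. Part~\ref{item:preservationV} unfolds $\fullstep{\kappa\mapsto w}$ to a $\partstep*{\kappa\mapsto w}$ sequence starting from $\cdot\tick\eta\sep\Delta$, and iterates part~\ref{item:preservationIII}; a small induction on the length of that sequence, using that well-typing of the environment is preserved at each step, delivers $\isheap[\Delta']{\eta'}$. The main obstacle I anticipate is the bookkeeping in part~\ref{item:preservationI} around the $\rec*$ rule and the threading of the growing heap context through all subderivations: every intermediate value must be weakened to the ambient heap context before being substituted, and one must verify that the \emph{existing} heap typings in $\heaptype{\eta_N}$ remain valid as $\eta_N$ grows, which is precisely why we only append to $\eta_N$ and never mutate existing bindings during evaluation.
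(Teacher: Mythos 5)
Your plan matches the paper's proof: the paper establishes all five parts by structural induction on the respective derivations of the operational semantics (with the details deferred to the Lean formalisation), relies on weakening (its Lemma~\ref{lem:weakening}) to thread the growing heap and channel contexts through subderivations, and obtains parts (iv) and (v) from (i) and (iii) exactly as you describe. The infrastructure lemmas and key cases you single out (substitution, $\fmap{}$ typing, the $s ::_A t$ allocation case, and the $\delay\,t \appAE v$ case reducing to part (i)) are the right ones, and I see no gap.
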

All five type preservation properties can be proved by structural
induction on the definition of the operational semantics. To obtain
Theorem~\ref{thm:preservation}, we need two additional properties:
\begin{lemma}[weakening]
  \label{lem:weakening}
  If $\hastype*{\Gamma}{t}{A}$, $H \subseteq
  H'$, and $\Delta \subseteq \Delta'$, then $\hastype*[H'][\Delta']{\Gamma}{t}{A}$.
\end{lemma}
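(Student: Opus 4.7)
The plan is to prove the weakening lemma by straightforward structural induction on the derivation of $\hastype*{\Gamma}{t}{A}$. Since every typing rule in Figure~\ref{fig:typing} (extended to the $\hastype*$ judgement as described in section~\ref{sec:extended-type}) threads $H$ and $\Delta$ through unchanged from conclusion to premises, the induction hypothesis will apply to each premise with the same larger contexts $H'$ and $\Delta'$, and the same rule can then be reapplied to derive the weakened conclusion.

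The only cases that require more than mechanically invoking the induction hypothesis are the leaf rules whose side conditions directly inspect $H$ or $\Delta$. Concretely: for the rule $\hastype*{\Gamma}{l}{\Sig\,A}$ with side condition $l \sigtype A \in H$, monotonicity of set membership under $H \subseteq H'$ gives $l \sigtype A \in H'$; for the rule $\hastype*{\Gamma}{\kappa}{\Chan\,A}$ with side condition $\kappa \chantype A \in \Delta$, likewise $\kappa \chantype A \in \Delta'$ follows from $\Delta \subseteq \Delta'$. The variable rule is independent of both contexts and so is immediate, and the rules for $\chan$, $\never$, and $\unit$ have no side conditions to check.

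For the binding rules (lambda, case, $\rec*$, $\fix$, and the branches of recursion and pattern matching), the premise extends $\Gamma$ with fresh variables but leaves $H$ and $\Delta$ untouched, so the induction hypothesis applied at the larger $(H', \Delta')$ directly yields the premise needed. The applicative, product, sum, signal-former, $\wait$, $\watch$, $\sync$, $\head$, $\tail$, $\cons[\mu\alpha.A]$, and cons $(::_A)$ rules are all multi-premise rules with identical $(H,\Delta)$ in each premise, so one applies the induction hypothesis to each premise at $(H',\Delta')$ and reassembles via the same rule.

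I expect no real obstacle: the lemma is the standard weakening statement for a simply typed calculus extended with finite contexts of locations and channels, and the typing rules have been designed so that $H$ and $\Delta$ play a purely additive role. The only subtlety worth flagging in the formalisation is ensuring that the type formation side conditions of the form $\istype{\Phi}{A}$ that appear implicitly (types are required to be well-formed) are unaffected by the weakening, but since type formation in Figure~\ref{fig:wftype} does not mention $H$ or $\Delta$ at all, this is automatic.
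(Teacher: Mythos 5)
Your proposal matches the paper's proof exactly: the paper also establishes this lemma by a straightforward induction on the derivation of $\hastype*{\Gamma}{t}{A}$, and your identification of the location and channel rules as the only cases genuinely touching $H$ and $\Delta$ is the correct substance of that induction. No issues.
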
%
\begin{lemma}
  \label{lem:step-incr}
  If $\isheap{\eta}$, $\isev e$, and $\state{p;\eta\sep\Delta}\fullstep{e}
  \state{p;\eta'\sep\Delta'}$, then $\heaptype{\eta} \subseteq
  \heaptype{\eta'}$ and $\Delta \subseteq \Delta'$.
\end{lemma}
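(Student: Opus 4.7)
The plan is to prove the lemma by first establishing a monotonicity property for each layer of the operational semantics and then composing them upward. Concretely, I would show (a) if $\heval{t}{\eta_N\tick\eta_E\sep\Delta}{v}{\eta'_N\tick\eta'_E\sep\Delta'}$ then $\heaptype{\eta_N}\subseteq\heaptype{\eta'_N}$, $\eta_E=\eta'_E$, and $\Delta\subseteq\Delta'$; (b) the same property for the advance semantics $\evalAdv[\kappa\mapsto w]{u}{\eta_N\tick\eta_E\sep\Delta}{v}{\eta'_N\tick\eta'_E\sep\Delta'}$; (c) the corresponding property for one update step $\partstep{\kappa\mapsto w}$; and (d) by induction on the length of the reduction, the same for its reflexive-transitive closure $\partstep*{\kappa\mapsto w}$.

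Claim (a) is a routine structural induction on the derivation of the evaluation judgement. The only rules that grow the channel context or the `now' heap are those for $\chan$ and for the signal constructor $s ::_A t$; all other rules either leave both contexts fixed or invoke the induction hypothesis compositionally. In particular, no rule writes to the `earlier' heap $\eta_E$. Claim (b) follows by induction on the advance derivation in the same style; the only rule that may trigger growth is the one for $\delay\,t \appAE v$, which invokes the advance recursively and then the evaluation semantics, discharged by (a).

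For claim (c), I would inspect the two update rules. The first rule moves the leftmost entry $l\sigtype A\mapsto\sig{v_1}{v_2}$ of $\eta_E$ to the right end of $\eta_N$ unchanged, leaving $\Delta$ fixed; hence $\heaptype{\eta_N\tick\eta_E}=\heaptype{\eta'_N\tick\eta'_E}$ and $\Delta=\Delta'$. The second rule advances $v_2$, which by (b) grows $\eta_N$ to some $\eta'_N$ and $\Delta$ to some $\Delta'$ while keeping $\eta_E$ and the leftmost earlier entry fixed, and then overwrites that entry's stored values and appends it to the right of $\eta'_N$. The resulting combined heap therefore still contains $l\sigtype A$ together with every entry of $\eta'_N$ and every entry of $\eta_E$, so again the heap context only grows. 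Claim (d) is then a direct induction on the number of update steps using (c) and the transitivity of set inclusion.

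Finally, the step semantics rule unfolds $\state{v;\eta\sep\Delta}\fullstep{\kappa\mapsto w}\state{v;\eta'\sep\Delta'}$ as $\state{\emptyheap\tick\eta\sep\Delta}\partstep*{\kappa\mapsto w}\state{\eta'\tick\emptyheap\sep\Delta'}$, so (d) yields $\heaptype{\eta}=\heaptype{\emptyheap\tick\eta}\subseteq\heaptype{\eta'\tick\emptyheap}=\heaptype{\eta'}$ and $\Delta\subseteq\Delta'$, as required. There is no real obstacle here; the work is almost entirely bookkeeping, the key invariant being that neither evaluation nor advance ever modifies the `earlier' heap and that every update step preserves the heap location of the signal being processed in the combined heap context.
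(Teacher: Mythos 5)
Your proposal is correct and is essentially the paper's argument: the paper proves this by induction on the derivation of the reactive step, which in practice unfolds into exactly the layered monotonicity lemmas you state for the evaluation, advance, update, and iterated-update judgements, composed through the step rule. Your write-up just makes explicit the bookkeeping that the paper leaves implicit.
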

\noindent
Lemma~\ref{lem:weakening} states the weakening of the term typing
judgement and follows by an induction on the derivation of
$\hastype*{\Gamma}{t}{A}$, while Lemma~\ref{lem:step-incr} states that
the reactive step semantics grows the heap context and the channel context, which
follows by an induction on $\state{p;\eta\sep\Delta}\fullstep{e}
\state{p;\eta'\sep\Delta'}$.

\begin{proof}[Proof of Theorem~\ref{thm:preservation} (type preservation)]
We prove $\hastype*[\eta_i][\Delta_i]{}{p}{A}$ by induction on $i$. If
$i = 0$, then $\hastype*[\eta_0][\Delta_0]{}{p}{A}$ follows by
Proposition~\ref{prop:preservation}~\ref{item:preservationIV}. If $i =
j + 1$, then we have $\hastype*[\eta_j][\Delta_j]{}{p}{A}$ by
induction hypothesis, which means
$\hastype*[\heaptype{\eta_j}][\Delta_j]{}{p}{A}$ and
$\isheap[\Delta_j]{\eta_j}$. By Lemma~\ref{lem:step-incr}, we have
$\heaptype{\eta_j} \subseteq \heaptype{\eta_i}$ and $\Delta_j
\subseteq \Delta_i$. Thus
$\hastype*[\heaptype{\eta_i}][\Delta_i]{}{p}{A}$ follows by
Lemma~\ref{lem:weakening}, and $\isheap[\Delta_i]{\eta_i}$ follows by
Proposition~\ref{prop:preservation}~\ref{item:preservationV}.
\end{proof}

For each preservation property in Proposition~\ref{prop:preservation},
we have a corresponding progress property, which states that the
computation step in the preservation property exists:
\begin{proposition}[progress]
  \label{prop:progress}
  ~
  \begin{enumerate}[(i)]
  \item If $\hastype*[\eta_N]{}{t}{A}$, then there are $p,\eta'_N,\eta'_E, \Delta'$ with 
    $\heval{t}{\eta_N\tick\eta_E\sep\Delta}{p}{\eta'_N\tick\eta'_E\sep\Delta'}$.
  \label{item:progressI}
  \item If $\hastype*[\eta_N]{}{p}{\DelayE A}$, $\isev {\kappa \mapsto t}$, and $\ticked{\eta_N}{p}$, then there are
    $q,\eta'_N,\eta'_E, \Delta'$ with $\evalAdv[\kappa\mapsto
    t]{p}{\eta_N\tick\eta_E\sep\Delta}{q}{\eta'_N\tick\eta'_E\sep\Delta'}$.
  \label{item:progressII}
  \item If $\isenv{\eta_N\tick\eta_E\sep\Delta}$, $\isev e$, and $\eta_E$ is non-empty, then there are
  $\sigma', \Delta'$ with
  $\state{\eta_N\tick\eta_E\sep\Delta}\partstep{e}
  \state{\sigma'\sep\Delta'}$.
  \label{item:progressIII}
  \item If $\hastype*[][\Delta]{}{t}{A}$, then
  there are $p, \eta, \Delta'$ with $\state{t;\Delta}\initstep
  \state{p;\eta\sep\Delta'}$.
  \label{item:progressIV}
  \item If $\isheap{\eta}$, $\isev e$,
  then there are $\eta', \Delta'$ with $\state{p;\eta\sep\Delta}\fullstep{e}
  \state{p;\eta'\sep\Delta'}$.
  \label{item:progressV}
  \end{enumerate}
\end{proposition}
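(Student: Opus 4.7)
The plan is to prove Proposition~\ref{prop:progress} via a semantic typing argument (a logical relation), since a direct syntactic induction on typing cannot handle $\fix$. I would define a family of Kripke-style semantic interpretations indexed by worlds that record a heap context $H$ and a channel context $\Delta$, together with a step-index that justifies the recursive use of types under the later modalities $\DelayE$ and $\DelayA$. Concretely, I would mutually define a value relation $\vinterp{A}{v}$, a term relation $\tinterp{A}{t}$ asserting that $t$ evaluates to a value in $\vinterp{A}{\cdot}$ while preserving well-typedness of the store, and an advance relation asserting that a delayed value, once its clock ticks, can be advanced to a value in $\vinterp{A}{\cdot}$. The semantic interpretation of $\Sig A$ requires a location $l$ whose stored head is in $\vinterp{A}{\cdot}$ and whose stored tail is in the advance relation at $A$, while $\DelayE A$ and $\DelayA A$ are interpreted through the advance relation quantified over possible ticks; the cyclic dependency between $\Sig A$ and $\DelayE(\Sig A)$ is resolved by the step-index in the usual way.

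With the relation in place, the core technical step is the \emph{fundamental theorem}: every typing derivation $\hastype*{\Gamma}{t}{A}$ gives rise to semantic typing of $t$ under compatible substitutions. This is proved by induction on the derivation with one compatibility lemma per typing rule. Most cases are routine; the interesting ones are $\fix\,x.t$, which is discharged by step-indexed induction because $\fix\,x.t$ reduces to $t[\delay(\fix\,x.t)/x]$ and $\delay(\fix\,x.t)$ need only be semantically typed at a smaller step; $\rec*$, which requires an auxiliary lemma showing that $\fmap F$ preserves the semantic interpretation at recursive types; the signal constructor $s ::_A t$ and $\chan$, which require that freshly allocated locations and channels extend the world while preserving semantic typing of in-scope values; and the $\DelayE$ constructors $\wait$, $\watch$, $\sync$, $\tail$, and $\delay\appAE\cdot$, whose semantic typing amounts precisely to showing that the advance semantics can fire when the corresponding clock ticks.

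Parts (i) and (ii) of the proposition then fall out immediately: instantiating the fundamental theorem at the empty substitution and the current world yields the required evaluation derivation for (i), and for (ii) the semantic interpretation at $\DelayE A$ combined with the hypothesis $\ticked{\eta_N}{u}$ supplies the advance derivation. Part (iv) is a direct specialisation of (i) to $t$ typed in the empty heap context. For part (iii) I would case-split on whether the tail of the left-most signal in $\eta_E$ is $\ticked{\eta_N}{\cdot}$: in the negative case the first update rule applies directly, and in the positive case the well-typedness of $\eta_E$ (guaranteed by $\isenv{\sigma\sep\Delta}$) together with part (ii) produces the advance derivation whose result must be a location, by an inversion argument on the semantic interpretation of $\Sig A$, so the second update rule applies. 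Finally, (v) is obtained by iterating (iii): since $\eta_E$ is finite and each update step strictly shrinks it, I proceed by induction on $\len{\eta_E}$, using Proposition~\ref{prop:preservation}(iii) at each iteration to re-establish $\isenv{\cdot}$ as the precondition for the next application of (iii); when $\eta_E$ becomes empty, the step semantics rule fires.

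The main obstacle is the design of the logical relation itself: it must simultaneously (a) support step-indexed recursion to account for $\fix$ and the mutual dependency between $\Sig A$ and $\DelayE(\Sig A)$, (b) track monotone growth of the heap and channel contexts across evaluation and advance, so that values remain semantically typed under world extension, and (c) carry enough information about the contents of the `now' heap to ensure that $\ticked{\eta_N}{u}$ truly implies advancement of $u$ -- this is most delicate for $\watch\,l$, whose advance succeeds only when the current head at $l$ is an $\interm_1$, a condition that must be baked into the advance relation for signals of sum type at partial-signal positions. Once this relation is calibrated, the remaining steps become essentially bookkeeping.
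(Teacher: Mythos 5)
Your overall architecture---a Kripke logical relation and fundamental theorem yielding part (i), with parts (iii)--(v) derived by case analysis and iteration from (i) and (ii)---matches the paper's. But your design of the logical relation diverges in a way that both adds unnecessary machinery and leaves a real gap. The paper's relation (Figure~\ref{fig:log-rel}) interprets $\DelayA A$ and $\DelayE A$ as nothing more than \emph{syntactically} well-typed values, and its interpretation of $\Sig\,A$ constrains only the stored head, not the stored tail. Consequently there is no circularity between $\Sig\,A$ and $\DelayE(\Sig\,A)$, no step-indexing is needed anywhere (recursive types are handled by a plain inductive union of approximants), and the $\fix$ case of the fundamental theorem is immediate because $\delay\,(\fix\,x.t)$ trivially inhabits $\vinterp[]{\DelayA A}{\varepsilon}$ by syntactic typing. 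Part (ii) is then proved by an ordinary structural induction on the typing derivation of $u$, not semantically: the advance semantics recurses structurally on $u$, and the one excursion into the evaluation semantics (the $\delay\,t \appAE v$ case) is discharged by part (i).

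The gap in your version is that you bake semantic advanceability into $\vinterp[]{\DelayE A}{\varepsilon}$ and into the tail position of $\Sig\,A$, yet a delayed value is advanced only at some \emph{later} reactive step, after the update semantics has overwritten the contents of heap locations. Your Kripke worlds track only monotone growth ($\heaple$), which models allocation during a single evaluation but not the non-monotone mutation of signals across time steps; so membership in your advance relation, established when the delayed value is created, is not obviously preserved to the moment its clock ticks. You would need a separate notion of future worlds spanning reactive steps together with a stability lemma for the update semantics---a substantial piece of work your proposal does not mention, and which is especially delicate for $\watch\,l$, whose firing condition depends on the mutable current head at $l$. This is precisely the difficulty the paper's design avoids by keeping the logical relation a single-time-step device and re-establishing everything needed at the start of each reactive step from syntactic heap well-typedness ($\isheap{\eta}$) via type preservation.
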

\noindent
The proof of \ref{item:progressI} uses a logical relations argument,
which we sketch in section~\ref{sec:logical-relation} below, whereas
\ref{item:progressII} follows by induction on the machine value of type
$\DelayE A$, appealing to \ref{item:progressI} in the cases for \ensuremath{\wait}
and \ensuremath{\appAE}. The remaining properties \ref{item:progressIII} to
\ref{item:progressV} follow from \ref{item:progressI},
\ref{item:progressII}, and the corresponding type preservation
properties in Proposition~\ref{prop:preservation}.

\begin{proof}[Proof of Theorem~\ref{thm:productivity} (productivity)]
Part \ref{item:productivityI} follows immediately from
Proposition~\ref{prop:progress}~\ref{item:progressIV}.
To show \ref{item:productivityII}, assume a well-formed reactive
sequence of the form \eqref{eq:wf-sequence}. By induction on the
length $n$ of the sequence, we can show that
$\isheap[\Delta_n]{\eta_n}$ using
Proposition~\ref{prop:preservation}~\ref{item:preservationIV}-\ref{item:preservationV}.
We can then apply
Proposition~\ref{prop:progress}~\ref{item:progressV} to obtain the
desired step $\state{p;\eta_n\sep\Delta_n} \fullstep{e_n}
\state{p;\eta_{n+1}\sep\Delta_{n+1}}$.
\end{proof}

\subsection{Logical Relation}
\label{sec:logical-relation}

\begin{figure}[t]
  \small
  \begin{align*}
  \vinterp{1}{\varepsilon}
  &= \set{\unit},
  \\
  \vinterp{F \times G}{\varepsilon}
  &= \setcom{\pair{p_1}{p_2}}{p_1 \in \vinterp{F}{\varepsilon}
    \land p_2 \in \vinterp{G}{\varepsilon}},
  \\
  \vinterp{F + G}{\varepsilon}
  &= \setcom{\interm_1 \, p}{p \in
    \vinterp{F}{\varepsilon}} \cup
  \setcom{\interm_2 \, p}{p \in
    \vinterp{G}{\varepsilon}}
  \\
  \vinterp{\DelayA A}{\varepsilon} &=
  \setcom{p}{\Hastype{\varepsilon}{p}{\DelayA A}}
  \qquad\text{ where }\quad\Hastype{\eta_N\tick\eta_E\sep\Delta}{t}{A} \quad\text{iff}\quad \hastype*[\heaptype{\eta_N}]{}{t}{A}\\
  \vinterp{\DelayE A}{\varepsilon} &=
  \setcom{p}{\Hastype{\varepsilon}{p}{\DelayE A}}\\
  \vinterp{A \to F}{\varepsilon}
  &= \setcom{ \lambda x.t}{\Hastype{\varepsilon}{\lambda x.t}{A \to F\rhosubs} \land
    \forall \varepsilon' \heapge \varepsilon.p \in \vinterp[]{A}{\varepsilon'}. 
    t[p/x] \in \tinterp{F}{\varepsilon'}}
    \\
  \vinterp{\Chan\,A}{\sigma\sep\Delta}
  &= \setcom{\kappa}{\kappa \chantype A \in \Delta}
  \\
  \vinterp{\Sig\, F}{\eta_N\tick\eta_E\sep\Delta}
  &= \setcom{l}{l \sigtype F\rhosubs \mapsto \sig p q \in \eta_N, p \in \vinterp{F}{\eta_N\tick\eta_E\sep\Delta}}
  \\
  \vinterp{\mu\alpha. F}{\varepsilon} &= \lfp\bigl(\Psi^{\mu\alpha.F}_\rho\bigr)(\varepsilon),
  \quad
  \begin{aligned}[t]
    \text{where}\;
  \lfp(\Psi) &= \bigcap \setcom{X}{X \text{ is Kripke monotone and } \Psi(X) \subseteq X}\\
        \Psi^{\mu\alpha.F}_\rho(X)(\varepsilon)
  &= \setcom{\cons[{(\mu\alpha.F)\rhosubs}]\,p}{p \in \vinterp[{\rho[\alpha\mapsto (X,(\mu\alpha.F)\rhosubs)]}]{F}{\varepsilon}}
  \end{aligned}
\\
  \vinterp{\alpha}{\varepsilon} &=\setcom{p \in X(\varepsilon)}{\Hastype{\varepsilon}{p}{A}} \quad \text{ if } \rho(\alpha) = (X,A)
  \\[1em]
  \vinterp[]{A}{\varepsilon} &= \vinterp[\emptyset]{A}{\varepsilon}
  \hspace{2cm} 
  \tinterp{F}{\varepsilon}
  = \setcom{t}{
    \exists p, \varepsilon'
    . \heval{t}{\varepsilon}{p}{\varepsilon'}
    \land p \in \vinterp{F}{\varepsilon'}}
  \\[1em]
      \hinterp{\varepsilon}  &= 
     \setcom{\eta}{\forall l \sigtype A \mapsto \sig{p}{q} \in \eta. p \in \vinterp[]{A}{\varepsilon}}   
  \\[1em]
    \cinterp{\cdot}{\varepsilon}  &= \set{\ast}
  \hspace{2.3cm}
  \cinterp{\Gamma,x:A}{\varepsilon}
  = \setcom{\gamma[x \mapsto p]}
  {\gamma \in \cinterp{\Gamma}{\varepsilon},p
    \in\vinterp[]{A}{\varepsilon}}
\end{align*}
  \vspace{-1.5em}
\caption{Logical relation. Recall that $A,B$ range over closed types, and $F,G$ range over possibly open types.}
  \label{fig:log-rel}
\end{figure}

We sketch the logical relations argument that underpins the proof of
Proposition~\ref{prop:progress}~\ref{item:progressI}. This proof
proceeds by constructing for each closed type $A$ and each environment
$\varepsilon$ a set of terms $\tinterp[]{A}{\varepsilon}$ that are
semantically of type $A$ in the context of the environment
$\varepsilon$. By construction, each term in
$\tinterp[]{A}{\varepsilon}$ evaluates to a value.
Proposition~\ref{prop:progress}~\ref{item:progressI} then follows from
the fundamental property of the logical relation
$\tinterp[]{A}{\varepsilon}$, which states that all well-typed terms
of type $A$ are in $\tinterp[]{A}{\varepsilon}$.

The proof of the fundamental property relies on the fact that the
logical relation satisfies \emph{Kripke monotonicity}, which means
that it is closed under a suitable ordering $\heaple$ on
$\varepsilon$. This ordering captures the fact that the evaluation
semantics may allocate additional signals and channels. On heaps and
on channel contexts, $\heaple$ is defined as the subsequence relation.
That is, $\Delta \heaple \Delta'$ iff $\Delta$ can be obtained from
$\Delta'$ by removing some of its elements. On heaps, $\heaple$ is
defined analogously. We lift this ordering to environments by
pointwise ordering on the \now heap and the channel context
components:
\[
\eta_N\tick\eta_E\sep\Delta \heaple \eta'_N\tick\eta'_E\sep\Delta' \qquad \text{iff} \qquad
\eta_N \heaple \eta'_N \land \Delta \heaple \Delta' \land \eta_E = \eta'_E
\]
Note that the \earlier heap remains constant. This matches the
following property of the evaluation semantics, which can be proved by a
straightforward induction on the evaluation semantics:
\begin{lemma}
If $\heval{t}{\varepsilon}{p}{\varepsilon'}$, then $\varepsilon
\heaple \varepsilon'$.
\end{lemma}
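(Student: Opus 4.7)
The plan is to prove the lemma by structural induction on the derivation of $\heval{t}{\varepsilon}{v}{\varepsilon'}$, after first establishing that $\heaple$ is reflexive and transitive. Reflexivity is immediate from the definition, and transitivity follows componentwise: subsequence-inclusion on heaps and channel contexts is transitive, and equality of earlier heaps is trivially transitive. These two facts will handle all the routine cases in one stroke.

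The inductive cases split into three kinds. First, the value rule gives $\varepsilon = \varepsilon'$, so reflexivity of $\heaple$ suffices. Second, all the purely compositional rules --- pairs, projections, sum injections, case analysis, function application, $\cons[\mu\alpha.A]$, $\rec*$, $\appA$, $\appAE$, $\wait$, $\sync$, $\tail$, $\head$, and the reduction case for $\fix$ --- use the induction hypothesis on each premise and then chain the resulting $\heaple$ relations using transitivity. In each of these cases, the rule threads the environment through its premises in sequence, so the overall $\varepsilon \heaple \varepsilon'$ collapses to a sequence of applications of the induction hypothesis. Third, the two allocation rules do the real work: the rule for $\chan$ extends $\Delta$ to $\Delta,\kappa \chantype A$ (keeping both heaps fixed), which is a subsequence extension on the channel context; and the rule for $s ::_A t$ extends $\eta_N$ to $\eta_N, l \sigtype A \mapsto \sig[\unchanged]{v}{w}$ after running the premises, which is a subsequence extension on the `now' heap while leaving $\eta_E$ untouched. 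In both cases we combine the extension with the $\heaple$ facts from the premises via transitivity.

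The only mild subtlety is making sure the freshness conditions from $\allocate{\cdot}$ are compatible with the subsequence ordering: since $\kappa \nin \Delta$ and $l \nin \dom{\eta_N\tick\eta_E}$, appending them at the end of $\Delta$ and $\eta_N$ respectively yields a valid extension. Another easy-to-overlook point is that the earlier heap $\eta_E$ must be preserved \emph{exactly} in every rule, not merely extended; inspection of Figure~\ref{fig:machine} confirms that no rule touches $\eta_E$ at all --- every rule either leaves $\varepsilon$ unchanged or modifies only $\eta_N$ and $\Delta$.

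I do not expect any real obstacle here; the lemma is essentially a bookkeeping observation about the evaluation semantics. The proof is mechanical once one has set up the componentwise characterisation of $\heaple$ and noted its reflexivity and transitivity. The only place where one must be careful is in rules whose premises are evaluated left-to-right with threaded environments (for instance $\pair{s}{t}$ and $\sync\,s\,t$), where one needs to chain two applications of the induction hypothesis; but this is again just transitivity.
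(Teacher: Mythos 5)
Your proof is correct and is exactly the argument the paper relies on (the paper states the lemma without an inline proof, deferring to its Lean formalisation): a structural induction on the evaluation derivation, with reflexivity and transitivity of the componentwise subsequence ordering handling the threaded premises, and the $\chan$ and $::_A$ allocation rules being the only cases that genuinely extend $\Delta$ or $\eta_N$ while leaving $\eta_E$ untouched. Nothing is missing.
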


Fig.~\ref{fig:log-rel} defines the logical relations used for proving
the progress property. In addition to the term relation
$\tinterp{F}{\varepsilon}$, we also define a corresponding value
relation $\vinterp{F}{\varepsilon}$. Both $\tinterp{F}{\varepsilon}$
and $\vinterp{F}{\varepsilon}$ are defined more generally for open
types $F$, i.e.\ $F$ may contain free type variables. To
account for this, both logical relations have the additional component
$\rho$, which is a finite map from type variables to pairs $(X,A)$
consisting of a semantic type $X$ and a closed type $A$. More
precisely, $X$ is a function from environments to sets of values of
type $A$. In the definition of the value relation, we use the notation
$F[\rho]$ to apply the syntactic substitution $\rhosyn$ to a type $F$,
where $\rhosyn(\alpha) = A$ iff $\rho(\alpha)=(X,A)$. The definition
of $\vinterp{\mu\alpha. F}{\varepsilon}$ uses a standard least
pre-fixed point construction on semantic types, denoted by $\lfp$, on
a function $\Psi^{\mu\alpha.F}_\rho$. To simplify the proof, $\lfp$ is
limited to semantic types $X$ that are Kripke monotone, i.e.\
$X(\varepsilon) \subseteq X(\varepsilon')$ whenever $\varepsilon
\heaple \varepsilon'$. The least pre-fixed point is well-defined since
$\Psi^{\mu\alpha.F}_\rho$ is monotone, i.e.\
$\Psi^{\mu\alpha.F}_\rho(X)(\varepsilon) \subseteq
\Psi^{\mu\alpha.F}_\rho(Y)(\varepsilon)$ whenever $X(\varepsilon)
\subseteq Y(\varepsilon)$.

Since the proof of the fundamental property proceeds by induction on the
typing judgement, we also have to consider open
terms. To this end, we also define a corresponding context relation
$\cinterp{\Gamma}{\varepsilon}$, which contains term substitutions
$\gamma$ such that $\gamma(x) \in \vinterp[]{A}{\varepsilon}$ iff $x :
A \in \Gamma$; and a heap relation
$\hinterp{\varepsilon}$ to capture semantically well-typed \now heaps
with respect to an environment $\varepsilon$.

\begin{proposition}[fundamental property]
  \label{prop:fundamental}
  If $\hastype*[\heaptype{\eta_N}]{\Gamma}{t}{A}$,
  $\eta_N\tick\eta_E\sep\Delta \heaple \varepsilon$, $\eta_N \in
  \hinterp{\varepsilon}$, and $\gamma \in \cinterp{\Gamma}{\varepsilon}$,
  then $t\gamma \in \tinterp[]{A}{\varepsilon}$.
\end{proposition}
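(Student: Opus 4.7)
The plan is to prove the fundamental property by structural induction on the derivation of $\hastype*[\heaptype{\eta_N}]{\Gamma}{t}{A}$. For each typing rule, I would unfold the definition of $\tinterp{A}{\varepsilon}$ and the value relation $\vinterp{A}{\varepsilon}$, apply the induction hypothesis to the premises, and assemble the resulting evaluation via the rules of Figure~\ref{fig:machine}. Since evaluating one subterm may enlarge the store and channel context before the next subterm is evaluated, the key preliminary work consists of (a) a Kripke monotonicity lemma -- if $v \in \vinterp{A}{\varepsilon}$ and $\varepsilon \heaple \varepsilon'$, then $v \in \vinterp{A}{\varepsilon'}$ -- which lifts to $\hinterp{\cdot}$, $\cinterp{\Gamma}{\cdot}$, and $\tinterp{A}{\cdot}$ and relies on Lemma~\ref{lem:weakening} for the function and delayed-value cases; and (b) a lemma stating that evaluation preserves the Kripke order and semantic well-typedness of the `now' heap, i.e.\ if $\heval{t}{\varepsilon}{v}{\varepsilon'}$ and $\eta_N \in \hinterp{\varepsilon}$ then $\eta'_N \in \hinterp{\varepsilon'}$.

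The routine cases -- variables, unit, pairs and projections, injections and case analysis, and lambda/application -- follow by unfolding the definitions and invoking the IH, with the lambda case needing Kripke monotonicity so that the body of a closure can be re-interpreted under a future environment at application time. The cases for the introduction forms of the existential modality ($\wait$, $\watch$, $\sync$, $\never$, $\appAE$, $\tail$) and for $\appA$ and $\delay$ reduce to syntactic well-typedness of the resulting value under $\heaptype{\eta'_N}$, which follows directly from the IH combined with weakening (Lemma~\ref{lem:weakening}). The cases for $\chan$ and the signal constructor $::_A$ require showing that fresh allocation extends $\varepsilon$ to some $\varepsilon' \heapge \varepsilon$ with $\eta'_N \in \hinterp{\varepsilon'}$: pick a fresh $l = \allocate{\eta_N\tick\eta_E}$ (resp.\ $\kappa = \allocate{\Delta}$), store the appropriate value, and use Kripke monotonicity to transport the remaining obligations to $\varepsilon'$.

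Two cases need more care. For the guarded fixed point $\fix\,x.t$, the evaluation rule unwinds to $t[\delay\,(\fix\,x.t)/x]$, so I would first observe that $\delay\,(\fix\,x.t)\gamma \in \vinterp{\DelayA A}{\varepsilon}$ -- which here is purely the syntactic typing obligation $\hastype*[\heaptype{\eta_N}]{}{\delay\,(\fix\,x.t)\gamma}{\DelayA A}$ -- and then invoke the IH on the unwound body under the extended substitution $\gamma[x\mapsto\delay\,(\fix\,x.t)\gamma]$, which lies in $\cinterp{\Gamma,x:\DelayA A}{\varepsilon}$. For the recursive types $\mu\alpha.A$, $\vinterp{\mu\alpha.A}{\varepsilon}$ is defined as a union of approximants $T_i$, so the $\cons[\mu\alpha.A]$ introduction case produces an element of $T_{i+1}$ whenever the argument lies in $\vinterp[{[\alpha\mapsto(T_i,\mu\alpha.A)]}]{A}{\varepsilon}$; I would need a \emph{semantic substitution lemma} stating
\[
\vinterp[{[\alpha\mapsto(\vinterp[]{\mu\alpha.A}{\cdot},\mu\alpha.A)]}]{A}{\varepsilon} = \vinterp[]{A[\mu\alpha.A/\alpha]}{\varepsilon}
\]
to transport between $\rho$-indexed and substituted views. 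With this in hand, the eliminator $\rec*$ case is dispatched by induction on the approximant level $i$ at which the scrutinee inhabits $\vinterp{\mu\alpha.A}{\varepsilon}$, using the compatibility of $\fmap{A}$ with the value relation on $A$.

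The main obstacle I expect is the recursive-type case, specifically getting the semantic substitution lemma and the approximant induction on $i$ to mesh with the $\fmap{A}$-based operational semantics of $\rec*$; the proof must descend under $A$ and produce semantic pairs $\pair{y}{\rec{x}{s}{y}}$ that simultaneously inhabit $\vinterp[]{A[(\mu\alpha.A\times B)/\alpha]}{\varepsilon}$ and compute correctly in $\tinterp{B}{\varepsilon}$. The remaining bookkeeping -- Kripke monotonicity, fresh allocation, and propagation of $\hinterp{\cdot}$ across subterms -- is routine but pervasive and must be threaded carefully through every rule with multiple premises.
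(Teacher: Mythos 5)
Your proposal follows essentially the same route as the paper: a lengthy induction on the extended typing derivation, underpinned by Kripke monotonicity of the value, context, and heap relations under $\heaple$ (together with weakening and the fact that evaluation only grows the environment), with the later modalities handled purely via their syntactic-typing interpretation and recursive types via the approximant union. The paper only sketches this proof and defers the details to its Lean formalisation, but your case analysis, the auxiliary lemmas you identify, and your assessment of where the difficulty lies (the $\mu\alpha.A$/$\rec*$ case) all match its stated strategy.
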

\noindent
The fundamental property is proved by a lengthy induction on
$\hastype*[\heaptype{\eta_N}]{\Gamma}{t}{A}$. The proof relies on the
fact that the value, context, and heap relations are Kripke monotone:
\begin{lemma}
  Let $\varepsilon \heaple \varepsilon'$. Then
  $\vinterp[]{A}{\varepsilon} \subseteq \vinterp[]{A}{\varepsilon'}$,
  $\cinterp{\Gamma}{\varepsilon} \subseteq \cinterp{\Gamma}{\varepsilon'}$, and
  $\hinterp{\varepsilon} \subseteq \hinterp{\varepsilon'}$
\end{lemma}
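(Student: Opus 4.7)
The plan is to prove all three inclusions together, with the value relation closure carrying the core content and the other two following as easy consequences. I would first record two preliminaries: that $\heaple$ is transitive on environments, since it is defined componentwise from the subsequence relation on $\eta_N$ and $\Delta$; and that $\varepsilon \heaple \varepsilon'$ entails $\heaptype{\eta_N} \subseteq \heaptype{\eta'_N}$ and $\Delta \subseteq \Delta'$, so that Lemma~\ref{lem:weakening} can be used to transport a typing judgement $\Hastype{\varepsilon}{v}{B}$ to $\Hastype{\varepsilon'}{v}{B}$.

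The value relation inclusion $\vinterp[]{A}{\varepsilon} \subseteq \vinterp[]{A}{\varepsilon'}$ I would prove by structural induction on $A$, suitably generalised to open types and assignments $\rho$. Most cases are routine: unit, products, and sums use the induction hypothesis on components; $\DelayE A$ and $\DelayA A$ reduce to the weakening observation above; $\Chan A$ uses monotonicity of channel membership under $\Delta \heaple \Delta'$; and $\Sig A$ combines monotonicity of the heap component with the induction hypothesis on the stored head value. The function case $A \to B$ is straightforward once transitivity is in hand: any $\varepsilon'' \heapge \varepsilon'$ also satisfies $\varepsilon'' \heapge \varepsilon$, so the universal quantifier in the definition transfers directly, and the typing side condition transfers by weakening.

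The main obstacle will be the recursive type case $\mu\alpha.A$, since its interpretation is defined as a union of stage sets $T_i(\varepsilon)$. For this I would nest an induction on the stage index $i$ showing that $T_i(\varepsilon) \subseteq T_i(\varepsilon')$. The base case is immediate since $T_0 = \emptyset$, and the inductive step invokes the outer induction hypothesis on $A$ with the extended assignment $[\alpha \mapsto (T_i,\mu\alpha.A)]$. Making this work cleanly requires formulating the outer induction hypothesis uniformly in $\rho$, so that the variable case $\vinterp{\alpha}{\varepsilon}$ goes through whenever the semantic component $T$ attached to $\alpha$ is itself monotone; setting up this joint statement so that the two inductions mesh is the technical crux of the proof.

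Once the value relation inclusion is established, the remaining two inclusions follow easily. The context relation inclusion $\cinterp{\Gamma}{\varepsilon} \subseteq \cinterp{\Gamma}{\varepsilon'}$ follows by induction on $\Gamma$, applying the value relation inclusion at each variable binding to transport $\gamma(x) \in \vinterp[]{A}{\varepsilon}$ into $\vinterp[]{A}{\varepsilon'}$. The heap relation inclusion $\hinterp{\varepsilon} \subseteq \hinterp{\varepsilon'}$ follows directly from the definition: for each entry $l \sigtype A \mapsto \sig v w$ of a heap $\eta \in \hinterp{\varepsilon}$, its head value $v$ lies in $\vinterp[]{A}{\varepsilon}$ and hence, by the value relation closure, in $\vinterp[]{A}{\varepsilon'}$.
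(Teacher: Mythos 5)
Your proposal is correct and takes the route the paper intends: the paper states this lemma without a written proof (deferring to the Lean formalisation), and the expected argument is exactly the structural induction on types you give, with the arrow case discharged by transitivity of the Kripke ordering, the $\DelayE$/$\DelayA$ cases by weakening (Lemma~\ref{lem:weakening}), and the $\mu\alpha.A$ case by the stage-indexed inner induction you describe. Your identification of the uniform generalisation over $\rho$ (restricting to assignments whose semantic components are themselves monotone) as the technical crux of the recursive-type case is apt and is precisely what makes the two inductions mesh.
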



Proposition~\ref{prop:fundamental} is much more general than what we
need to prove Proposition~\ref{prop:progress}~\ref{item:progressI}, so
that the induction hypothesis is strong enough for the induction
argument to succeed.
Proposition~\ref{prop:progress}~\ref{item:progressI} follows
immediately from the following instantiation of
Proposition~\ref{prop:fundamental} to closed terms:
\begin{corollary}[fundamental property]
  If $\hastype*[\eta_N]{}{t}{A}$, then $t \in \tinterp[]{A}{\eta_N\tick\eta_E\sep\Delta}$.
\end{corollary}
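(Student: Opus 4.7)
The plan is to reduce this corollary to Proposition~\ref{prop:fundamental} by instantiating $\Gamma = \emptyset$, $\gamma = \ast$, and $\varepsilon = \eta_N\tick\eta_E\sep\Delta$. Unfolding the hypothesis $\hastype*[\eta_N]{}{t}{A}$ via the definitions in Figure~\ref{fig:isheap} yields both $\hastype*[\heaptype{\eta_N}]{}{t}{A}$ and $\isheap{\eta_N}$. The premises $\varepsilon \heaple \varepsilon$ and $\ast \in \cinterp{\emptyset}{\varepsilon}$ are immediate from reflexivity of $\heaple$ and the definition of the context relation on the empty context, and $t\gamma = t$ since $t$ is closed. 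The only substantive obligation that remains is to establish $\eta_N \in \hinterp{\eta_N\tick\eta_E\sep\Delta}$.

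For this I would extract a separate lemma asserting that whenever $\isheap{\eta}$ holds, $\eta \in \hinterp{\eta\tick\eta_E\sep\Delta}$ for any $\eta_E$ and $\Delta$. I would prove it by induction on the derivation of $\isheap{\eta}$. The empty case is vacuous. In the inductive step we have $\isheap{\eta^0, l \sigtype A \mapsto \sig{u}{w}}$ with $\isheap{\eta^0}$ and $\hastype*[\heaptype{\eta^0}]{}{u}{A}$; the induction hypothesis gives $\eta^0 \in \hinterp{\eta^0\tick\eta_E\sep\Delta}$. Applying Proposition~\ref{prop:fundamental} to the closed well-typed term $u$ at the strictly smaller heap yields $u \in \tinterp[]{A}{\eta^0\tick\eta_E\sep\Delta}$; since $u$ is a value, it evaluates to itself in zero steps, so $u \in \vinterp[]{A}{\eta^0\tick\eta_E\sep\Delta}$. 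Closure of the value and heap relations under $\heaple$ (the lemma stated just before Proposition~\ref{prop:fundamental}) then lifts both the new membership and the old ones to the extended environment, completing the induction step.

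With this auxiliary lemma in hand, the corollary follows by feeding $\eta_N \in \hinterp{\eta_N\tick\eta_E\sep\Delta}$ together with the trivially satisfied premises into Proposition~\ref{prop:fundamental} and concluding $t \in \tinterp[]{A}{\eta_N\tick\eta_E\sep\Delta}$.

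The main subtlety is the apparent circularity of invoking the fundamental property inside the proof of the heap-membership assumption that the fundamental property itself requires. The circularity is only apparent: the induction on $\isheap{\eta}$ strictly shrinks the heap before each appeal to Proposition~\ref{prop:fundamental}, so each such appeal is discharged by a heap-membership already supplied by the induction hypothesis at a smaller stage. In my view, getting this layering right—together with a clean use of closure under $\heaple$ to transport values from the prefix environment to the full environment without having to re-prove anything about previously stored entries—is the only nontrivial aspect of the argument, and it is the part I would verify most carefully in the Lean formalisation.
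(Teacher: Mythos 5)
Your proposal is correct and matches the paper's (implicit) argument: the paper presents the corollary as a direct instantiation of Proposition~\ref{prop:fundamental} with $\Gamma = \cdot$ and $\varepsilon = \eta_N\tick\eta_E\sep\Delta$, leaving the obligation $\eta_N \in \hinterp{\eta_N\tick\eta_E\sep\Delta}$ unstated. Your auxiliary lemma deriving semantic heap membership from $\isheap{\eta_N}$ by induction on its derivation --- reusing the already-proved fundamental property on the strictly smaller prefix heap and transporting via $\heaple$-closure --- is exactly the right way to discharge that obligation, and your observation that the apparent circularity is resolved by the layering of the induction is accurate.
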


\subsection{Timing of Delayed Computations}
\label{sec:ticked-clock-correspondence}
Finally, we make the correspondences between $\ticked{\eta}{q}$,
$\cl[\eta]{q}$, and $\cl{q[\eta]}$ precise:
\begin{proposition}[$\sym{ticked}$-$\sym{cl}$ correspondence]
  \label{prop:ticked-clock-correspondence}
  If $\state{\emptyheap\tick\eta\sep\Delta}\partstep*{\kappa \mapsto
  t} \state{\varepsilon}$ and
  $\heval{s}{\varepsilon}{p'}{\eta_N\tick\eta_E\sep\Delta''}$ with
  $\isheap{\eta}$, then the correspondence \eqref{eq:ticked-clock-correspondence} holds
  for any $\hastype*[{\heaptype{\eta_N}}]{}{q}{\DelayE A}$.
\end{proposition}
\noindent
That is, after any sequence of updates
$\state{\emptyheap\tick\eta\sep\Delta}\partstep*{\kappa \mapsto t}
\state{\varepsilon}$ and any further allocations in $\varepsilon$
performed by arbitrary evaluation
$\heval{s}{\varepsilon}{p'}{\eta_N\tick\eta_E\sep\Delta''}$, the
correspondence~\eqref{eq:ticked-clock-correspondence} between
$\ticked{\eta_N}{\cdot}$ and $\cl[\eta]{\cdot}$ holds. This
characterisation covers all environments
$\eta_N\tick\eta_E\sep\Delta''$ that the machine may encounter during
a reactive step $\state{p;\eta\sep\Delta}\fullstep{\kappa \mapsto t}
\state{p;\eta'\sep\Delta'}$. According to
Proposition~\ref{prop:preservation} the machine preserves the typing
of terms and heaps, and thus the above correspondence holds at any
time of the machine's execution.
Proposition~\ref{prop:ticked-clock-correspondence} confirms that the
machine faithfully implements the timing of delayed computations
captured by the notion of clocks. In addition, the two definitions of
clocks ($\cl[\eta]{p}$ and $\cl{v}$) agree for well-typed values and
heaps:
\begin{proposition}[clock correspondence]
  \label{prop:clock-correspondence}
  If $\hastype*[\eta]{}{p}{\DelayE A}$, then $(\cl[\eta]{p})[\eta] =
  \cl{p[\eta]}$.
\end{proposition}

\section{Related Work}
\label{sec:related-work}

\paragraph{Modal FRP}
The use of modal types for FRP was first proposed by
\citet{krishnaswami11UltrametricSemanticsReactive,krishnaswami11SemanticModelGraphical}, and its connection
to linear temporal logic was independently discovered by
\citet{jeltsch12CommonCategoricalSemantics} and
\citet{jeffrey12LTLTypesFRP}.
\citet{krishnaswami13HigherorderFunctionalReactive} was the first to
exploit modal types to construct a higher-order FRP language that
provably does not suffer from space leaks. This work was later
extended by \citet{bahr19SimplyRaTTFitchstyle} to simplify the type
system using Fitch-style tokens in the typing
context~\citep{clouston18FitchStyleModalLambda}. However, unlike
\Rizzo, these and subsequent
languages~\citep{bahr22ModalFRPAll,livelyRaTT} with formal guarantees
about space leaks are \emph{synchronous}, i.e.\ there is a single
global clock and all signals update according to this global clock.
Asynchronous FRP for GUIs was pioneered by Elm~\citep{czaplicki13Elm},
which processes signals asynchronously but gives no static guarantees
ruling out space leaks. Modal type systems were later brought to the
asynchronous setting, including languages aimed specifically at
GUIs~\citep{graulund21AdjointReactiveGUI,disch26FunctionalReactiveGUI}
as well as the Async RaTT
calculus~\citep{bahr23AsynchronousModalFRP}. To our
knowledge, only Async RaTT makes formal operational guarantees
including productivity and the absence of space leaks. We refer back
to section~\ref{sec:async-ratt} for a detailed comparison of
\Rizzo and Async RaTT.

\paragraph{Signals as Mutable Cells}
The usual proof technique to establish formal guarantees about space
leaks in modal FRP (first introduced by
\citet{krishnaswami13HigherorderFunctionalReactive}, see above) is
based on an operational semantics that has a special heap to store
delayed computations and evicts such delayed computations from the
heap after each tick of the global clock. By contrast, we use a
special heap for signals rather than delayed computations, and our
operational semantics ensures that we can only store the most recent
value of each signal. The representation of FRP signals as mutable
cells in a dataflow graph is found in other FRP implementations such
as FrTime~\citep{FrTime} and Flapjax~\citep{flapjax}, as well as in
the original work of
\citet{krishnaswami11UltrametricSemanticsReactive,krishnaswami11SemanticModelGraphical}
on modal FRP: Although their denotational semantics defines signals as
infinite streams, which may suffer from space leaks,
\citeauthor{krishnaswami11UltrametricSemanticsReactive} also give a
DSL that implements signals as mutable cells in a dataflow graph --
similar to \Rizzo's heap of stored signals, albeit in a synchronous
setting. Later modal FRP languages, which establish guarantees about
space leaks in the style of
\citet{krishnaswami13HigherorderFunctionalReactive}, did not pick up
this representation. But it is standard in imperative reactive
programming approaches and now underpins mainstream reactive
user-interface frameworks such as SolidJS, Vue, Angular, Svelte, and
the TC39 Signals proposal for JavaScript~\citep{tc39signals}. Unlike
FRP languages that use mutable cells merely as an efficient underlying
implementation of read-only signals, these imperative frameworks
expose signals as \emph{writable} cells that can be mutated through
setters -- typically from event callbacks.

\paragraph{FRP as a Library}
Operational properties of reactive programs can also be ensured by
devising a library with a carefully restricted interface. For example,
Yampa~\citep{nilsson02yampa} uses arrows~\citep{hughes2000arrows} to
ensure causality of signal functions, and
FRPNow!~\citep{ploeg15FRPNow} uses a monadic interface to avoid some
sources of space leaks. A more recent refinement of
Yampa~\citep{barenz18Rhine} annotates signal functions with type-level
clocks, which allows the construction of dataflow graphs that combine
subsystems running at different clock speeds. These type-level clocks
are \emph{statically} determined at compile time with the aim
of providing efficient resampling between subsystems, whereas \Rizzo's
clocks may \emph{dynamically} change over time to allow for
programs with dynamically changing dataflows.


\paragraph{Synchronous Languages} The set of crucial operational
properties we expect from reactive programs depends on the application
domain. For example, most programming languages do not enforce
termination, and similarly we may sacrifice the productivity guarantee
in order to simplify \Rizzo by allowing general recursion. However,
other application domains require \emph{stronger} guarantees such as bounded
memory usage and real-time guarantees for each computation step. For
example, \citet{krishnaswami2012higher} use a linear typing
discipline to obtain static memory bounds for FRP programs. Moreover,
synchronous dataflow languages such as
Esterel~\citep{berry85esterel,berry92esterel},
Lustre~\citep{caspi1987lustre}, and Lucid
Synchrone~\citep{pouzet2006lucid} provide static bounds on runtime.
However, these languages are all limited to a synchronous setting
where all signals are updated according to a global clock. Moreover,
synchronous dataflow languages obtain their strong static guarantees
by enforcing strict limits on the dynamic behaviour, disallowing both
time-varying values of arbitrary types (e.g.\ a signal
of signals is not allowed) and dynamic switching (i.e.\ no functionality equivalent
to the \ensuremath{\Varid{switch}} combinator).  Both Lustre and Lucid Synchrone have a
notion of a local clock, which is a stream of Booleans that indicates,
at each tick of the \emph{global} clock, whether the local clock ticks
as well. In \Rizzo, this notion of local clocks is subsumed by partial
signals, which give rise to clocks via $\ensuremath{\watch\mathbin{:}\Sig\;(\Conid{A}\mathbin{+}\mathrm{1})\to \DelayE\;\Conid{A}}$.

\section{Conclusion and Future Work}
\label{sec:conclusion}

\Rizzo takes a different approach to asynchronous modal FRP. Similarly
to the asynchronous modal FRP languages Async
RaTT~\citep{bahr23AsynchronousModalFRP} and
$\lambda_{\sym{Widget}}$~\citep{graulund21AdjointReactiveGUI}, it uses
modal types to keep track of time. However, unlike these languages,
\Rizzo combines modal types with a forgetful semantics so that signals
can be updated in place. This simplifies the type system and extends
the expressiveness of the language, all while maintaining the
operational guarantees of causality, productivity, and absence of
space leaks.

As discussed in section~\ref{sec:equational-reasoning}, $\beta$- and
$\eta$-equality do not hold universally in Async RaTT and \Rizzo.
This suggests future work in studying the equational theory of
asynchronous FRP and devising a logic for convenient equational
reasoning.

\ifanon
\else
\section*{Acknowledgements}

We thank Patrick Bohn Matthiesen and Tobias Skovbæk Brund for useful
comments and suggestions that improved this article.
\fi

\bibliography{paper}

\appendix
\newpage

\section{Example Runs of the Machine}
\label{sec:more-examples}

\subsection{Sample}
Our example program \ensuremath{\Varid{t}} assumes a channel context $\Delta = \set{\kappa_1 :
\ensuremath{\Chan\;\Nat}, \kappa_2 : \ensuremath{\Chan\;\Conid{Char}}}$:
\[
t = \ensuremath{\mathbf{let}\;\Varid{xs}\mathrel{=}\mathrm{0}\mathbin{::}\Varid{mkSig}\;(\wait\;\kappa_1)\;\mathbf{in}\;\mathbf{let}\;\Varid{ys}\mathrel{=}\text{\ttfamily 'a'}\mathbin{::}\Varid{mkSig}\;(\wait\;\kappa_2)\;\mathbf{in}\;\Varid{sample}\;\Varid{xs}\;\Varid{ys}}
\]
This program first constructs two signals $\ensuremath{\Varid{xs}}$ and $\ensuremath{\Varid{ys}}$ from the
two channels $\kappa_1$ and $\kappa_2$ and then samples from $\ensuremath{\Varid{ys}}$
using $\ensuremath{\Varid{xs}}$. Before we run $t$, we translate the definitions of
$\ensuremath{\Varid{mkSig}}$ and \ensuremath{\Varid{sample}} (as given in
section~\ref{sec:delayed-computations}) from the surface syntax into
the core calculus, similarly to how we translate \ensuremath{\Varid{map}} at the end of
section~\ref{sec:recursion}:
\begin{align*}
\ensuremath{\Varid{sample}} &= \ensuremath{\lambda \Varid{xs}.\lambda \Varid{ys}.\Varid{map}\;(\lambda \Varid{x}.(\Varid{x},\head\;\Varid{ys}))\;\Varid{xs}}\\
\ensuremath{\Varid{mkSig}} &= \ensuremath{\fix\;\Varid{r}.\lambda \Varid{d}.\delay\;(\lambda \Varid{r'}.\lambda \Varid{x}.\Varid{x}\mathbin{::}\Varid{r'}\;\Varid{d})\appA\Varid{r}\appAE\Varid{d}}
\end{align*}
Assuming two signals $l_1$ and $l_2$, the term $\ensuremath{\Varid{sample}\;\Varid{l}_{\mathrm{1}}\;\Varid{l}_{\mathrm{2}}}$
evaluates to a signal with head $(p_1,p_2)$, where each $p_i$ is the
head of $l_i$, and the following tail $\ensuremath{\Varid{sp}}$:
\[
\ensuremath{\Varid{sp}} = \ensuremath{\delay\;((\lambda \Varid{r'}.\Varid{r'}\;(\lambda \Varid{x}.(\Varid{x},\head\;\Varid{l}_{\mathrm{2}})))\;\Varid{map})\appAE\tail\;\Varid{l}_{\mathrm{1}}} 
\]
By definition, $\cl[\eta]{\ensuremath{\Varid{sp}}} = \cl[\eta]{\ensuremath{\tail\;\Varid{l}_{\mathrm{1}}}}$, and thus we
know that the signal produced by \ensuremath{\Varid{sample}} updates whenever the signal
$l_1$ updates.

In turn, assuming a value \ensuremath{\Varid{d}\mathbin{:}\DelayE\;\Conid{A}}, the term $\ensuremath{\Varid{mkSig}\;\Varid{d}}$ evaluates to
the following value $\delaySig{d}$:
\[
\delaySig{d} = \ensuremath{\delay\;((\lambda \Varid{r'}.\lambda \Varid{x}.\Varid{x}\mathbin{::}\Varid{r'}\;\Varid{d})\;\Varid{mkSig})\appAE\Varid{d}}
\]
The notation $\delaySig{d}$ is meant to indicate that $d$ is a
placeholder. For example, $\ensuremath{\Varid{mkSig}\;(\wait\;\kappa_1)}$ evaluates to
$\delaySig{\ensuremath{\wait\;\kappa_1}} = \ensuremath{\delay\;((\lambda \Varid{r'}.\lambda \Varid{x}.\Varid{x}\mathbin{::}\Varid{r'}\;(\wait\;\kappa_1))\;\Varid{mkSig})\appAE\wait\;\kappa_1}$.

Note that, according to the evaluation semantics for \ensuremath{\appA},
the function application that appears as an argument to \ensuremath{\delay} in
both $\ensuremath{\Varid{sp}}$ and $\delaySig{d}$ has not been reduced any further. However, we
can $\beta$-reduce both and obtain semantically equivalent terms that
are easier to read:
\begin{align*}
\ensuremath{\Varid{sp}} &\equiv \ensuremath{\delay\;(\Varid{map}\;(\lambda \Varid{x}.(\Varid{x},\head\;\Varid{l}_{\mathrm{2}})))\appAE\tail\;\Varid{l}_{\mathrm{1}}}\\
\delaySig{d} &\equiv \ensuremath{\delay\;(\lambda \Varid{x}.\Varid{x}\mathbin{::}\Varid{mkSig}\;\Varid{d})\appAE\Varid{d}}
\end{align*}

With these observations, we can now see how \ensuremath{\Varid{t}} reacts to alternating
events from the two channels in $\Delta$. To avoid clutter, we elide
heap locations that are not referenced anywhere and also leave out the
type annotations of all heap locations:
\begin{eqnarray*}
  \state{t;\Delta} &\initstep 
  &\state{l_3;l_1 \mapsto \sig[\unchanged]{0}{\delaySig{\ensuremath{\wait\;\kappa_1}}},
  l_2 \mapsto \sig[\unchanged]{\ensuremath{\text{\ttfamily 'a'}}}{\delaySig{\ensuremath{\wait\;\kappa_2}}},
  l_3 \mapsto \sig[\unchanged]{(0,\ensuremath{\text{\ttfamily 'a'}})}{\ensuremath{\Varid{sp}}}\sep\Delta}\\
  &\fullstep{\kappa_1 \mapsto 1} 
  &\state{l_3;l_1 \mapsto \sig[\updated]{1}{\delaySig{\ensuremath{\wait\;\kappa_1}}},
  l_2 \mapsto \sig[\unchanged]{\ensuremath{\text{\ttfamily 'a'}}}{\delaySig{\ensuremath{\wait\;\kappa_2}}},
  l_3 \mapsto \sig[\updated]{(1,\ensuremath{\text{\ttfamily 'a'}})}{\ensuremath{\Varid{sp}}}\sep\Delta}\\
  &\fullstep{\kappa_2 \mapsto \ensuremath{\text{\ttfamily 'b'}}} 
  &\state{l_3;l_1 \mapsto \sig[\unchanged]{1}{\delaySig{\ensuremath{\wait\;\kappa_1}}},
  l_2 \mapsto \sig[\updated]{\ensuremath{\text{\ttfamily 'b'}}}{\delaySig{\ensuremath{\wait\;\kappa_2}}},
  l_3 \mapsto \sig[\unchanged]{(1,\ensuremath{\text{\ttfamily 'a'}})}{\ensuremath{\Varid{sp}}}\sep\Delta}\\
  &\fullstep{\kappa_1 \mapsto 2} 
  &\state{l_3;l_1 \mapsto \sig[\updated]{2}{\delaySig{\ensuremath{\wait\;\kappa_1}}},
  l_2 \mapsto \sig[\unchanged]{\ensuremath{\text{\ttfamily 'b'}}}{\delaySig{\ensuremath{\wait\;\kappa_2}}},
  l_3 \mapsto \sig[\updated]{(2,\ensuremath{\text{\ttfamily 'b'}})}{\ensuremath{\Varid{sp}}}\sep\Delta}
\end{eqnarray*}
The initialisation step allocates three signals on the heap. Each of
the first two, at $l_1$ and $l_2$, updates whenever the corresponding
channel $\kappa_i$ receives an input since $\cl[\eta]{\ensuremath{\tail\;\Varid{l}_{\Varid{i}}}} =
\cl[\eta]{\delaySig{\ensuremath{\wait\;\kappa_i}}} =  \cl[\eta]{\ensuremath{\wait\;\kappa_i}} =
\set{\kappa_i}$. The signal at $l_3$, however, only updates whenever
$\kappa_1$ receives an input since $\cl[\eta]{\ensuremath{\tail\;\Varid{l}_{\mathrm{3}}}} = \cl[\eta]{\ensuremath{\Varid{sp}}} =
\cl[\eta]{\ensuremath{\tail\;\Varid{l}_{\mathrm{1}}}} = \set{\kappa_1}$. That is, we have the following
reactive evaluation for the sequence of events $\tau = \kappa_1
\mapsto 1,\kappa_2 \mapsto \ensuremath{\text{\ttfamily 'b'}},\kappa_1 \mapsto 2$:
\[
\state{t;\Delta}\reacts{\tau}
\state{\ensuremath{(\mathrm{2},\text{\ttfamily 'b'})\mathbin{::}\delay\;(\Varid{map}\;(\lambda \Varid{x}.(\Varid{x},\head\;(\text{\ttfamily 'b'}\mathbin{::}\delaySig{\wait\;\kappa_2}))))\appAE\delaySig{\wait\;\kappa_1}};\Delta}
\]

\subsection{Filter}

We assume a channel context $\Delta = \set{\kappa_1 : \ensuremath{\Chan\;\Nat}}$
and a function $\ensuremath{\Varid{isEven}\mathbin{:}\Nat\to \Conid{Bool}}$ that checks whether a number
is even:
\[
t = \ensuremath{\mathbf{let}\;\Varid{xs}\mathrel{=}\Varid{mkSig}\;(\wait\;\kappa_1)\;\mathbf{in}\;\mathrm{0}\mathbin{::}\Varid{filter}\;\Varid{isEven}\;\Varid{xs}}
\]
That is, $t$ constructs a delayed signal from $\kappa_1$ and then
filters that signal using the $\ensuremath{\Varid{isEven}}$ predicate.
To construct its result, \ensuremath{\Varid{filter}} takes $\ensuremath{\Varid{isEven}}$ and constructs a new
function \ensuremath{\Varid{p}}:
\[
p = \ensuremath{\lambda \Varid{x}.\mathbf{if}\;\Varid{isEven}\;\Varid{x}\;\mathbf{then}\;\con{just}\;\Varid{x}\;\mathbf{else}\;\con{nothing}}
\]
Assuming the channel $\kappa_1$ receives increasing numbers, $t$
produces the following behaviour:
\begin{eqnarray*}
\state{t;\Delta} &\initstep 
  &\state{l_2;l_1 \mapsto \sig[\unchanged]{\ensuremath{\con{nothing}}}{\ensuremath{\delay\;(\Varid{map}\;\Varid{p})\appAE} \delaySig{\ensuremath{\wait\;\kappa_1}}},
  l_2 \mapsto \sig[\unchanged]{0}{\delaySig{\ensuremath{\watch\;\Varid{l}_{\mathrm{1}}}}}\sep\Delta} \\
   &\fullstep{\kappa_1\mapsto 1} 
  & 
  \begin{aligned}[t]
  \langle l_2;
    &l_3 \mapsto \sig[\unchanged]{1}{\delaySig{\ensuremath{\wait\;\kappa_1}}},
     l_1 \mapsto \sig[\updated]{\ensuremath{\con{nothing}}}{\ensuremath{\delay\;((\lambda \Varid{r'}.\Varid{r'}\;\Varid{p})\;\Varid{map})\appAE\tail\;\Varid{l}_{\mathrm{3}}}},\\ 
    &l_2 \mapsto \sig[\unchanged]{0}{\delaySig{\ensuremath{\watch\;\Varid{l}_{\mathrm{1}}}}}\sep\Delta\rangle
  \end{aligned} \\
   &\fullstep{\kappa_1\mapsto 2} 
  & 
  \begin{aligned}[t]
  \langle l_2;
  &l_3 \mapsto \sig[\updated]{2}{\delaySig{\ensuremath{\wait\;\kappa_1}}},
   l_1 \mapsto \sig[\updated]{\ensuremath{\con{just}\;\mathrm{2}}}{\ensuremath{\delay\;((\lambda \Varid{r'}.\Varid{r'}\;\Varid{p})\;\Varid{map})\appAE\tail\;\Varid{l}_{\mathrm{3}}}},\\ 
  &l_2 \mapsto \sig[\updated]{2}{\delaySig{\ensuremath{\watch\;\Varid{l}_{\mathrm{1}}}}}\sep\Delta\rangle
  \end{aligned}
\end{eqnarray*}
The initialisation step allocates two signals. The signal of type
$\ensuremath{\Sig\;(\Conid{Maybe}\;\Nat)}$ constructed by $\ensuremath{\Varid{filter}\;\Varid{isEven}\;\Varid{xs}}$ using $\ensuremath{\con{nothing}\mathbin{::}(\Varid{map}\;\Varid{p}\appIE\Varid{xs})}$ is stored at $l_1$. The signal of type $\ensuremath{\Sig\;\Nat}$
stored at $l_2$ is the result of evaluating $\ensuremath{\mathrm{0}\mathbin{::}\Varid{mkSig}\;(\watch\;\Varid{l}_{\mathrm{1}})}$. When the first input on $\kappa_1$ arrives, signal $l_1$ needs
updating since $\cl[\eta]{\ensuremath{\tail\;\Varid{l}_{\mathrm{1}}}} = \cl[\eta]{\delaySig{\ensuremath{\wait\;\kappa_1}}} =
\cl[\eta]{\ensuremath{\wait\;\kappa_1}} =  \set{\kappa_1}$. This requires the delayed
computation $\delaySig{\ensuremath{\wait\;\kappa_1}}$ to be performed  by the advance
semantics, which in turn results in the allocation of the signal at
$l_3$. However, the signal at $l_2$ is not updated since $\cl[\eta]{\ensuremath{\tail\;\Varid{l}_{\mathrm{2}}}} = \cl[\eta]{\delaySig{\ensuremath{\watch\;\Varid{l}_{\mathrm{1}}}}} =  \cl[\eta]{\ensuremath{\watch\;\Varid{l}_{\mathrm{1}}}} =  \set{l_1}$,
and while the signal at $l_1$ was updated (indicated by $\updated$),
its current value is not of the form $\ensuremath{\con{just}\;\Varid{q}}$. Only after receiving
the next input on channel $\kappa_1$, the signal at $l_1$ is updated
to have current value $\ensuremath{\con{just}\;\mathrm{2}}$ and thus also $l_2$ is updated. That
is, the events $\tau = \kappa_1 \mapsto 1,\kappa_1 \mapsto 2$
cause the following reactive evaluation:
\[
\state{t;\Delta}\reacts{\tau}
\state{\ensuremath{\mathrm{2}\mathbin{::}\delaySig{\watch\;(\con{just}\;\mathrm{2}\mathbin{::}\delay\;((\lambda \Varid{r'}.\Varid{r'}\;\Varid{p})\;\Varid{map})\appAE\delaySig{\wait\;\kappa_1})}};\Delta}
\]

\subsection{Switch}
We assume a channel context $\Delta = \set{\kappa_1 : \ensuremath{\Chan\;\Nat},
\kappa_2 : \ensuremath{\Chan\;\Nat}}$ to implement the following example involving
\ensuremath{\Varid{switch}}:
\[
t = \ensuremath{\mathbf{let}\;\Varid{xs}\mathrel{=}\mathrm{0}\mathbin{::}\Varid{mkSig}\;(\wait\;\kappa_1)\;\mathbf{in}\;\mathbf{let}\;\Varid{ys}\mathrel{=}\Varid{mkSig}\;(\wait\;\kappa_2)\;\mathbf{in}\;\Varid{switch}\;\Varid{xs}\;\Varid{ys}}
\]
That is, the signal produced by \ensuremath{\Varid{t}} first behaves like the signal
produced by $\kappa_1$, but then behaves as the signal produced by
$\kappa_2$ as soon as $\kappa_2$ receives its first input.

We again translate the definition of \ensuremath{\Varid{switch}} into the core calculus
of \Rizzo, similarly to how we translate \ensuremath{\Varid{map}} and \ensuremath{\Varid{sample}}:
\[
\begin{aligned}
\ensuremath{\Varid{switch}} &= \ensuremath{\fix\;\Varid{r}.\lambda \Varid{s}.\lambda \Varid{d}.\mathbf{let}\;\Varid{x}\mathrel{=}\head\;\Varid{s}\;\mathbf{in}\;\mathbf{let}\;\Varid{xs}\mathrel{=}\tail\;\Varid{s}\;\mathbf{in}\;\Varid{x}\mathbin{::}(\delay\;\Varid{cont}\appA\Varid{r}\appAE\select\;\Varid{xs}\;\Varid{d})}\\
\ensuremath{\Varid{cont}} &= \ensuremath{\lambda \Varid{r'}.\lambda \Varid{y}.\mathbf{case}\;\Varid{y}\;\mathbf{of}\;\{\mskip1.5mu \con{left}\;\Varid{z}.\Varid{r'}\;\Varid{z}\;\Varid{d};\con{right}\;\Varid{d'}.\Varid{d'};\con{both}\;\Varid{u}\;\Varid{d'}.\Varid{d'}\mskip1.5mu\}}
\end{aligned}
\]
Assuming a signal \ensuremath{\Varid{l}_{\mathrm{1}}} and a delayed signal \ensuremath{\Varid{d}\mathbin{:}\DelayE\;(\Sig\;\Nat)}, the
term \ensuremath{\Varid{switch}\;\Varid{l}_{\mathrm{1}}\;\Varid{d}} evaluates to a signal whose head is the head of
\ensuremath{\Varid{l}_{\mathrm{1}}} and whose tail \ensuremath{\Varid{sw}} is
\[
\ensuremath{\Varid{sw}} = \ensuremath{\delay\;(\Varid{cont}\;\Varid{switch})\appAE\select\;(\tail\;\Varid{l}_{\mathrm{1}})\;\Varid{d}},
\]
which we may $\beta$-reduce to the more readable
\[
\begin{aligned}
\ensuremath{\Varid{sw}} &\equiv \ensuremath{\delay\;\Varid{cont'}\appAE\select\;(\tail\;\Varid{l}_{\mathrm{1}})\;\Varid{d}}\\
\ensuremath{\Varid{cont'}} &= \ensuremath{\lambda \Varid{y}.\mathbf{case}\;\Varid{y}\;\mathbf{of}\;\{\mskip1.5mu \con{left}\;\Varid{z}.\Varid{switch}\;\Varid{z}\;\Varid{d};\con{right}\;\Varid{d'}.\Varid{d'};\con{both}\;\Varid{u}\;\Varid{d'}.\Varid{d'}\mskip1.5mu\}}
\end{aligned}
\]
By definition, $\cl[\eta]{\ensuremath{\Varid{sw}}} = \cl[\eta]{\ensuremath{\select\;(\tail\;\Varid{l}_{\mathrm{1}})\;\Varid{d}}} =
\cl[\eta]{\ensuremath{\tail\;\Varid{l}_{\mathrm{1}}}} \cup \cl[\eta]{\ensuremath{\Varid{d}}}$. If \ensuremath{\Varid{d}} is the result of
evaluating \ensuremath{\Varid{mkSig}\;(\wait\;\kappa_2)}, namely $d =\delaySig{\ensuremath{\wait\;\kappa_2}}$, then $\cl[\eta]{d} = \set{\kappa_2}$, and
$\cl[\eta]{\ensuremath{\tail\;\Varid{l}_{\mathrm{1}}}} = \set{\kappa_1}$; hence $\cl[\eta]{\ensuremath{\Varid{sw}}} =
\set{\kappa_1,\kappa_2}$, i.e.\ the switched signal updates whenever
\emph{either} \ensuremath{\Varid{l}_{\mathrm{1}}} (the current signal) or \ensuremath{\Varid{d}} (the switch trigger)
ticks.

As in previous examples, we elide heap locations (except $l_1$) that
are not referenced anywhere as well as the type annotations of heap
locations. The program \ensuremath{\Varid{t}} then reacts to the events $\kappa_1 \mapsto
1, \kappa_1 \mapsto 2, \kappa_2 \mapsto 5, \kappa_1 \mapsto 3$ as
follows:
\begin{eqnarray*}
  \state{t;\Delta} &\initstep
  &\state{l_2;l_1 \mapsto \sig[\unchanged]{0}{\delaySig{\ensuremath{\wait\;\kappa_1}}},
  l_2 \mapsto \sig[\unchanged]{0}{\ensuremath{\Varid{sw}}}\sep\Delta}\\
  &\fullstep{\kappa_1 \mapsto 1}
  &\state{l_2;l_1 \mapsto \sig[\updated]{1}{\delaySig{\ensuremath{\wait\;\kappa_1}}},
  l_2 \mapsto \sig[\updated]{1}{\ensuremath{\Varid{sw}}}\sep\Delta}\\
  &\fullstep{\kappa_1 \mapsto 2}
  &\state{l_2;l_1 \mapsto \sig[\updated]{2}{\delaySig{\ensuremath{\wait\;\kappa_1}}},
  l_2 \mapsto \sig[\updated]{2}{\ensuremath{\Varid{sw}}}\sep\Delta}\\
  &\fullstep{\kappa_2 \mapsto 5}
  &\state{l_2;l_1 \mapsto \sig[\unchanged]{2}{\delaySig{\ensuremath{\wait\;\kappa_1}}},
  l_2 \mapsto \sig[\updated]{5}{\delaySig{\ensuremath{\wait\;\kappa_2}}}\sep\Delta}\\
  &\fullstep{\kappa_1 \mapsto 3}
  &\state{l_2;l_1 \mapsto \sig[\updated]{3}{\delaySig{\ensuremath{\wait\;\kappa_1}}},
  l_2 \mapsto \sig[\unchanged]{5}{\delaySig{\ensuremath{\wait\;\kappa_2}}}\sep\Delta}
\end{eqnarray*}
The initialisation step allocates two signals: the initial signal \ensuremath{\Varid{xs}}
at \ensuremath{\Varid{l}_{\mathrm{1}}} and the \ensuremath{\Varid{switch}} result at \ensuremath{\Varid{l}_{\mathrm{2}}} (the result of evaluating
\ensuremath{\Varid{t}}). The signal at \ensuremath{\Varid{l}_{\mathrm{1}}} updates whenever $\kappa_1$ receives an
input, since $\cl[\eta]{\ensuremath{\tail\;\Varid{l}_{\mathrm{1}}}} = \set{\kappa_1}$, whereas the signal at
\ensuremath{\Varid{l}_{\mathrm{2}}} updates whenever $\kappa_1$ \emph{or} $\kappa_2$ receives an
input, since $\cl[\eta]{\ensuremath{\tail\;\Varid{l}_{\mathrm{2}}}} = \cl[\eta]{\ensuremath{\Varid{sw}}} = \set{\kappa_1,\kappa_2}$.

For the first two events $\kappa_1 \mapsto 1$ and $\kappa_1 \mapsto
2$, both signals tick: the \ensuremath{\select} in \ensuremath{\Varid{sw}} fires on its left branch,
so \ensuremath{\Varid{switch}} recurses and \ensuremath{\Varid{l}_{\mathrm{2}}} keeps the tail \ensuremath{\Varid{sw}} while its head
follows \ensuremath{\Varid{l}_{\mathrm{1}}}. The event $\kappa_2 \mapsto 5$ then \emph{fires the
switch}: \ensuremath{\Varid{l}_{\mathrm{2}}} ticks (as $\kappa_2 \in \cl[\eta]{\ensuremath{\Varid{sw}}}$) but \ensuremath{\Varid{l}_{\mathrm{1}}}
does not (as $\kappa_2 \notin \set{\kappa_1}$), and so the \ensuremath{\select}
fires on its right branch, where \ensuremath{\Varid{cont'}} returns the delivered signal
\ensuremath{\Varid{d'}}. From this point on \ensuremath{\Varid{l}_{\mathrm{2}}} \emph{is} the signal \ensuremath{\Varid{ys}}: its head
becomes $5$ and its tail becomes $\delaySig{\ensuremath{\wait\;\kappa_2}}$, whose
clock is $\set{\kappa_2}$. Consequently, the final event $\kappa_1
\mapsto 3$ updates only \ensuremath{\Varid{l}_{\mathrm{1}}}; the switched signal at \ensuremath{\Varid{l}_{\mathrm{2}}} no longer
reacts to $\kappa_1$ and retains the value $5$.

That is, the events $\tau = \kappa_1 \mapsto 1, \kappa_1 \mapsto 2,
\kappa_2 \mapsto 5, \kappa_1 \mapsto 3$ 
cause the following reactive evaluation:
\[
\state{t;\Delta}\reacts{\tau}
\state{\ensuremath{\mathrm{5}\mathbin{::}\delaySig{\wait\;\kappa_2}};\Delta}.
\]
\section{\ensuremath{\Varid{switchAt}} in Async RaTT}
\label{sec:switchAt}

We implement the \ensuremath{\Varid{switchAt}} combinator introduced in
section~\ref{sec:expressiveness} using Async
RaTT~\citep{bahr23AsynchronousModalFRP}. To this end, we implement two
helper functions. The first one \ensuremath{\Varid{del}} takes a signal \ensuremath{\Varid{s}\mathbin{:}\Sig\;\Conid{C}} and a
delayed unit \ensuremath{\Varid{d}\mathbin{:}\DelayE\;\mathrm{1}} and tries to delay \ensuremath{\Varid{s}} until after \ensuremath{\Varid{d}} ticks.
If this fails, i.e.\ if the tail of \ensuremath{\Varid{s}} arrives before \ensuremath{\Varid{d}}, then the
returned \ensuremath{\Conid{Maybe}} value contains \ensuremath{\con{just}\;\Varid{d}} (but moved into the future
where the tail of \ensuremath{\Varid{s}} ticked). Otherwise, the \ensuremath{\Conid{Maybe}} value is
\ensuremath{\con{nothing}}:
\begin{hscode}\SaveRestoreHook
\column{B}{@{}>{\hspre}l<{\hspost}@{}}%
\column{28}{@{}>{\hspre}l<{\hspost}@{}}%
\column{31}{@{}>{\hspre}l<{\hspost}@{}}%
\column{38}{@{}>{\hspre}l<{\hspost}@{}}%
\column{43}{@{}>{\hspre}l<{\hspost}@{}}%
\column{47}{@{}>{\hspre}l<{\hspost}@{}}%
\column{E}{@{}>{\hspre}l<{\hspost}@{}}%
\>[B]{}\Varid{del}\mathbin{:}\Varid{stable}\;\Conid{C}\Rightarrow \Sig\;\Conid{C}\to \DelayE\;\mathrm{1}\to \DelayE\;(\Conid{Maybe}\;(\DelayE\;\mathrm{1})\times\Sig\;\Conid{C}){}\<[E]%
\\
\>[B]{}\Varid{del}\;(\Varid{x}\mathbin{::}\Varid{xs})\;\Varid{d}\mathrel{=}\delay\;({}\<[28]%
\>[28]{}\mathbf{case}\;\select\;\Varid{xs}\;\Varid{d}\;\mathbf{of}{}\<[E]%
\\
\>[28]{}\hsindent{3}{}\<[31]%
\>[31]{}\con{left}\;{}\<[38]%
\>[38]{}\Varid{xs'}\;{}\<[43]%
\>[43]{}\Varid{d'}{}\<[47]%
\>[47]{}.(\con{just}\;\Varid{d'},\Varid{xs'}){}\<[E]%
\\
\>[28]{}\hsindent{3}{}\<[31]%
\>[31]{}\con{right}\;{}\<[38]%
\>[38]{}\Varid{xs'}\;{}\<[43]%
\>[43]{}\anonymous {}\<[47]%
\>[47]{}.(\con{nothing},\Varid{x}\mathbin{::}\Varid{xs'}){}\<[E]%
\\
\>[28]{}\hsindent{3}{}\<[31]%
\>[31]{}\con{both}\;{}\<[38]%
\>[38]{}\Varid{xs'}\;{}\<[43]%
\>[43]{}\anonymous {}\<[47]%
\>[47]{}.(\con{nothing},\Varid{xs'})){}\<[E]%
\ColumnHook
\end{hscode}\resethooks
We follow the convention of \citet{bahr23AsynchronousModalFRP} and
write \ensuremath{\Varid{stable}\;\Conid{C}} to indicate that the above function requires that the
type \ensuremath{\Conid{C}} is stable. Note also that, unlike \Rizzo's \ensuremath{\select}, the
\ensuremath{\select} of Async RaTT returns in its \ensuremath{\con{left}} (resp.\ \ensuremath{\con{right}}) branch
not only the value produced by the first (resp.\ second) argument but
also the still-delayed computation of the other argument. This is why,
for example, the \ensuremath{\con{left}} branch of \ensuremath{\Varid{del}} binds both the new signal
\ensuremath{\Varid{xs'}} and the still-pending delay \ensuremath{\Varid{d'}} of \ensuremath{\Varid{d}}.

The function \ensuremath{\Varid{switchN}} implements the main logic of \ensuremath{\Varid{switchAt}}. It
takes a signal \ensuremath{\Varid{xs}\mathbin{:}\Sig\;\Conid{C}} and a delayed value \ensuremath{\Varid{e}\mathbin{:}\DelayE\;(\Conid{Maybe}\;(\DelayE\;\mathrm{1})\times\Sig\;\Conid{C})} (which is the result of \ensuremath{\Varid{del}\;\Varid{ys}\;\Varid{d}}) and produces a signal that
first behaves like \ensuremath{\Varid{xs}} but switches to \ensuremath{\Varid{ys}} as soon as \ensuremath{\Varid{d}} arrives.
To this end, it repeatedly calls \ensuremath{\Varid{del}} if the delayed value \ensuremath{\Varid{e}}
produces a \ensuremath{\con{just}} value, which indicates that \ensuremath{\Varid{d}} has not arrived yet:
\begin{hscode}\SaveRestoreHook
\column{B}{@{}>{\hspre}l<{\hspost}@{}}%
\column{37}{@{}>{\hspre}l<{\hspost}@{}}%
\column{39}{@{}>{\hspre}l<{\hspost}@{}}%
\column{46}{@{}>{\hspre}l<{\hspost}@{}}%
\column{65}{@{}>{\hspre}l<{\hspost}@{}}%
\column{E}{@{}>{\hspre}l<{\hspost}@{}}%
\>[B]{}\Varid{switchN}\mathbin{:}\Varid{stable}\;\Conid{C}\Rightarrow \Sig\;\Conid{C}\to \DelayE\;(\Conid{Maybe}\;(\DelayE\;\mathrm{1})\times\Sig\;\Conid{C})\to \Sig\;\Conid{C}{}\<[E]%
\\
\>[B]{}\Varid{switchN}\;(\Varid{x}\mathbin{::}\Varid{xs})\;\Varid{e}\mathrel{=}\Varid{x}\mathbin{::}\delay\;({}\<[37]%
\>[37]{}\mathbf{case}\;\select\;\Varid{xs}\;\Varid{e}\;\mathbf{of}{}\<[E]%
\\
\>[37]{}\hsindent{2}{}\<[39]%
\>[39]{}\con{left}\;{}\<[46]%
\>[46]{}\Varid{xs'}\;\Varid{e'}{}\<[65]%
\>[65]{}.\Varid{switchN}\;\Varid{xs'}\;\Varid{e'}{}\<[E]%
\\
\>[37]{}\hsindent{2}{}\<[39]%
\>[39]{}\con{right}\;{}\<[46]%
\>[46]{}\Varid{xs'}\;(\con{just}\;\Varid{d'},\Varid{ys'}){}\<[65]%
\>[65]{}.\Varid{switchN}\;(\Varid{x}\mathbin{::}\Varid{xs'})\;(\Varid{del}\;\Varid{ys'}\;\Varid{d'}){}\<[E]%
\\
\>[37]{}\hsindent{2}{}\<[39]%
\>[39]{}\con{right}\;{}\<[46]%
\>[46]{}\Varid{xs'}\;(\con{nothing},\Varid{ys'}){}\<[65]%
\>[65]{}.\Varid{ys'}{}\<[E]%
\\
\>[37]{}\hsindent{2}{}\<[39]%
\>[39]{}\con{both}\;{}\<[46]%
\>[46]{}\Varid{xs'}\;(\con{just}\;\Varid{d'},\Varid{ys'}){}\<[65]%
\>[65]{}.\Varid{switchN}\;\Varid{xs'}\;(\Varid{del}\;\Varid{ys'}\;\Varid{d'}){}\<[E]%
\\
\>[37]{}\hsindent{2}{}\<[39]%
\>[39]{}\con{both}\;{}\<[46]%
\>[46]{}\Varid{xs'}\;(\con{nothing},\Varid{ys'}){}\<[65]%
\>[65]{}.\Varid{ys'}){}\<[E]%
\\[\blanklineskip]%
\>[B]{}\Varid{switchAt}\mathbin{:}\Varid{stable}\;\Conid{C}\Rightarrow \Sig\;\Conid{C}\to \Sig\;\Conid{C}\to \DelayE\;\mathrm{1}\to \Sig\;\Conid{C}{}\<[E]%
\\
\>[B]{}\Varid{switchAt}\;\Varid{xs}\;\Varid{ys}\;\Varid{d}\mathrel{=}\Varid{switchN}\;\Varid{xs}\;(\Varid{del}\;\Varid{ys}\;\Varid{d}){}\<[E]%
\ColumnHook
\end{hscode}\resethooks
Note that this implementation has a less general type than the
corresponding \Rizzo implementation in
section~\ref{sec:expressiveness} as it requires that \ensuremath{\Conid{C}} be stable.

\end{document}